\documentclass[final,5p,times,twocolumn]{elsarticle}

\usepackage{amsmath,amssymb,amsfonts,amsthm}
\usepackage{bm}
\usepackage{tabularx,booktabs,colortbl}
\usepackage{graphicx}
\usepackage{hyperref}
\usepackage{breqn}
\usepackage{physics}
\usepackage{subcaption}
\usepackage{caption}
\usepackage{enumitem}
\usepackage[toc, title]{appendix}
\usepackage[acronym]{glossaries}
\usepackage{multirow}
\usepackage{tikz}
\usepackage{placeins}
\usepackage{float}

\usetikzlibrary{positioning,calc,fit}
\biboptions{sort&compress}

\newacronym{bdi}{BDI}{Belief-Desire-Intention}
\newacronym{npe}{LPE}{Latent Perceptual Estimate}
\newacronym{dcm}{DCM}{Dynamic Causal Modelling}
\newacronym{sa}{SA}{Stationary Audio}
\newacronym{da}{DA}{Dynamic Audio}
\newacronym{sv}{SV}{Stationary Visual}
\newacronym{dv}{DV}{Dynamic Visual}
\newacronym{io}{I/O}{Input-Output}
\newacronym{is}{I/S}{Input-to-State}
\newacronym{mpc}{MPC}{Model Predictive Control}
\newacronym{hri}{HRI}{human-robot interaction}
\newacronym{iss}{ISS}{input-to-state stability}

\DeclareRobustCommand{\bigO}{%
  \text{\usefont{OMS}{cmsy}{m}{n}O}%
}

\newtheorem{lemma}{Lemma}
\newtheorem{proposition}{Proposition}
\newtheorem{definition}{Definition}
\newtheorem{assumption}{Assumption}
\newtheorem{corollary}{Corollary}
\newtheorem{remark}{Remark}

\newcommand{\zasl}{z_{\ell}^{\textup{\textsc{ats}}}}
\newcommand{\bu}{\boldsymbol{u}}
\newcommand{\fasl}{F_{\ell}^{\textup{\textsc{ats}}}}
\newcommand{\fpil}{F_{\ell}^{\textup{\textsc{pi}}}}
\newcommand{\gasl}{G_{\ell}^{\textup{\textsc{ats}}}}
\newcommand{\gpil}{G_{\ell}^{\textup{\textsc{pi}}}}
\newcommand{\alphapil}{\alpha_{\ell}^{\textup{\textsc{pi}}}}
\newcommand{\betaasl}{\beta_{\ell}^{\textup{\textsc{ats}}}}
\newcommand{\gammaaslm}{\gamma_{\ell m}^{\textup{\textsc{ats}}}}
\newcommand{\faslmax}{\overline{F_{\ell}^\textup{\textsc{ats}}}}
\newcommand{\gaslmin}{\underline{G_{\ell}^\textup{\textsc{ats}}}}
\newcommand{\zaslmax}{\overline{z_{\ell}^\textup{\textsc{ats}}}}
\newcommand{\gasmtilde}{\tilde{G}_{m}^{\textup{\textsc{ats}}}}
\newcommand{\uset}{\mathcal{U}}
\newcommand{\ulset}{\mathcal{U}_{\ell}}
\newcommand{\umset}{\mathcal{U}_{m}}
\newcommand{\xnpel}{x_{\ell}^{\textup{\textsc{lpe}}}}
\newcommand{\bxnpe}{\bx^{\textup{\textsc{lpe}}}}
\newcommand{\bzi}{\bz^{\textup{\textsc{i}}}}
\newcommand{\bzimini}{\bz_{-i}^{\textup{\textsc{i}}}}
\newcommand{\xnpelset}{\mathcal{X}_{\ell}^{\textup{\textsc{lpe}}}}
\newcommand{\xgiset}{\mathcal{X}_{i}^{\textup{\textsc{g}}}}
\newcommand{\xbset}{\mathcal{X}^{\textup{\textsc{b}}}}
\newcommand{\xnpeset}{\mathcal{X}^{\textup{\textsc{lpe}}}}
\newcommand{\ziiset}{\mathcal{Z}_{i}^{\textup{\textsc{i}}}}
\newcommand{\zaiset}{\mathcal{Z}_{i}^{\textup{\textsc{a}}}}
\newcommand{\zimset}{\mathcal{Z}_{m}^{\textup{\textsc{i}}}}
\newcommand{\ziminiset}{\mathcal{Z}_{-i}^{\textup{\textsc{i}}}}
\newcommand{\ziset}{\mathcal{Z}^{\textup{\textsc{i}}}}
\newcommand{\zaset}{\mathcal{Z}^{\textup{\textsc{a}}}}
\newcommand{\bxb}{\bx^{\textup{\textsc{b}}}}
\newcommand{\xb}{x^{\textup{\textsc{b}}}}
\newcommand{\xg}{x^{\textup{\textsc{g}}}}
\newcommand{\bxg}{\bx^{\textup{\textsc{g}}}}
\newcommand{\thetapil}{\theta_{\ell}^{\textup{\textsc{pi}}}}
\newcommand{\chipil}{\chi_{\ell}^{\textup{\textsc{pi}}}}
\newcommand{\phipil}{\phi_{\ell}^{\textup{\textsc{pi}}}}
\newcommand{\gisi}{G_{i}^{\textup{\textsc{if}}}}
\newcommand{\fisi}{F_{i}^{\textup{\textsc{if}}}}

\newcommand{\gisimax}{\overline{{G}_{i}^\textup{\textsc{if}}}}
\newcommand{\fisimax}{\overline{{F}_{i}^\textup{\textsc{if}}}}
\newcommand{\phiisi}{\phi_{i}^{\textup{\textsc{if}}}}
\newcommand{\phiasi}{\phi_{i}^{\textup{\textsc{as}}}}
\newcommand{\zii}{z_{i}^{\textup{\textsc{i}}}}
\newcommand{\um}{u_{m}}
\newcommand{\ul}{u_{\ell}}
\newcommand{\ziimax}{\overline{z_{i}^\textup{\textsc{i}}}}
\newcommand{\zai}{z_{i}^{\textup{\textsc{a}}}}
\newcommand{\gasi}{G_{i}^{\textup{\textsc{as}}}}
\newcommand{\fasi}{F_{i}^{\textup{\textsc{as}}}}
\newcommand{\hasi}{H_{i}^{\textup{\textsc{as}}}}
\newcommand{\sasi}{S_{i}^{\textup{\textsc{as}}}}
\newcommand{\gasimax}{\overline{G_{i}^\textup{\textsc{as}}}}
\newcommand{\hasimax}{\overline{H_{i}^\textup{\textsc{as}}}}
\newcommand{\fiic}{F_{ii}^{\textup{\textsc{c}}}}
\newcommand{\fc}{F^{\textup{\textsc{c}}}}
\newcommand{\gimc}{G_{im}^{\textup{\textsc{c}}}}
\newcommand{\card}[1]{\left\lvert #1 \right\rvert}
\renewcommand\norm[1]{\left\lVert#1\right\rVert}
\renewcommand\abs[1]{\left\lvert #1 \right\rvert}
\newcommand{\enpe}{e^{\textup{\textsc{lpe}}}}
\newcommand{\latsl}{L^{\textup{\textsc{ats}}}_{\ell}}
\newcommand{\bx}{\bm{x}}

\newcommand{\xigoals}{x_i^\textup{\textsc{g}}}
\newcommand{\xbeliefs}{\bm{x}^\textup{\textsc{b}}}
\newcommand{\uraw}{u^{\textup{\textsc{raw}}}}
\newcommand{\xnpebar}{\overline{x^{\textup{\textsc{lpe}}}}}
\newcommand{\xnpebartraj}{\sup_{i\ge 0}\norm{\bxnpe(i)}}
\newcommand{\betaiss}{\beta^{\textup{\textsc{iss}}}}
\newcommand{\gammaiss}{\gamma^{\textup{\textsc{iss}}}}
\newcommand{\bz}{\boldsymbol{z}}
\newcommand{\srel}{S^{\mathrm{rel}}}
\newcommand{\ypatient}{y^{\mathrm{patient}}}
\newcommand{\yhatpatient}{\hat{y}^{\mathrm{patient}}}
\newcommand{\dperform}{d^{\mathrm{perform}}}
\newcommand{\dhatperform}{\hat{d}^{\mathrm{perform}}}
\newcommand{\dsuggest}{d^{\mathrm{suggest}}}
\newcommand{\dtarget}{d^{\mathrm{target}}}

\newcommand{\urobot}{\boldsymbol{u}^{\mathrm{robot}}}
\newcommand{\xhatb}[1]{\hat{x}^{\textup{\textsc{b}}}_{\mathrm{#1}}}
\newcommand{\xbarbcomfort}{\bar{x}^{\textup{\textsc{b}}}_{\mathrm{comfort}}}
\newcommand{\zperception}{Z^{\textup{\textsc{p}}}}
\newcommand{\zhatperception}{\tilde{Z}^{\textup{\textsc{p}}}}
\newcommand{\zcognition}{Z^{\textup{\textsc{c}}}}
\newcommand{\zhatcognition}{\tilde{Z}^{\textup{\textsc{c}}}}
\newcommand{\zdecision}{Z^{\textup{\textsc{d}}}}
\newcommand{\zhatdecision}{\tilde{Z}^{\textup{\textsc{d}}}}
\newcommand{\bza}{\boldsymbol{z}^{\textup{\textsc{a}}}}
\newcommand{\bxnpehat}{\hat{\bx}^{\textup{\textsc{lpe}}}}
\newcommand{\bxhat}{\hat{\bx}}
\newcommand{\bzahat}{\hat{\boldsymbol{z}}^{\textup{\textsc{a}}}}

\journal{Journal of Mathematical Psychology}

\begin{document}
\glsdisablehyper

\begin{frontmatter}



\title{A modular state-space model of human perception, cognition, and decision dynamics}


\author[1,2]{Sven Schoonebeek} 
\author[2]{Carlo Cenedese}
\author[1]{Anahita Jamshidnejad}

\affiliation[1]{organization={Department of Intelligent Systems, Faculty of Electrical Engineering, Mathematics, and Computer Science, TU Delft},
            addressline={Van Mourik Broekmanweg 6}, 
            city={Delft},
            postcode={2628 XE}, 
            state={Zuid-Holland},
            country={The Netherlands}}

\affiliation[2]{organization={Delft Center for Systems and Control, Faculty of Mechanical Engineering, TU Delft},
            addressline={Mekelweg 2}, 
            city={Delft},
            postcode={2628 CE}, 
            state={Zuid-Holland},
            country={The Netherlands}}

\begin{abstract}
Human-centered adaptive systems require behavioral models that are both psychologically interpretable and mathematically analyzable. Many existing predictors either operate as black-box input–output mappings or provide limited access to latent internal dynamics. This paper addresses this gap by modeling behavior as a perception–cognition–decision pipeline. We propose a modular state-space model in which attentional selection, predictive inference, cognitive-state evolution, intention formation, and action selection are represented by coupled mathematical mappings. The model links sensory inputs to observable behavior through latent internal states while retaining interpretable connections to neuro-cognitive mechanisms.

We establish sufficient conditions for boundedness, Lipschitz regularity, forward invariance, contraction of perceptual inference under constant input, and input-to-state stability of the cognitive state dynamics. Numerical sensitivity analyses show that the model yields interpretable changes in perceptual tracking, cognitive amplification, intention expression, and action decisiveness. We further demonstrate a closed-loop rehabilitation case study in which a receding-horizon controller uses the model to adapt movement difficulty from partial feedback. In this proof-of-concept setting, the model-based controller sustains simulated task participation and achieves lower realized cumulative cost than target-following and random baselines.

Overall, the framework provides a white-box dynamical structure for estimation, validation, and model-based control in human-centered settings.
\end{abstract}

\begin{keyword}
Human cognition \sep modular state-space modeling \sep latent internal dynamics \sep perception-cognition-decision pipeline 
\sep interpretable behavioral modeling and control \sep stability analysis
\end{keyword}

\end{frontmatter}

\section{Introduction}
\noindent
Modern intelligent systems in healthcare, mobility, and education increasingly rely on models of human behavior to enable context-aware and personalized interventions. This creates a need for formal, cognitively grounded mathematical models that capture not only observable actions, but also the latent processes through which people perceive, reason, adapt, and decide over time.

Many existing approaches remain dominated by data-driven input–output predictors that infer preferences or actions directly from observed behavior \cite{b2}. Although such models can be effective in well-sampled domains, their reliance on large representative datasets limits generalization to changing environments and individual users \cite{b2,b12}. Moreover, black-box predictors often provide limited access to intermediate cognitive variables, making it difficult to explain, constrain, or control the latent mechanisms underlying behavior. In parallel, cognitive-science models have often emphasized qualitative explanation over mathematical analysis, partly because latent mental states are difficult to measure and formalize \cite{b12,b13}.

These limitations are especially important in adaptive human-centered systems that provide support, guidance, or intervention over time. In such settings, decisions should depend not only on observed behavior, but also on changing latent factors such as capacity, confidence, fatigue, motivation, and engagement. In post-stroke rehabilitation, for example, robotic or digital support should adapt task difficulty to patient state: assistance should increase when demand becomes excessive, but decrease when active engagement should be preserved \cite{b84}. Pure input–output predictors can struggle in such settings because clinically meaningful personalization depends on unobserved internal factors and their evolution over time. Recent machine-learning-based behavior analysis methods report similar limitations, including risks of bias that hinder reliable deployment at the individual level \cite{b83}.

Addressing these challenges calls for mathematical models that combine predictive utility with interpretable latent-state structure. Frameworks such as the \gls{bdi} paradigm \cite{b10,b11} describe how beliefs, emotions, goals, and intentions interact to give rise to decisions. Building on this perspective, we propose a modular state-space framework that decomposes human decision-making into perception, cognition, and decision-making processes, and represents latent cognitive variables as internal state variables. This formulation exposes intermediate states that can be estimated, constrained, and used in feedback control, while also enabling systems-theoretic analysis through sufficient conditions for boundedness, contraction, and \gls{iss} under specified parameter regimes \cite{b7,b9}. Table~\ref{tab:comparison_existing_models} summarizes the positioning of the proposed framework relative to representative model classes. Our contribution lies in the integrated formulation and analysis of these established computational mechanisms, rather than in any individual model component.

\paragraph{\textbf{Core contributions}}
This paper develops a principle-based dynamical model that links sensory inputs to observable actions through structured latent-state mappings. The main contributions are:
\begin{enumerate}[leftmargin=*]
\item We propose a modular state-space model that represents human behavior as a coupled perception–cognition–decision pipeline with explicit inputs, outputs, and latent internal states.
\item We establish sufficient conditions for boundedness, Lipschitz regularity, forward invariance, contraction of the perceptual inference dynamics under constant input, and \gls{iss} of the cognitive state dynamics.
\item We provide sensitivity analyses showing how model parameters regulate perceptual tracking, cognitive amplification and damping, intention expression, and action decisiveness.
\item We demonstrate how the full architecture can be embedded in a receding-horizon controller for an illustrative rehabilitation \gls{hri} scenario.
\end{enumerate}

\begin{table*}[!t]
\centering
\caption{Qualitative positioning of the proposed framework relative to representative model classes. Entries summarize common modeling and analytical emphases rather than exhaustive properties or all possible extensions. Personalization mechanism denotes the common means of fitting or updating a model for a particular agent or participant.}
\label{tab:comparison_existing_models}
\resizebox{\textwidth}{!}{%
\begin{tabular}{lllll}
\toprule
Model class
& Latent states
& Neuro-cognitive grounding
& Personalization mechanism
& Analytical emphasis \\
\midrule

Black-box predictors \cite{b8,b83}
& Implicit or absent
& Usually limited
& Data-driven fitting or adaptation
& Prediction and generalization \\

BDI architectures \cite{b105}
& Explicit symbolic
& Cognitive and agent-based
& Belief, goal, or rule updating
& Symbolic practical reasoning \\

Mathematical models of mind \cite{b10,b11,b102}
& Explicit latent states
& Cognitive or theory-of-mind based
& State and parameter estimation
& State-space modeling and identification \\

Predictive coding models \cite{b47,b49,b104}
& Latent predictions and errors
& Computational and neuro-cognitive
& Latent-state or precision updating
& Prediction-error and variational dynamics \\

Dynamic causal modelling \cite{b75,b76}
& Explicit neural states
& Biophysically and neurally grounded
& Subject-specific Bayesian inversion
& Model inversion and effective connectivity \\

Diffusion decision models \cite{b103}
& Explicit decision variable
& Cognitive decision-process grounding
& Individual or hierarchical fitting
& Accumulation and first-passage analysis \\

Proposed model
& Explicit latent state-space
& Neuro-cognitively inspired
& State and parameter estimation
& Boundedness, invariance, ISS, and control \\
\bottomrule
\end{tabular}%
}
\end{table*}

The paper is organized as follows. Section~\ref{sec:modulararchitecture} introduces the proposed modular architecture. Section~\ref{sec:mathreprhumandecisionmaking} develops the mathematical formulation of the perception, cognition, and decision-making modules. Section~\ref{sec:stabilityanalysis} establishes analytical properties of the model. Section~\ref{sec:simandanalysis} presents parameter sensitivity analyses and the closed-loop rehabilitation showcase.

\section{Modular architecture}
\label{sec:modulararchitecture}
\noindent
Modularity is a central organizing principle in both neuroscience and cognitive architecture. Neural systems process information through specialized and interacting regions, supporting parallel processing, adaptability, and efficient wiring \cite{b70}. Cognitive architectures similarly describe cognition as the coordination of specialized components through structured integration and control \cite{b68}. Motivated by these perspectives, we decompose human behavior into three interacting modules: perception, cognition, and decision-making.

In the proposed architecture, the perception module processes sensory input and forms perceptual estimates, the cognition module updates latent internal states such as beliefs, goals, emotions, and biases, and the decision-making module maps these states to behavioral responses. Each module has explicit inputs, outputs, and internal states, and the modules interact with the environment in a closed loop from sensory observations to action. Figure~\ref{fig:modulararchitecture} summarizes this perception–cognition–decision pipeline.

\begin{figure*}[!t]
    \centering
    \resizebox{.7\linewidth}{!}{%
    \begin{tikzpicture}[
        font=\small,
        module/.style={
            draw,
            rounded corners,
            align=center,
            minimum width=5.60cm,
            minimum height=2.15cm,
            inner sep=6pt
        },
        layer/.style={
            draw,
            rounded corners,
            align=center,
            minimum width=2.25cm,
            minimum height=0.62cm,
            inner sep=3pt
        },
        env/.style={
            circle,
            draw,
            align=center,
            minimum size=2.75cm,
            inner sep=4pt
        },
        arrow/.style={->, thick}
    ]

    \node[module] (perception) {};
    \node[module, right=1.75cm of perception] (cognition)
    {Cognition\\module};
    \node[module, right=1.75cm of cognition] (decision) {};

    \node[env, above=2.65cm of cognition] (environment)
    {Environment};

    \node[above=0.25cm of perception.center] (perceptiontitle)
    {Perception module};

    \node[layer] (asl) at ([xshift=-1.35cm,yshift=-0.38cm]perception.center)
    {Attentional\\selection};

    \node[layer] (pi) at ([xshift=1.35cm,yshift=-0.38cm]perception.center)
    {Predictive\\inference};

    \draw[arrow] (asl) -- (pi);

    \node[above=0.25cm of decision.center] (decisiontitle)
    {Decision-making module};

    \node[layer] (iflayer) at ([xshift=-1.35cm,yshift=-0.38cm]decision.center)
    {Intention\\formation};

    \node[layer] (aslayer) at ([xshift=1.35cm,yshift=-0.38cm]decision.center)
    {Action\\selection};

    \draw[arrow] (iflayer) -- (aslayer);

    \draw[arrow] (perception.east) -- (cognition.west);

    \draw[arrow] (cognition.east) -- (decision.west);

    \draw[arrow]
    (environment.west) to[out=180,in=90] (perception.north);

    \draw[arrow]
    (decision.north) to[out=90,in=0] (environment.east);

    \end{tikzpicture}%
    }
    \caption{High-level modular architecture of the proposed closed-loop 
    human behavior model as a perception-cognition-decision pipeline. 
    The environment provides sensory observations to the perception module, whose outputs are propagated through the cognition module to the decision-making module. The resulting behavioral output acts on the environment, closing the cycle of perception, cognition, and decision-making.}
    \label{fig:modulararchitecture}
\end{figure*}

We next summarize the role of each module and the neuro-cognitive motivation for its mathematical structure.

\subsection{Perception module}
\label{sec:perceptionmoduleneurotheory}
\noindent
Perception is modeled as a two-stage process consisting of attentional selection and predictive inference. The first stage determines which sensory cues are most relevant at the current time, whereas the second stage integrates these selected cues with prior expectations to form a perceptual estimate. This decomposition is motivated by perceptual hierarchies in which early sensory encoding and feature extraction precede context-dependent interpretation through recurrent feedforward and feedback interactions \cite{b47,b65,b72}. It is also consistent with two-stage models of attention, which distinguish initial feature registration from later focused integration of selected features into coherent object representations \cite{b73,b74}.

Sensory input is organized into channels, where each channel represents one cue type, such as visual intensity, motion, auditory salience, or task-relevant context. Attentional selection assigns a non-negative weight to each channel, indicating how strongly that cue influences downstream inference. We model these weights using divisive normalization, a standard computational form in sensory competition and attention \cite{b61,b62,b63}. This captures saturation and contextual competition without claiming a direct neural implementation \cite{b20,b61}.

Predictive inference maps the selected sensory information to a \gls{npe}, interpreted as a perceptual estimate of the causes of sensory inputs. This estimate evolves recursively as new evidence arrives \cite{b81}. Consistent with predictive coding, it is updated by prediction errors, i.e., mismatches between selected sensory evidence and the previous perceptual estimate \cite{b49}. The update size depends on sensory precision, prior precision, and attention gain, representing confidence in current evidence, confidence in the previous estimate, and the emphasis placed on the corresponding cue, respectively \cite{b47,b48,b49}. Together, these factors determine how strongly prediction errors shift the \gls{npe} \cite{b46,b49}.

\subsection{Cognition module}
\noindent
The cognition module describes how perceptual estimates influence interacting latent cognitive variables, such as beliefs, goals, emotions, and biases. Its mathematical structure is inspired by \gls{dcm}, which represents distributed neural dynamics as nonlinear input–state–output processes with directed effective couplings among state variables \cite{b75,b76}.

Although \gls{dcm} was originally developed for neural populations and biophysically grounded neural-mass models, we use its state-space coupling structure at a higher level of abstraction. Specifically, each state variable represents a latent cognitive quantity rather than a neural population. This abstraction is motivated by network-level accounts in which beliefs, goals, and emotions emerge from distributed interacting systems whose directed influences can be summarized at the network level \cite{b75,b78,b79,b80}. The resulting \gls{dcm}-inspired terms provide a template for modeling directed interactions among latent cognitive states.

\subsection{Decision-making module}
\label{sec:decisionmakingmoduleneurotheory}
\noindent
Decision-making is modeled as two sequential stages: intention formation and action selection \cite{b16,b18,b65}. Intention formation integrates goal- and belief-related cognitive states into a prospective behavioral tendency. We define intention as a value-weighted expression of goal salience gated by beliefs about feasibility, desirability, and contextual constraints. This is broadly consistent with accounts in which intentions depend on expected outcomes and perceived behavioral control \cite{b29,b38,b54,b55}.

Action selection then maps intentions to behavioral outputs through competition among candidate actions \cite{b18,b65}. This abstraction is motivated by cortico-basal-ganglia accounts in which selected actions are facilitated while competitors are suppressed, and conflict can increase the threshold for commitment \cite{b57,b58,b59,b60,b98,b99}. These mechanisms motivate the excitatory, inhibitory, and threshold components of the action-selection mapping introduced below.

We next formulate these modules as an interconnected discrete-time state-space model. The following section defines the admissible sets, module mappings, and internal state updates that propagate sensory observations to perceptual estimates, latent cognitive states, intentions, and actions.

\section{Mathematical modeling of human decision-making}
\label{sec:mathreprhumandecisionmaking}
\noindent
This section introduces the notation and then formulates the perception, cognition, and decision-making modules as an interconnected discrete-time state-space model.

\paragraph{\textbf{Notation}}
Superscripts indicate the associated process or concept, and boldface denotes vector-valued quantities. Latin italic letters denote dynamic signals: $u$ for inputs, $z$ for auxiliary variables, $x$ for state variables, and $y$ for outputs. State variables evolve recursively from their past values, whereas auxiliary variables are instantaneous mappings.

Sets are denoted by calligraphic letters, e.g., $\mathcal X$ for an admissible state set. For $\bx\in\mathbb R^n$, $x_i$ denotes the $i$-th component; for $A\in\mathbb R^{n\times n}$, $A_{im}$ denotes the $(i,m)$-th entry. We write $I_n$ for the identity matrix, omitting dimensions when clear from the context. For an index set $\mathcal S$, $[x_j]_{j\in\mathcal S}$ denotes the vector collecting the corresponding components in a fixed order. The symbol $\odot$ denotes the Hadamard product. Unless stated otherwise, vector norms are Euclidean norms and matrix norms are induced spectral norms.

For reference, Table~\ref{tab:variables} summarizes the main variables, associated equations, and admissible domains used in the model formulation.

\subsection{Perception module}
\noindent
We formalize perception as two coupled stages: attentional selection, which maps sensory inputs to bounded cue weights, and predictive inference, which updates latent perceptual estimates from weighted prediction errors.

\begin{figure*}[!t]
    \centering
    \resizebox{\linewidth}{!}{%
    \begin{tikzpicture}[
    font=\small,
    node distance=1.15cm and 1.25cm,
    block/.style={
    draw,
    rounded corners,
    align=center,
    minimum width=3.05cm,
    minimum height=0.85cm,
    inner sep=4pt
    },
    smallblock/.style={
    draw,
    rounded corners,
    align=center,
    minimum width=2.60cm,
    minimum height=0.75cm,
    inner sep=4pt
    },
    arrow/.style={->, thick},
    plain/.style={thick},
    dashedbox/.style={
    draw,
    dashed,
    rounded corners,
    inner sep=7pt
    }
    ]
    
    \node (input)
    {$\bu(k)$};
    
    \node[block, right=1.2cm of input] (asl)
    {Attentional\\selection \eqref{eqn:perceptualaccessgeneral}};
    
    \node[block, right=2.5cm of asl] (pe)
    {Prediction\\error \eqref{eqn:predictiveinferenceg}};
    
    \node[block, right=4cm of pe] (update)
    {LPE update \eqref{eq:NPE_recursion}};
    
    \node[smallblock, above=0.95cm of update] (alpha)
    {Update weight \eqref{eqn:predictiveinferencealpha}};
    
    \node[above=0.95cm of alpha] (precision)
    {$\phipil(k),\,\thetapil(k),\,\chipil(k)$};
    
    \node[right=1.2cm of update] (output)
    {$\xnpel(k)$};
    
    \draw[arrow] (input) -- (asl);
    
    \draw[arrow] (asl) -- node[above, fill=white, inner sep=1pt]
    {$\zasl(\bu(k))$}
    (pe);
    
    \draw[arrow] (pe) -- node[above, fill=white, inner sep=1pt]
    {$\gpil(\bu(k),\xnpel(k-1))$}
    (update);
    
    \draw[arrow] (update) -- (output);
    
    \draw[arrow] (precision) -- (alpha);
    
    \draw[arrow] (alpha) -- node[right, fill=white, inner sep=2pt]
    {$\alphapil(k)$}
    (update);
    
    \draw[plain] 
    (update.south) -- ++(0,-0.80)
    coordinate (npefeedbackbottom)
    -| coordinate[pos=0.85] (npefeedbackleft)
    (pe.south);
    
    \node[below, inner sep=1pt] at
    ($(npefeedbackbottom)!0.50!(npefeedbackbottom -| pe.south)$)
    {$\xnpel(k-1)$};
    
    \draw[arrow]
    ([yshift=-0.75cm]pe.south)
    --
    ([yshift=-0cm]pe.south);
    
    \node[dashedbox, fit=(asl)] (asldash) {};
    \node[above=0.05cm of asldash] {\scriptsize Attentional selection};
    
    \node[
    dashedbox,
    inner xsep=16pt,
    inner ysep=14pt,
    fit=(pe)(update)(alpha)(npefeedbackbottom)(npefeedbackleft)
    ] (pidash) {};
    \node[above=0.05cm of pidash] {\scriptsize Predictive inference};
    
    \end{tikzpicture}%
    }
    \caption{
    Block diagram of the perception module.
    }
    \label{fig:perception_module_block_diagram}
\end{figure*}

\subsubsection{Attentional selection}
\noindent
Let $\mathcal L$ be a finite set of perceptual channels, where each channel $\ell\in\mathcal L$ represents a distinct sensory or contextual cue. At time step $k$, channel $\ell$ provides a scalar feature score $\ul(k)\in\ulset\subseteq\mathbb R$, such as an intensity, salience, or confidence value obtained from upstream preprocessing. The stacked sensory input is denoted by $\bu(k)\in\uset\coloneqq\prod_{\ell\in\mathcal L}\ulset$, where $\prod_{\ell\in\mathcal L}$ denotes the Cartesian product. Each admissible set $\uset_\ell$ is compact and convex (implying the same for $\uset$), and represents the finite operating range of the feature score extracted for channel $\ell$. The range reflects sensory saturation and normalization applied during upstream pre-processing \cite{b61}. We assume that $\mathbf{0}\in\uset$, with the zero input vector representing the absence of any stimulus drive.

\begin{definition}
\label{def:perceptualaccess}
    Suppose that the stacked input to the perceptual model at time step $k$ is $\bu(k) \in \uset$. To each channel-wise input $\ul(k) \in \ulset$, we assign an attentional selection weight via the mapping $\zasl: \uset \to \left[ 0, \zaslmax \right]$ defined by
    
    \begin{align}
    \label{eqn:perceptualaccessgeneral}
        \zasl \left(\bu(k)\right) \coloneqq \dfrac{\fasl \left(\ul(k)\right)}{\gasl \left( \bu(k) \right)},
    \end{align}
    with $\fasl: \ulset \to \left[0, \faslmax\right]$,
    $\gasl: \uset \to \left[\gaslmin, \infty\right)$, $\zaslmax = \faslmax/\gaslmin$. The contextual normalization function $\gasl(\cdot)$ is defined via
    
    \begin{align}
    \label{eqn:attentionalselectionG}
        \gasl \left( \bu(k) \right) \coloneqq \betaasl + \sum_{m \in \mathcal{L}} \gammaaslm \gasmtilde \left( \um(k) \right),
    \end{align}
    where $\betaasl \in \left(0, \infty \right)$ is a semi-saturation constant, $\gasmtilde : \umset \to [0, \infty)$ for $m \in \mathcal{L}$, and $\gammaaslm \in [0, \infty)$ is a pooling weight. \hfill $\square$
\end{definition}

The functions and constants in Definition~\ref{def:perceptualaccess} are assumed to have the following properties:

\begin{enumerate}[leftmargin=*]
    \item \textbf{Boundedness of $\fasl(\cdot)$}: The channel-wise attentional drive function $\fasl(\cdot)$ is assumed to be non-negative and uniformly upper-bounded. Non-negativity of $\fasl(\cdot)$ is a natural choice, as it represents a gain that scales the contribution of channel $\ell$ to perceptual access. 
    Negative values invert the sign of the drive and thereby implement active suppression, a role that is already captured through the divisive normalization term (cf.~\eqref{eqn:perceptualaccessgeneral}--\eqref{eqn:attentionalselectionG}). Mathematically, boundedness prevents unbounded growth of the numerator in \eqref{eqn:perceptualaccessgeneral}. Finally, we require $\fasl(0) = 0$ to ensure that zero input produces no attentional drive.
    
    \item \textbf{Boundedness of $\gasl(\cdot)$}: The contextual normalization function $\gasl(\cdot)$ in \eqref{eqn:attentionalselectionG} implements divisive normalization in \eqref{eqn:perceptualaccessgeneral}. Here, $\gasmtilde$ is a non-negative pooling function that maps the $m$-th sensory channel into its contribution to the normalization pool, and $\gammaaslm$ weights the contribution of channel $m$ to the normalization term for channel $\ell$. Similar channels typically exert greater suppression on each other; for instance, if two channels both encode luminance-related features (e.g., raw intensity and local contrast), they may suppress each other more strongly than channels belonging to different feature classes (e.g., luminance versus auditory pitch) \cite{b82}. 
    The semi-saturation constant $\betaasl$ ensures a uniform positive lower bound on $\gasl(\cdot)$ over $\uset$, so the denominator in \eqref{eqn:perceptualaccessgeneral} never vanishes. These assumptions guarantee that $\zasl(\cdot)$ is well-defined on $\uset$. 

\end{enumerate}

\noindent
The ratio in \eqref{eqn:perceptualaccessgeneral} defines the attentional selection weight $\zasl(\cdot)$ for channel $\ell$ at time step $k$, capturing both local stimulus strength and global contextual competition. The collection of channel-wise attentional selection weights is used as the input to the predictive inference stage described next.

\glsreset{npe}
\subsubsection{Predictive inference}
\noindent
The second stage models predictive inference as a bounded error-driven recursion, where the prediction error is the discrepancy between the attentional selection weight and the previous \gls{npe} (see Figure~\ref{fig:perception_module_block_diagram}).

\begin{definition}
\label{def:predictiveinference}
    Suppose that the attentional selection stage produces an attentional selection weight $\zasl\left(\bu(k)\right) \in \left[ 0, \zaslmax \right]$ 
    per channel $\ell\in\mathcal{L}$ at time step $k$. Let the \gls{npe} for channel $\ell$ be a state variable denoted by $\xnpel(k) \in \xnpelset \subseteq \mathbb{R}$, where the admissible set $\xnpelset$ is specified below. Define the prediction error function $\gpil : \uset \times \xnpelset \to \mathbb{R}$ of channel $\ell$ as
    
    \begin{align}
    \label{eqn:predictiveinferenceg}
        \gpil \left( \bu(k), \xnpel(k - 1) \right) \coloneqq \zasl \left( \bu(k) \right) - \xnpel (k - 1).
    \end{align}
    
    The \gls{npe} update is given by the recursion
    \begin{align}
    \label{eq:NPE_recursion}
        \xnpel(k) = \xnpel(k - 1) + \alphapil(k) \; \gpil \left( \bu(k), \xnpel(k - 1) \right).
    \end{align} 
    
    \hfill $\square$
\end{definition}

The next definition specifies how the effective update weight $\alphapil(k)$ of the \gls{npe} is computed from sensory precision, prior precision, and attention gain.

\begin{definition}
\label{def:predictiveinference_alpha}
    For each channel $\ell\in\mathcal{L}$ at time step $k$, let the sensory precision $\phipil(k) \in (0,\infty)$, prior precision $\thetapil(k) \in (0,\infty)$, and attention gain $\chipil(k) \in [0,\infty)$ be given. The effective update weight at time step $k$ is given by
    
    \begin{align}
    \label{eqn:predictiveinferencealpha}
        \alphapil(k) = \fpil \left( \phipil(k), \thetapil(k), \chipil(k) \right), 
    \end{align}
    where $\fpil : (0,\infty)^2\times[0,\infty) \to [0,1)$. \hfill $\square$
\end{definition}

The quantities $\phipil(k)$ and $\thetapil(k)$ denote sensory and prior precision, respectively. As precisions represent inverse uncertainty, they are required to be strictly positive \cite{b48,b49}. The attention gain $\chipil(k)$ is assumed non-negative, as it modulates the impact of prediction errors without reversing their sign \cite{b47,b49}. Furthermore, $\fpil(\cdot)$ is designed to satisfy:

\begin{enumerate}[leftmargin=*]
    \item \textbf{Boundedness:} The co-domain of $\fpil(\cdot)$ is restricted to $[0,1)$, so that the effective update weight satisfies 
    $\alphapil(k)<1$ for all $k$, and the recursion in \eqref{eq:NPE_recursion} can be re-written as a convex-combination update. Such bounded learning-rate assumptions are also consistent with predictive coding formulations, where precision-weighted prediction-error updates are regulated to avoid unstable perceptual inference \cite{b47}.
    
    \item \textbf{Monotonicity:} Function $\fpil(\cdot)$ is non-decreasing in $\phipil(k)$ and $\chipil(k)$, and non-increasing in $\thetapil(k)$. This captures the empirically supported role of neuro-modulators in precision-weighted prediction error processing, i.e., increased sensory precision or attention gain amplifies learning, whereas stronger prior precision dampens estimate updating \cite{b45,b46,b47}.
    
    \item \textbf{Vanishing conditions:}
    Conditions \eqref{eq:perception_vanishing_1}--\eqref{eq:perception_vanishing_3} reflect the fact that inference halts whenever (i) attention is absent, (ii) sensory input is unreliable, or (iii) prior estimates dominate. These are consistent with predictive coding and empirical findings on confidence-weighted \gls{npe} updating \cite{b45,b46,b48}:
    
    \begin{subequations}
    \label{eq:perception_vanishing}
        \begin{align}
        \label{eq:perception_vanishing_1}
           &\fpil \left( \phipil(k), \thetapil(k), 0 \right) = 0, \\
        \label{eq:perception_vanishing_2}
            &\lim_{\phipil(k) \to 0^+} \fpil \left( \phipil(k), \thetapil(k), \chipil(k) \right) = 0, \\
        \label{eq:perception_vanishing_3}
            &\lim_{\thetapil(k) \to \infty} \fpil \left( \phipil(k), \thetapil(k), \chipil(k) \right) = 0.
        \end{align}
    \end{subequations}
    
\end{enumerate}

\begin{remark}
    \label{remark:admissiblesetchoicexnpel}
    \textbf{Choice of $\xnpelset$.}
    Since $\alphapil(k)\in[0,1)$ and $\zasl(\bu(k))\in[0,\zaslmax]$, the update \eqref{eq:NPE_recursion} can be written as
    \begin{align}
    \label{eqn:nperecursionconvex}
    \xnpel(k) = (1-\alphapil(k)) \xnpel(k-1) + \alphapil(k) \zasl(\bu(k)).
    \end{align}
    Hence, we choose $\xnpelset\coloneqq[0,\zaslmax]$. Forward invariance of this set is proved in Lemma~\ref{lem:npe_forward_invariance}. \hfill $\square$
\end{remark}

The predictive inference stage produces the vector
$\bxnpe(k)\in\xnpeset\coloneqq\prod_{\ell\in\mathcal L}\xnpelset$ of channel-wise \glspl{npe} at each time step,
with input–output relation
$\bxnpe(k)\coloneqq\zperception(\bu(k),\bxnpe(k-1))$.
This vector summarizes the current latent perceptual estimate and serves as input to the cognition module. 

\begin{figure*}[!t]
    \centering
    \resizebox{.7\linewidth}{!}{%
    \begin{tikzpicture}[
    font=\small,
    node distance=1.15cm and 1.25cm,
    block/.style={
    draw,
    rounded corners,
    align=center,
    minimum width=3.35cm,
    minimum height=0.95cm,
    inner sep=5pt
    },
    smallblock/.style={
    draw,
    rounded corners,
    align=center,
    minimum width=2.75cm,
    minimum height=0.75cm,
    inner sep=4pt
    },
    arrow/.style={->, thick},
    dashedbox/.style={
    draw,
    dashed,
    rounded corners,
    inner xsep=16pt,
    inner ysep=14pt
    }
    ]
    
    \node[block] (self)
    {Self-inhibition \eqref{eqn:cognitionmodulefiic}\\Parameters: $\kappa,\,\gamma_i,\,\Lambda$};
    
    \node[left=2.15cm of self] (npe)
    {$\left[ \xnpel(k)\right]_{\ell \in \mathcal{L}}$};
    
    \node[block, below=1.25cm of self] (drive)
    {Additive perceptual drive\\Parameter: $\Theta$};
    
    \node[block, below=1.25cm of drive] (coupling)
    {Cross-coupling \eqref{eqn:gimc}\\Parameters: $\Phi,\,\{\Psi_\ell\},\,\{\Xi_q\}$};
    
    \node[left=2.02cm of drive] (npedrive)
    {$\left[ \xnpel(k)\right]_{\ell \in \mathcal{L}}$};
    
    \node[left=2cm of coupling, yshift=-0.0cm] (npecoupling)
    {$\left[ \xnpel(k)\right]_{\ell \in \mathcal{L}}$};
    
    \node[
    block,
    right=2.2cm of drive,
    minimum height=1.5cm
    ] (update)
    {Cognitive state update \eqref{eqn:dcmnonlinear}};
    
    \node[right=1.50cm of update] (output)
    {$\bx(k+1)$};
    
    \draw[arrow] (npe) -- (self);
    
    \draw[arrow] (npedrive) -- (drive);
    
    \draw[arrow] (npecoupling.east) -- ++(0.55,0) -- 
    ([yshift=-0.0cm]coupling.west);
    
    \draw[arrow] (self.east) -- ++(0.99,0) |- 
    ([yshift=0.6cm]update.west);
    
    \node[right=0.15cm] at ([xshift=0.85cm,yshift=-0.8cm]self.east)
    {$\fiic(\bxnpe(k))$};
    
    \draw[arrow] (drive.east) -- (update.west);
    
    \node[above=-0.06cm] at ($(drive.east)!0.5!(update.west)$)
    {$\Theta_{i\ell}\xnpel(k)$};
    
    \draw[arrow] (coupling.east) -- ++(0.85,0) |- 
    ([yshift=-0.6cm]update.west);
    
    \node[right=0.05cm] at ([xshift=0.85cm,yshift=0.8cm]coupling.east)
    {$\gimc(\bx(k),\bxnpe(k))$};
    
    \draw[arrow] (update) -- (output);
    
    \draw[arrow]
    (update.south) -- ++(0,-2.70)
    coordinate (feedbackbottom)
    -| coordinate[pos=0.85] (feedbackleft)
    (coupling.south);
    
    \node[below, fill=white, inner sep=1pt] at
    ($(feedbackbottom)!0.55!(feedbackbottom -| coupling.south)$)
    {$\bx(k)$};
    
    \node[
    dashedbox,
    fit=(self)(drive)(coupling)(update)(feedbackbottom)(feedbackleft)
    ] (cognitiondash) {};
    \node[above=0.05cm of cognitiondash] {\scriptsize Cognition module};
    
    \end{tikzpicture}%
    }
    \caption{
    Block diagram of the cognition module.
    }
    \label{fig:cognition_module_block_diagram}
\end{figure*}

\subsection{Cognition module}
\label{sec:cognitionmodule}
\noindent
The cognition module transforms latent perceptual estimates into interacting cognitive states. Let $\mathcal I$ be a finite index set of cognitive variables (e.g., representations of emotions, beliefs, or goals), and let $i \in \mathcal{I}$ be an index (not necessarily numeric) for an individual cognitive variable. Each cognitive variable at time step $k$ is represented by a scalar intensity $x_i(k) \in \mathcal{X}_i \subseteq \mathbb{R}$, where $\mathcal{X}_i$ denotes the admissible range of state $i$. Section~\ref{sec:stabilityanalysis} derives sufficient conditions under which cognitive state trajectories remain in a bounded operating set. We stack all cognitive variables via the vector $\bx(k) \in \mathcal{X}$, where $\mathcal{X}\coloneqq \prod_{i \in \mathcal{I}} \mathcal{X}_i$.

\begin{remark}\textbf{Latent state interpretation.}
    The labels assigned to these latent states indicate their intended psychological interpretation rather than properties implied by the state equations. In empirical applications, this interpretation is supported by construct-relevant measurements, targeted manipulations, and identifiability analysis, since different parameterizations may yield similar observable behavior.
\end{remark}

\bigskip
\noindent
Let $\card{\mathcal{I}}$ and $\card{\mathcal{L}}$ represent the cardinalities of the sets $\mathcal{I}$ and $\mathcal{L}$. 
Starting from the nonlinear, continuous-time \gls{dcm}-style ordinary differential equation in \cite{b76} and using 
a forward-Euler discretization with fixed step size $\Delta t>0$, we obtain the following discrete-time update model for the cognitive 
states: 

\begin{align}
\label{eqn:dcmnonlinear}
    x_i(k+1) = & \Bigg(1 + \Delta t\; \fiic \left( \bxnpe(k)\right)\Bigg)\; x_i(k) + \\ 
    &\Delta t \sum_{\substack{m \in \mathcal{I}\\ m\neq i}} \gimc \bigg(\bx(k), \bxnpe(k)\bigg)\; x_m(k) + \nonumber\\ 
    &\Delta t\sum_{\ell \in \mathcal{L}} \Theta_{i\ell} \xnpel(k). \nonumber
\end{align}

The state update equation \eqref{eqn:dcmnonlinear} consists of an input-modulated self-term governed by $\fiic(\cdot)$, cross-state couplings governed by $\gimc(\cdot, \cdot)$, and an additive perceptual drive mediated by $\Theta$. A detailed description and admissible parameterization of these terms follow below. The cognition module is visualized in Figure~\ref{fig:cognition_module_block_diagram}.

\subsubsection{Self-inhibition}
\noindent
The first term on the right-hand side of \eqref{eqn:dcmnonlinear} describes the intrinsic evolution of the $i$-th state in the absence of cross-state interactions: it consists of identity carry-over and input-modulated self-dynamics governed by $\fiic :\xnpeset\to\mathbb{R}$. Collecting these self-effects across all states yields the diagonal matrix $\fc (\cdot)\in\mathbb{R}^{\card{\mathcal{I}}\times\card{\mathcal{I}}}$. 
If $\fiic (\cdot)<0$, the self-term acts as a leak, tending to attenuate $x_i(k)$ in the absence of other terms. In particular, we use an exponential link for the self-inhibition term to guarantee $\fiic(\cdot) < 0$ on $\xnpeset$ without imposing hard inequality constraints during estimation. The resulting parameterization is stated next.

\begin{definition}
\label{def:cognitionmoduleselfinhibition}
    Let $\bxnpe(k) \in \xnpeset \subseteq \mathbb{R}^{|\mathcal{L}|}$ be the exogenous perceptual input at time step $k$. For each $i \in \mathcal{I}$, we parameterize the self-inhibition as follows:
    
    \begin{align}
        \label{eqn:cognitionmodulefiic}
            &\fiic \left(\bxnpe(k)\right) \coloneqq -\kappa \exp\left(\gamma_i+\sum_{\ell\in\mathcal{L}}\xnpel(k)\Lambda_{i\ell}\right),
    \end{align}
    with $\kappa \in (0, \infty)$ a scalar gain, $\gamma_i$ the $i$-th element of $\boldsymbol{\gamma} \in \mathbb{R}^{\card{\mathcal{I}}}$, which sets the baseline log-leak, and $\Lambda_{i \ell}$ the $(i,\ell)$-entry of $\Lambda \in \mathbb{R}^{\card{\mathcal{I}}\times \card{\mathcal{L}}}$, which specifies how perceptual channel $\ell$ modulates the log-leak for the $i$-th state. \hfill $\square$
\end{definition}

Furthermore, we assume there exist finite constants $\overline{\boldsymbol{\gamma}}$ and $\overline{\Lambda_{\ell}}$ such that
$\norm{\boldsymbol{\gamma}}_\infty \leq \overline{\boldsymbol{\gamma}}$ and $\abs{\Lambda_{i\ell}} \leq \overline{\Lambda_\ell}$ for all $i \in \mathcal{I}$ and $\ell \in \mathcal{L}$. These assumptions, together with the boundedness of $\xnpeset$, yield uniform bounds and local Lipschitz constants for $\fiic (\cdot)$ over $\xnpeset$. 

\subsubsection{Cross-coupling}
\noindent
The cross-coupling, modeled via the second term on the right-hand side of \eqref{eqn:dcmnonlinear}, quantifies the directed influence of all other cognitive states $x_m(k)$ (with $m\neq i$) on cognitive state $x_i(k)$ at time step $k$. This influence is input and state-modulated through the entry-wise coupling coefficient $\gimc : \mathcal{X} \times \xnpeset \to \mathbb{R}$ defined by:

\begin{align}
\label{eqn:gimc}
    \gimc \left(\bx(k), \bxnpe(k)\right) \coloneqq  &\Phi_{im} + \\
    &\sum_{\ell \in \mathcal{L}} \xnpel(k)\; \Psi_{\ell,im}
    + \nonumber \\
    &\sum_{q \in \mathcal{I}} x_q(k)\; \Xi_{q,im}  \nonumber.
\end{align}

The structure of $\gimc(\cdot, \cdot)$ mirrors the canonical \gls{dcm} decomposition into a baseline coupling term, $\Phi_{im}$, a channel-wise input-gated perceptual modulation 
via $\Psi_{\ell,im}$, and a state-gated endogenous modulation via $\Xi_{q,im}$. 
We impose that $\Phi$, $\Psi_\ell$, and $\Xi_q$  have zero diagonal entries, so that self-dynamics arises solely from the identity carry-over and the diagonal self-inhibition matrix $\fc(\cdot)$. We next elaborate the interpretation of each component in \eqref{eqn:gimc}.

\paragraph{\textbf{Baseline cross-state influence}} 
Baseline interactions are encoded by a matrix $\Phi\in\mathbb{R}^{\card{\mathcal{I}}\times\card{\mathcal{I}}}$ (cf. \eqref{eqn:gimc}), whose entries $\Phi_{im}$ represent directed influences from state $x_m(k)$ to state $x_i(k)$ for $m\neq i$ \cite{b77}. Matrix $\Phi$ represents directed influences among cognitive variables that persist in the absence of perceptual or endogenous modulation. For instance, it may encode long-standing associations among beliefs or emotions, thereby providing the structural backbone on which input-gated and state-gated modulations act.

\paragraph{\textbf{Perceptual (input-gated) modulation}} Matrix $\Psi_\ell \in \mathbb{R}^{\card{\mathcal{I}} \times \card{\mathcal{I}}}$ encodes how the \gls{npe} component corresponding to channel $\ell$ modulates the directed influence from state $x_m(k)$ to state $x_i(k)$ (with $m \neq i$) through the bilinear input-state interaction $\xnpel(k)\Psi_{\ell,im}x_m(k)$ (cf. \eqref{eqn:dcmnonlinear} and \eqref{eqn:gimc}). 
This modulation specifies how new perceptual inputs strengthen or weaken the baseline coupling among the cognitive variables.

\paragraph{\textbf{Endogenous (state-gated) modulation}} Higher-order state-dependent modulation is captured via matrices $\Xi_q \in \mathbb{R}^{\card{\mathcal{I}} \times \card{\mathcal{I}}}$. In \eqref{eqn:dcmnonlinear}, this endogenous modulation enters as the bilinear term $x_q(k)\Xi_{q,im}x_m(k)$ and corresponds to entry $(i,m)$ of matrix $\Xi_q$ scaled by the current values of states $x_q(k)$ (i.e., the gating state) and $x_m(k)$ (i.e., the sender state). This term specifies how state $x_q(k)$ gates the directed influence from the $m$-th state to the $i$-th state (with $m \neq i$), consistent with nonlinear expansions of causal dynamics in \gls{dcm} formulations \cite{b76}. For example, heightened arousal may amplify the effect of perceived threat on fear.

\subsubsection{Additive perceptual drive}
\label{sec:additiveperceptualdrive}
\noindent
The last term in \eqref{eqn:dcmnonlinear} provides an additive driving input to state $x_i(k)$ from exogenous (perceptual) inputs $\bxnpe(k)$. It injects perceptual evidence into the cognition module independently of the current state $\bx(k)$ and enters the update linearly in $\bxnpe(k)$ and additively in the state equation. Matrix $\Theta \in \mathbb{R}^{\card{\mathcal{I}} \times \card{\mathcal{L}}}$ encodes which perceptual channels (or which components of $\bxnpe(k)$) directly drive which states. This allows only selected perceptual channels to directly influence specific latent cognitive variables. Additive driving terms are standard in \gls{dcm}-based dynamic formulations, where external inputs enter independently of the state and shape its evolution.

\subsubsection{Parameterization of cross-coupling and drive}
\noindent
To parameterize the coupling terms and perceptual drive in an interpretable and structurally constrained way, we impose an admissible construction with two design goals: (i) separating interaction structure from parameter values via binary adjacency masks, and (ii) enforcing sign and range constraints via smooth link functions applied entry-wise \cite{b77}. Specifically, we introduce unconstrained (free) parameter matrices whose entries can take arbitrary real values during estimation, and map them into admissible coupling and drive matrices through the masks and link functions. The resulting construction of $\Phi$, $\left\{\Psi_\ell | \ell \in \mathcal{L}\right\}$, $\left\{\Xi_q | q \in \mathcal{I} \right\}$, and $\Theta$ is given formally below.

\begin{definition}
\label{def:cognitionmodulecrosscoupling}
    Let $\tilde{\Phi}, \tilde{\Psi}_{\ell}, \tilde{\Xi}_q \in \mathbb{R}^{\card{\mathcal{I}} \times \card{\mathcal{I}}}$, and $\tilde{\Theta} \in \mathbb{R}^{\card{\mathcal{I}} \times \card{\mathcal{L}}}$ be unconstrained parameter matrices. Let $g_1, g_3: \mathbb{R} \to \mathbb{R}$ and, for each $\ell \in \mathcal{L}$ and $q \in \mathcal{I}$, let $g_{2,\ell}, g_{4,q} : \mathbb{R} \to \mathbb{R}$ be scalar link functions, and let $G_1, \{ G_{2,\ell} | \ell \in \mathcal{L} \}, G_3, \{ G_{4,q} | q \in \mathcal{I} \}$ denote their entry-wise matrix lifts, i.e., $[G(A)]_{im}=g(A_{im})$ for any matrix $A=[A_{im}]$. Let $\mathrm{M}_1, \{ \mathrm{M}_{2,\ell} | \ell \in \mathcal{L} \}, \mathrm{M}_3, \{ \mathrm{M}_{4,q} | q \in \mathcal{I} \}$ be binary masks of compatible dimensions. Define the effective coupling matrices by:
    
    \begin{subequations}
    \label{eq:coupling_matrices_cognition}
        \begin{align}
            &\Phi \coloneqq \mathrm{M}_1 \odot G_1 \left( \tilde{\Phi} \right), \\
            &\Psi_{\ell} \coloneqq \mathrm{M}_{2,\ell} \odot G_{2,\ell} \left( \tilde{\Psi}_{\ell} \right) , \\ 
            &\Theta \coloneqq \mathrm{M}_3 \odot G_3 \left( \tilde{\Theta} \right), \\ 
            &\Xi_q \coloneqq \mathrm{M}_{4,q} \odot G_{4,q} \left( \tilde{\Xi}_q \right). 
        \end{align}
    \end{subequations}
    
    \hfill $\square$
\end{definition}

The binary adjacency masks $\mathrm{M}_1, \{ \mathrm{M}_{2,\ell} | \ell \in \mathcal{L} \}, \mathrm{M}_3, \{ \mathrm{M}_{4,q} | q \in \mathcal{I} \}$ separate interaction structure from parameter values, encode prior connectivity assumptions, and, when sparse, reduce the number of free parameters to improve interpretability and practical identifiability. Without such structural restrictions, the number of cognition module parameters scales as $\bigO\left(\card{\mathcal{I}}^3 + \card{\mathcal{L}} \card{\mathcal{I}}^2\right)$, with the dominant contributions arising from the state-gated matrices $\{\Xi_q\}_{q\in\mathcal I}$ and input-gated matrices $\{\Psi_\ell\}_{\ell\in\mathcal L}$. The parameterization in Definition~\ref{def:cognitionmodulecrosscoupling} is assumed to satisfy:

\begin{enumerate}[leftmargin=*]

    \item \textbf{Link function regularity and constraints.}
    The scalar link functions $g_1, g_{2,\ell},g_3$ are continuous and locally Lipschitz. For each $q \in \mathcal{I}$, $g_{4,q}$ is continuous, locally Lipschitz, and bounded: $g_{4,q}(\cdot) \in \left[\underline{g_{4,q}}, \overline{g_{4,q}}\right].$ This  bounds $x_q(k)\Xi_{q,im}x_m(k)$ on any bounded cognition state set.
    \item \textbf{Masks and sparsity.}
    The adjacency masks configure the network structure of the state variables and specify the permitted couplings (entries $1$) and the excluded couplings (entries $0$):
    
    \begin{subequations}
        \begin{align}
            &\;\mathrm{M}_1\in\{0,1\}^{\card{\mathcal{I}}\times\card{\mathcal{I}}}, \\ 
            &\left\{\mathrm{M}_{2,\ell} \mid \ell\in\mathcal{L}\right\}\subseteq\{0,1\}^{\card{\mathcal{I}}\times\card{\mathcal{I}}}, \\ 
            &\mathrm{M}_3\in\{0,1\}^{\card{\mathcal{I}}\times\card{\mathcal{L}}}, \\
            &\left\{\mathrm{M}_{4,q} \;\middle|\; q\in\mathcal{I} \right\}\subseteq\{0,1\}^{\card{\mathcal{I}}\times\card{\mathcal{I}}},  
        \end{align}
    \end{subequations}
    
    \item \textbf{Hollowness.}
    The square masks are hollow:
    
    \begin{subequations}
        \begin{align}
            \mathrm{diag}\left(\mathrm{M}_1 \right) &= \boldsymbol{0},\\
            \mathrm{diag}\left(\mathrm{M}_{2,\ell}\right) &= \boldsymbol{0}, \quad \forall \ell\in\mathcal{L},\\
            \mathrm{diag}\left(\mathrm{M}_{4,q} \right) &= \boldsymbol{0}, \quad \forall q\in\mathcal{I}.
        \end{align}
    \end{subequations}
    
    The hollowness constraints encoded via the masks imply $\mathrm{diag}(\Phi)=\boldsymbol{0}$, $\mathrm{diag}(\Psi_{\ell})=\boldsymbol{0}$, $\forall \ell\in\mathcal{L}$, and $\mathrm{diag}(\Xi_q)=\boldsymbol{0}$, $\forall q\in\mathcal{I}$. This ensures no self-effects are introduced via baseline, input‑gated, or state‑gated couplings. 
\end{enumerate}

The cognition module maps the current perceptual state and previous cognitive state to the updated state vector,
$\bx(k)\coloneqq\zcognition(\bxnpe(k),\bx(k-1))$,
which is forwarded to the decision-making module to drive downstream behavioral outputs. The next section explains how cognitive states give rise to intentions and actions.

\begin{remark}
\label{remark:goal_belief_state_partition}
    \textbf{Goal–belief state partition.}
    For the integration of the decision-making module below, we assume that a subset of the cognitive state indices corresponds to goal-related variables and another subset corresponds to belief-related variables. Specifically, let $\mathcal{G}\subseteq\mathcal{I}$ denote the set of goal state indices and let $\mathcal{B}\subseteq\mathcal{I}$ denote the set of belief state indices. Additional states in $\mathcal{I}\setminus(\mathcal{G}\cup\mathcal{B})$ may represent emotions, appraisals, contextual variables, memories, or other latent cognitive quantities that influence the dynamics but are not mapped directly into intentions.
\end{remark}

\subsection{Decision-making module}
\noindent
The decision-making module maps the cognitive state vector $\bx(k)$ to behavioral output. It first forms intention intensities from goal and belief states, and then maps these intentions to actions through a bounded competitive gate.

\begin{figure*}[!t]
    \centering
    \resizebox{\linewidth}{!}{%
    \begin{tikzpicture}[
    font=\small,
    node distance=1.15cm and 1.25cm,
    block/.style={
    draw,
    rounded corners,
    align=center,
    minimum width=3.35cm,
    minimum height=1.05cm,
    inner sep=5pt
    },
    smallblock/.style={
    draw,
    rounded corners,
    align=center,
    minimum width=2.60cm,
    minimum height=0.75cm,
    inner sep=4pt
    },
    tinyblock/.style={
    draw,
    rounded corners,
    align=center,
    minimum width=1.25cm,
    minimum height=0.55cm,
    inner sep=2pt
    },
    arrow/.style={->, thick},
    dashedbox/.style={
    draw,
    dashed,
    rounded corners,
    inner sep=9pt
    }
    ]
    
    \node[] (beliefinput)
    {$\left[x_i(k)\right]_{i \in \mathcal{B}}$};
    
    \node[above=3.5cm of beliefinput, xshift=0.23cm] (goalinput)
    {$\xigoals(k)$};
    
    \node[block, right=1.65cm of goalinput, xshift=-0.23cm] (goal)
    {Goal salience};
    
    \node[block, right=1.38cm of beliefinput] (belief)
    {Belief gain};
    
    \node[block, right=3cm of $(goal)!0.5!(belief)$] (intention)
    {Intention\\formation \eqref{eqn:intentionselection}\\Parameters: $\phiisi$};
    
    \node[block, right=2.5cm of intention] (drive)
    {Competitive drive \eqref{eqn:decisionmakingfasi}};
    
    \node[block, right=2.5cm of drive] (action)
    {Action gate \eqref{eqn:actionselection}};
    
    \node[above=1.2cm of drive] (threshold)
    {$\phiasi(k)$};
    
    \node[right=1.20cm of action] (output)
    {$\zai(k)$};
    
    \draw[arrow] (goalinput) -- (goal);
    
    \draw[arrow] (beliefinput) -- (belief);
    
    \draw[arrow] (goal.east) -- ++(0.62,0)
    coordinate (goalturn)
    |- ([yshift=0.35cm]intention.west);
    
    \node[right=0.05cm] at ([yshift=-0.70cm]goalturn)
    {$\fisi(\xigoals(k))$};
    
    \draw[arrow] (belief.east) -- ++(0.65,0)
    coordinate (beliefturn)
    |- ([yshift=-0.35cm]intention.west);
    
    \node[right=0.05cm] at ([yshift=0.70cm]beliefturn)
    {$\gisi(\xbeliefs(k))$};
    
    \draw[arrow] (intention) -- node[above, yshift=0.10cm]
    {$\left[ \zii(k)\right]_{i\in \mathcal{G}}$}
    (drive);
    
    \draw[arrow] (threshold) -- (drive);
    
    \draw[arrow] (drive) -- node[above, yshift=0.10cm]
    {$\fasi(\bzi(k))$}
    (action);
    
    \draw[arrow] (action) -- (output);
    
    \node[
    dashedbox,
    fit=(goal)(belief)(intention)
    ] (ifdash) {};
    \node[above=0.05cm of ifdash] {\scriptsize Intention formation};
    
    \node[
    dashedbox,
    fit=(drive)(action)
    ] (asdash) {};
    \node[above=0.05cm of asdash] {\scriptsize Action selection};
    
    \end{tikzpicture}%
    }
    \caption{
    Block diagram of the decision-making module.
    }
    \label{fig:decision_module_block_diagram}
\end{figure*}

\subsubsection{Intention formation}
\noindent

We model intention formation by separating goal salience from its belief-dependent expression. 
Specifically, intention formation is factored into (i) a non-negative, value-weighted goal salience function $\fisi(\cdot)$ and (ii) a belief-dependent gain function $\gisi(\cdot)$ that modulates how strongly salience is expressed as intention. Accordingly, we model each intention intensity as a baseline plus the product of these two terms.

Consistent with the partition introduced in Remark~\ref{remark:goal_belief_state_partition}, we collect the goal-related and belief-related components of the cognitive state as $\bxb(k) \coloneqq \left[x_j(k)\right]_{j\in \mathcal{B}}$ and $\bxg(k) \coloneqq \left[x_j(k)\right]_{j\in \mathcal{G}}$. For any index subset $\mathcal{S} \subseteq \mathcal{I}$, the corresponding sub-vector inherits its admissible range $\mathcal{X}^{\textup{s}}$ through a coordinate projection defined via $\Gamma\left(\mathcal{S}\right): \mathcal{X} \to \mathbb{R}^{\card{\mathcal{S}}}$. Accordingly, each individual element $s$ of the sub-vector inherits its admissible range through the mapping $\Gamma\left(\{s\}\right): \mathcal{X} \to \mathbb{R}$ applied to the singleton $\{s\}$. In the following definition, $\mathcal{S}$ denotes either $\mathcal{B}$ or $\mathcal{G}$. For clarity, when referring to a specific belief or goal component, we write it, respectively, as $\xb(k)  \equiv x_i(k)$ with $i \in \mathcal{B}$ and as $\xg(k)  \equiv x_i(k)$ with $i \in \mathcal{G}$. In this paper, each goal index $i \in \mathcal{G}$ defines one corresponding intention channel $z_i^{\mathrm {I}}$. More general mappings, where one intention depends on multiple goals or one goal contributes to multiple intentions, are left for future extensions.

\begin{definition}
\label{def:intention_formation}
    Suppose that at time step $k$ the cognition module outputs belief vector $\bxb(k)$ and goal vector $\bxg(k)$. For a given goal index $i \in \mathcal{G}$, the intention intensity mapping $\zii : \xgiset \times \xbset \to \ziiset$, where $\ziiset \coloneqq \left[0, \ziimax \right]$, is defined as:
    \begin{align}
    \label{eqn:intentionselection}
        \zii\left(\xigoals(k), \xbeliefs(k)\right) \coloneqq \phiisi + \fisi \left(\xigoals(k)\right) \; \gisi \left(\xbeliefs(k)\right).
    \end{align}
    Here, $\phiisi \in [0,\infty)$ is a baseline intensity, $\fisi: \xgiset \to \left[0, \fisimax \right]$ is a goal salience mapping that encodes how strongly goal $\xigoals$ drives intention formation, and $\gisi: \xbset \to \left[ 0, \gisimax \right]$ is a belief-dependent gain mapping that modulates how strongly beliefs affect the intention. The constants $\ziimax, \fisimax, \gisimax$ are finite bounds specified below. \hfill $\square$
\end{definition}

We assume that $\fisi(\cdot)$ and $\gisi(\cdot)$ satisfy the following conditions:
    
\begin{enumerate}[leftmargin=*]
    \item \textbf{Boundedness:} 
    Since $\zii(\cdot, \cdot)$ is modeled as an intensity, we restrict $\gisi(\cdot)$ to be non-negative, so that beliefs modulate the goal-to-intention gain without directly inducing negative intention intensity. Furthermore, there exist finite constants $\fisimax$ and $\gisimax$ such that
    
    \begin{align*}
    \begin{split}
        0 \le \fisi\left(\xigoals(k)\right) \le \fisimax, \quad \forall \xigoals(k)\in \xgiset, \\
        0 \le \gisi\left(\bxb(k)\right) \le \gisimax, \quad \forall \bxb(k) \in \xbset.
    \end{split}
    \end{align*}

    These bounds reflect saturating gain effects observed in neural valuation and selection processes \cite{b20,b55,b56}, and ensure that $\zii(\cdot,\cdot)$ remains uniformly bounded. Finally, we choose $\ziimax \coloneqq \phiisi + \fisimax \gisimax$ so that $\zii(\cdot, \cdot)\in \left[0,\ziimax \right]$ holds. 
    
    \item \textbf{Monotonicity:} 
    We assume $\fisi(\cdot)$ is non-decreasing on $\xgiset$. This is a modeling choice for intention $i$ associated with goal $i$: the direct goal-to-intention contribution is facilitatory, while reductions in intention salience are captured by explicit suppressive pathways. For the belief-dependent gain, fix a partition of belief indices for goal $i$ into supportive and suppressive sets $\mathcal{B}_i^{+}$ and $\mathcal{B}_i^{-}$ with $\mathcal{B}_i^{+}\cap \mathcal{B}_i^{-}=\varnothing$ and $\mathcal{B}_i^{+}\cup \mathcal{B}_i^{-} \subseteq \mathcal{B}$. The remaining indices $\mathcal{B}_i^{0}\coloneqq \mathcal{B}\setminus(\mathcal{B}_i^{+}\cup \mathcal{B}_i^{-})$ are interpreted as neutral (neither supportive nor suppressive) for goal $i$. Then $\gisi(\cdot)$ is chosen to be coordinate-wise monotone in the following sense: for any two belief vectors $\bxb_1, \bxb_2\in \xbset$, if
    
    \begin{align*}
        \xb_{1,j} &\le \xb_{2,j} && \forall j\in\mathcal{B}_i^{+}, \\
        \xb_{1,j} &\ge \xb_{2,j} && \forall j\in\mathcal{B}_i^{-}, \\
        \xb_{1,j} &= \xb_{2,j} && \forall j\in \mathcal{B}_i^{0}.
    \end{align*}
    
    then $\gisi(\bxb_1)\le \gisi(\bxb_2)$. This captures that increasing supportive beliefs and/or decreasing suppressive beliefs does not reduce the gain, consistent with belief-based modulation of intention formation \cite{b54,b29,b32}.
\end{enumerate}

The collection of intention intensities for all goals is used as the input to the action selection stage described next.

\subsubsection{Action selection}
\noindent
We model action selection using a self-facilitating drive, a competition-induced suppression term, and a bounded gating nonlinearity with conflict-dependent thresholding.

\begin{definition}
\label{def:actionselection}
    Suppose that the intention selection stage at time step $k$ produces a vector $\bzi(k) \in \ziset$ of intention intensities, where $\ziset \coloneqq \prod_{i \in \mathcal{G}} \ziiset$. Fix $i \in \mathcal{G}$ and decompose $\bzi(k) = \left( \zii(k), \bzimini(k) \right)$ with $\bzimini(k) \in \ziminiset$ and 
    $\ziminiset\coloneqq \prod_{\substack{m \in \mathcal{G} \\ m \neq i}}  \zimset$, such that $\ziset = \ziiset \times \ziminiset$. Define $\gasi: \ziiset \to \left[ 0,\gasimax \right]$, which captures self-facilitation, and $\hasi: \ziminiset \to \left[ 0,\hasimax \right]$, which captures suppression due to competing intentions. Define the competitive drive  $\fasi : \ziset \to \mathbb{R}$ for the action induced by goal $i$ as:
    \begin{align}
    \label{eqn:decisionmakingfasi}
        \fasi\left(\bzi(k)\right) \coloneqq \gasi \left(\zii(k)\right) - \phiasi(k) - \hasi \left(\bzimini(k) \right),
    \end{align}
    where $\phiasi(k) \in [0, \infty)$ is a action threshold. The  action selection stage outputs an action intensity, defined by the sigmoid gate $\sasi : \mathbb{R} \to (0,1)$:
    \begin{align}
    \label{eqn:actionselection}
        \zai\left(\bzi(k)\right) \coloneqq \sasi \left(\fasi\left(\bzi(k)\right)\right).
    \end{align} \hfill $\square$
\end{definition}

The exogenous action threshold $\phiasi(k)$ is time-dependent and may increase under conflict, delaying commitment when competing intentions are similar in strength. Furthermore, we assume the following properties of $\gasi(\cdot), \hasi(\cdot)$, and $\sasi(\cdot)$:

\begin{enumerate}[leftmargin=*]
    \item \textbf{Monotonicity:} Functions $\gasi(\cdot)$ and $\hasi(\cdot)$ are monotone non-decreasing in, respectively, $\zii$ and each component of $\bzimini$. The gate $\sasi(\cdot)$ is monotone increasing. These monotonicity conditions encode that stronger intention-related drive for action $i$ 
     only increases the gating tendency, while stronger competing drives only increase the net suppression exerted on action $i$ \cite{b57,b58}.
    \item \textbf{Boundedness:} 
    Both $\gasi(\cdot)$ and $\hasi(\cdot)$ are bounded by construction with finite upper bounds $\gasimax$ and $\hasimax$, respectively. The boundedness of $\sasi(\cdot)$ reflects the saturating nature of the action selection gate and prevents unbounded action activation \cite{b57,b58,b60}.
    \item \textbf{Asymptotic behavior:} Function $\sasi(\cdot)$ follows 
    \begin{align*}
        \lim_{\zeta \to -\infty} \sasi(\zeta) = 0, \quad \lim_{\zeta \to +\infty} \sasi(\zeta) = 1.
    \end{align*}
    These asymptotic limits ensure a saturating gating regime in which sufficiently negative and sufficiently positive drives yield, respectively, near-zero and near-one outputs \cite{b58,b59,b60}.
\end{enumerate}

Together, intention formation and action selection define the decision-making module, whose input–output relation is summarized by $\bza(k)=\zdecision(\bx(k))$. Here, $\bza(k)\in\zaset\coloneqq\prod_{i\in\mathcal G}\zaiset$ denotes the vector of action intensities. Figure~\ref{fig:decision_module_block_diagram} shows the corresponding block diagram.

\subsubsection{Regularity assumptions}
The rest of the paper is based on the following assumptions:

\begin{assumption}\textbf{Regularity of attentional selection mappings:} 
\label{ass:attentionalselection}
    For each $\ell \in \mathcal{L}$, the functions 
    $\fasl: \ulset \to \left[0, \faslmax\right]$ and 
    $\gasl: \uset \to \left[\gaslmin, \infty\right)$ are continuously differentiable on open neighborhoods of $\ulset$ and $\uset$, respectively. In particular, they are Lipschitz on the compact admissible sets $\ulset$ and $\uset$.
\end{assumption}

\begin{assumption}\textbf{Regularity of intention formation mappings:}
\label{ass:intentionformation}
    For each $i \in \mathcal{G}$, the functions $\fisi(\cdot)$ and $\gisi(\cdot)$ are locally Lipschitz on open neighborhoods of $\xgiset$ and $\xbset$, respectively.
\end{assumption}

\begin{assumption}\textbf{Regularity of action selection mappings:}
\label{ass:actionselection}
    For each $i \in \mathcal{G}$, the functions $\gasi(\cdot)$ and $\hasi(\cdot)$ are locally Lipschitz on open neighborhoods of $\ziiset$ and $\ziminiset$, respectively. The action gate $\sasi(\cdot)$ is continuously differentiable, and hence locally Lipschitz on $\mathbb{R}$.
\end{assumption}

\begin{table}[!t]
\centering
\caption{Main variables used in the model formulation. The domain column gives the admissible value set or structural parameter space.}
\label{tab:variables}
\begin{minipage}{\columnwidth}

\hrule
\vspace{0.5em}

\noindent
\makebox[0.20\columnwidth][l]{\textbf{Variable}}%
\makebox[0.50\columnwidth][l]{\textbf{Description}}%
\makebox[0.09\columnwidth][l]{\textbf{Eq.}}%
\makebox[0.20\columnwidth][r]{\textbf{Domain}}

\vspace{0.4em}
\hrule
\vspace{0.4em}

\noindent
\makebox[0.20\columnwidth][l]{$\ul(k)$}%
\parbox[t]{0.50\columnwidth}{Input to perception module \\ through channel $\ell$}%
\makebox[0.09\columnwidth][l]{\eqref{eqn:perceptualaccessgeneral}}%
\makebox[0.20\columnwidth][r]{$\ulset \subseteq \mathbb{R}$}

\vspace{0.4em}
\hrule
\vspace{0.4em}

\noindent
\makebox[0.20\columnwidth][l]{$\zasl(\bu(k))$}%
\parbox[t]{0.50\columnwidth}{Channel-wise \\ attentional selection weight}%
\makebox[0.09\columnwidth][l]{\eqref{eqn:perceptualaccessgeneral}}%
\makebox[0.20\columnwidth][r]{$\left[0, \zaslmax \right]$}

\vspace{0.4em}
\hrule
\vspace{0.4em}

\noindent
\makebox[0.20\columnwidth][l]{$\betaasl$}%
\makebox[0.50\columnwidth][l]{Semi-saturation constant}%
\makebox[0.09\columnwidth][l]{\eqref{eqn:attentionalselectionG}}%
\makebox[0.20\columnwidth][r]{$(0,\infty)$}

\vspace{0.4em}
\hrule
\vspace{0.4em}

\noindent
\makebox[0.20\columnwidth][l]{$\gammaaslm$}%
\makebox[0.50\columnwidth][l]{Pooling weight}%
\makebox[0.09\columnwidth][l]{\eqref{eqn:attentionalselectionG}}%
\makebox[0.20\columnwidth][r]{$[0, \infty)$}

\vspace{0.4em}
\hrule
\vspace{0.4em}

\noindent
\makebox[0.20\columnwidth][l]{$\xnpel(k)$}%
\makebox[0.50\columnwidth][l]{\gls{npe} state}%
\makebox[0.09\columnwidth][l]{\eqref{eq:NPE_recursion}}%
\makebox[0.20\columnwidth][r]{$\left[0, \zaslmax \right]$}

\vspace{0.4em}
\hrule
\vspace{0.4em}

\noindent
\makebox[0.20\columnwidth][l]{$\alphapil(k)$}%
\makebox[0.50\columnwidth][l]{Effective update weight}%
\makebox[0.09\columnwidth][l]{\eqref{eq:NPE_recursion}}%
\makebox[0.20\columnwidth][r]{$[0,1)$}

\vspace{0.4em}
\hrule
\vspace{0.4em}

\noindent
\makebox[0.20\columnwidth][l]{$\phipil(k)$}%
\makebox[0.50\columnwidth][l]{Sensory precision}%
\makebox[0.09\columnwidth][l]{\eqref{eqn:predictiveinferencealpha}}%
\makebox[0.20\columnwidth][r]{$(0,\infty)$}

\vspace{0.4em}
\hrule
\vspace{0.4em}

\noindent
\makebox[0.20\columnwidth][l]{$\thetapil(k)$}%
\makebox[0.50\columnwidth][l]{Prior precision}%
\makebox[0.09\columnwidth][l]{\eqref{eqn:predictiveinferencealpha}}%
\makebox[0.20\columnwidth][r]{$(0,\infty)$}

\vspace{0.4em}
\hrule
\vspace{0.4em}

\noindent
\makebox[0.20\columnwidth][l]{$\chipil(k)$}%
\makebox[0.50\columnwidth][l]{Attention gain}%
\makebox[0.09\columnwidth][l]{\eqref{eqn:predictiveinferencealpha}}%
\makebox[0.20\columnwidth][r]{$[0,\infty)$}

\vspace{0.4em}
\hrule
\vspace{0.4em}

\noindent
\makebox[0.20\columnwidth][l]{$x_i(k)$}%
\makebox[0.50\columnwidth][l]{Cognitive state}%
\makebox[0.09\columnwidth][l]{\eqref{eqn:dcmnonlinear}}%
\makebox[0.20\columnwidth][r]{$\mathcal{X}_i \subseteq \mathbb{R}$}

\vspace{0.4em}
\hrule
\vspace{0.4em}

\noindent
\makebox[0.20\columnwidth][l]{$\Theta$}%
\makebox[0.50\columnwidth][l]{Perceptual drive}%
\makebox[0.09\columnwidth][l]{\eqref{eqn:dcmnonlinear}}%
\makebox[0.20\columnwidth][r]{$\mathbb{R}^{\card{\mathcal{I}}\times\card{\mathcal{L}}}$}

\vspace{0.4em}
\hrule
\vspace{0.4em}

\noindent
\makebox[0.20\columnwidth][l]{$\kappa$}%
\makebox[0.50\columnwidth][l]{Self-inhibition}%
\makebox[0.09\columnwidth][l]{\eqref{eqn:cognitionmodulefiic}}%
\makebox[0.20\columnwidth][r]{$(0,\infty)$}

\vspace{0.4em}
\hrule
\vspace{0.4em}

\noindent
\makebox[0.20\columnwidth][l]{$\gamma_i$}%
\makebox[0.50\columnwidth][l]{Baseline log-leak}%
\makebox[0.09\columnwidth][l]{\eqref{eqn:cognitionmodulefiic}}%
\makebox[0.20\columnwidth][r]{$\left[-\overline{\boldsymbol{\gamma}}, \overline{\boldsymbol{\gamma}} \right]$}

\vspace{0.4em}
\hrule
\vspace{0.4em}

\noindent
\makebox[0.20\columnwidth][l]{$\Lambda_{i \ell}$}%
\makebox[0.50\columnwidth][l]{Perception-leak coupling}%
\makebox[0.09\columnwidth][l]{\eqref{eqn:cognitionmodulefiic}}%
\makebox[0.20\columnwidth][r]{$\left[-\overline{\Lambda_{\ell}}, \overline{\Lambda_{\ell}} \right]$}

\vspace{0.4em}
\hrule
\vspace{0.4em}

\noindent
\makebox[0.20\columnwidth][l]{$\Phi$}%
\makebox[0.50\columnwidth][l]{Baseline coupling}%
\makebox[0.09\columnwidth][l]{\eqref{eqn:gimc}}%
\makebox[0.20\columnwidth][r]{$\mathbb{R}^{\card{\mathcal{I}}\times\card{\mathcal{I}}}$}

\vspace{0.4em}
\hrule
\vspace{0.4em}

\noindent
\makebox[0.20\columnwidth][l]{$\Psi_{\ell}$}%
\makebox[0.50\columnwidth][l]{Input-gated coupling}%
\makebox[0.09\columnwidth][l]{\eqref{eqn:gimc}}%
\makebox[0.20\columnwidth][r]{$\mathbb{R}^{\card{\mathcal{I}}\times\card{\mathcal{I}}}$}

\vspace{0.4em}
\hrule
\vspace{0.4em}

\noindent
\makebox[0.20\columnwidth][l]{$\Xi_q$}%
\makebox[0.50\columnwidth][l]{State-gated coupling}%
\makebox[0.09\columnwidth][l]{\eqref{eqn:gimc}}%
\makebox[0.20\columnwidth][r]{$\mathbb{R}^{\card{\mathcal{I}}\times\card{\mathcal{I}}}$}

\vspace{0.4em}
\hrule
\vspace{0.4em}

\noindent
\makebox[0.20\columnwidth][l]{$\phiisi$}%
\makebox[0.50\columnwidth][l]{Baseline intention intensity}%
\makebox[0.09\columnwidth][l]{\eqref{eqn:intentionselection}}%
\makebox[0.20\columnwidth][r]{$[0,\infty)$}

\vspace{0.4em}
\hrule
\vspace{0.4em}

\noindent
\makebox[0.20\columnwidth][l]{$\zii(k)$}%
\makebox[0.50\columnwidth][l]{Intention intensity}%
\makebox[0.09\columnwidth][l]{\eqref{eqn:intentionselection}}%
\makebox[0.20\columnwidth][r]{$\left[ 0, \ziimax \right]$}

\vspace{0.4em}
\hrule
\vspace{0.4em}

\noindent
\makebox[0.20\columnwidth][l]{$\phiasi(k)$}%
\makebox[0.50\columnwidth][l]{Baseline action threshold}%
\makebox[0.09\columnwidth][l]{\eqref{eqn:decisionmakingfasi}}%
\makebox[0.20\columnwidth][r]{$[0,\infty)$}

\vspace{0.4em}
\hrule
\vspace{0.4em}

\noindent
\makebox[0.20\columnwidth][l]{$\zai(k)$}%
\makebox[0.50\columnwidth][l]{Action intensity}%
\makebox[0.09\columnwidth][l]{\eqref{eqn:decisionmakingfasi}}%
\makebox[0.20\columnwidth][r]{$(0,1)$}

\vspace{0.4em}
\hrule

\end{minipage}
\end{table}

\section{Stability analysis}
\label{sec:stabilityanalysis}
\noindent
This section establishes boundedness, regularity, and input–state stability properties of the proposed modular model. Since the internal states represent latent cognitive variables, admissible sensory perturbations should propagate through the perception–cognition–decision pipeline without inducing unbounded internal responses.

The results are sufficient conditions for stable model operation, not global guarantees for all parameter choices. The model may still exhibit strongly amplified or unstable behavior under parameter regimes intended to represent dysregulated responses, high uncertainty, or conflicting internal states. The conditions below identify operating regimes in which the model remains suitable for bounded latent-state prediction and model-based control.

The analysis proceeds compositionally. We first bound the perception-to-\gls{npe} mapping, then analyze the cognition dynamics, and finally propagate the resulting bounds through intention formation and action selection. Proofs are given in \ref{appendix}.

\subsection{Perception module}
\noindent
We first establish well-posedness, boundedness, and regularity of the perception module under admissible sensory inputs. The results show that attentional selection is Lipschitz, that the \gls{npe} state set is forward invariant, and that the predictive inference recursion is both contractive under constant selected evidence and Lipschitz as a one-step update.

Lemma~\ref{lemma:attentionalselectionsinglestep} shows that the attentional selection mapping is Lipschitz in the sensory input, ensuring that small input perturbations cannot produce abrupt changes in attentional weights.

\begin{lemma}\textbf{Single-step Lipschitz property of attentional selection:} 
\label{lemma:attentionalselectionsinglestep} 
    Under Assumption~\ref{ass:attentionalselection}, for each channel $\ell \in \mathcal{L}$, the attentional selection weight mapping \eqref{eqn:perceptualaccessgeneral} is Lipschitz continuous on the compact set $\uset$. Hence, there exists a finite constant $\latsl > 0$, such that for all $\bu_1,\; \bu_2 \in \uset$, 
    \begin{align}
    \label{eqn:lemmalipschitzattentionalselection}
        \abs{\zasl(\bu_1)-\zasl(\bu_2)} \leq  \latsl  
        \norm{\bu_1-\bu_2}.
    \end{align}
\end{lemma}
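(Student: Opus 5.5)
The plan is to treat $\zasl$ as a quotient of two Lipschitz functions whose numerator is bounded and whose denominator is bounded away from zero. The standard difference-of-quotients bound then gives the claim.

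\textbf{Lipschitz constants of the pieces.} By Assumption~\ref{ass:attentionalselection}, $\fasl$ is $C^1$ on an open neighborhood of the compact interval $\ulset$. Its derivative is therefore bounded on $\ulset$, and since $\ulset$ is convex, the mean value theorem gives a constant $L_F\ge 0$ with $\abs{\fasl(u_{1,\ell})-\fasl(u_{2,\ell})}\le L_F\abs{u_{1,\ell}-u_{2,\ell}}\le L_F\norm{\bu_1-\bu_2}$. Here I use that a coordinate difference is dominated by the Euclidean norm. The same argument applies to $\gasl$ on the compact convex set $\uset$: with $\nabla\gasl$ bounded, it yields $L_G\ge0$ with $\abs{\gasl(\bu_1)-\gasl(\bu_2)}\le L_G\norm{\bu_1-\bu_2}$. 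The assumption already asserts these Lipschitz properties, so this step only records the constants. Compactness of $\uset$ together with continuity of $\gasl$ also gives a finite bound $\overline{G}\coloneqq\max_{\uset}\gasl$, although the estimate below turns out not to need it.

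\textbf{Quotient estimate.} Write $F_j\coloneqq\fasl(u_{j,\ell})$ and $G_j\coloneqq\gasl(\bu_j)$ for $j=1,2$, and decompose
\[
\frac{F_1}{G_1}-\frac{F_2}{G_2}=\frac{F_1-F_2}{G_1}+F_2\,\frac{G_2-G_1}{G_1G_2}.
\]
Three facts control this expression: $G_j\ge\gaslmin>0$, which follows from $\betaasl>0$ and the nonnegativity of the pooling terms in \eqref{eqn:attentionalselectionG}; $0\le F_2\le\faslmax$; and the two Lipschitz bounds above. Together they give
\[
\abs{\zasl(\bu_1)-\zasl(\bu_2)}\le\Big(\frac{L_F}{\gaslmin}+\frac{\faslmax L_G}{(\gaslmin)^2}\Big)\norm{\bu_1-\bu_2}.
\]
I will then set $\latsl$ equal to the bracketed constant, or to any strictly larger value so that $\latsl>0$ holds even if both $L_F$ and $L_G$ vanish.

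\textbf{Main point of care.} The only delicate step is the uniform positive lower bound on the denominator. This is exactly what the semi-saturation constant provides, and it is where $\gaslmin$ enters, for example $\gaslmin\ge\betaasl$. Everything else is routine bookkeeping.
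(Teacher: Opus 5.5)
Your argument is correct, but it takes a different route from the paper's.

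\textbf{The paper's route.} The paper works with the quotient as a whole. From Assumption~\ref{ass:attentionalselection} and $\gasl\ge\gaslmin>0$, it uses the quotient rule to conclude that $\zasl$ is continuously differentiable on a neighborhood of $\uset$. It then applies the mean value theorem along the segment $[\bu_1,\bu_2]$, which lies in $\uset$ by convexity. Finally it sets $\latsl=\sup_{\bu\in\uset}\norm{\nabla_{\bu}\zasl(\bu)}$, which is finite by compactness.

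\textbf{Your route.} You split the difference of quotients algebraically. This needs only Lipschitz continuity of $\fasl$ and $\gasl$, together with $0\le\fasl\le\faslmax$ and $\gasl\ge\gaslmin>0$. Once you take the Lipschitz property directly from the assumption, neither differentiability nor convexity of $\uset$ is used. You also get the closed-form constant $L_F/\gaslmin+\faslmax L_G/(\gaslmin)^2$. This shows explicitly that raising the normalization floor (for instance, increasing $\betaasl$) lowers the Lipschitz bound.

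\textbf{Comparison.} The paper's constant is less explicit, but it is never larger than yours when $L_F$ and $L_G$ are the derivative bounds you extract via the mean value theorem. This is because, pointwise, $\norm{\nabla_{\bu}\zasl}\le\abs{(\fasl)'}/\gasl+\fasl\norm{\nabla_{\bu}\gasl}/(\gasl)^2$. Your enlargement of the constant to guarantee $\latsl>0$ also handles a degenerate case the paper leaves open, since its supremum could be zero.

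\textbf{One small imprecision.} $\gaslmin$ is only the declared lower end of the codomain of $\gasl$, so $\gaslmin\ge\betaasl$ need not hold. What $\betaasl>0$ and the nonnegative pooling terms guarantee is that $\betaasl$ is an admissible positive choice for $\gaslmin$. That is all your estimate needs.
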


Proposition~\ref{prop:attentionalselectiontrajectory} lifts this single-step bound to a trajectory-wise bound, demonstrating that the cumulative variation of attentional selection along an input trajectory is controlled by the total variation of sensory inputs.

\begin{proposition}\textbf{Trajectory-wise \gls{io} bound of attentional selection:} 
\label{prop:attentionalselectiontrajectory}
    Suppose that Assumption~\ref{ass:attentionalselection} holds, and let $\latsl$ be the Lipschitz constant from Lemma~\ref{lemma:attentionalselectionsinglestep}. Then, for any admissible input trajectory and any $k_2\ge k_1$ the trajectory-wise \gls{io} variation of the attentional selection is bounded by  
    \begin{align}
    \label{eqn:propattnselection1}
        \sum_{i=k_1}^{k_2-1}\abs{\zasl(\bu(i+1))-\zasl(\bu(i))}\le  \latsl \sum_{i=k_1}^{k_2-1}\norm{\bu(i+1) - \bu(i)}.
    \end{align}
\end{proposition}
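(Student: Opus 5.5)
The plan is to derive the trajectory-wise bound directly from the single-step Lipschitz estimate of Lemma~\ref{lemma:attentionalselectionsinglestep} by summing over consecutive time steps.

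First, I fix an admissible input trajectory $\{\bu(i)\}_{i\ge 0}$, so that $\bu(i)\in\uset$ for every $i$. I also fix indices $k_2\ge k_1$. If $k_2=k_1$, both sums in \eqref{eqn:propattnselection1} are empty and equal to zero, so the inequality holds trivially. For $k_2>k_1$, I consider each index $i\in\{k_1,\dots,k_2-1\}$ separately. Both $\bu(i)$ and $\bu(i+1)$ lie in the compact set $\uset$, so Lemma~\ref{lemma:attentionalselectionsinglestep} applies with $\bu_1=\bu(i+1)$ and $\bu_2=\bu(i)$. This gives
\begin{align*}
\abs{\zasl(\bu(i+1))-\zasl(\bu(i))}\le \latsl\norm{\bu(i+1)-\bu(i)}.
\end{align*}

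Second, I sum these $k_2-k_1$ inequalities over $i=k_1,\dots,k_2-1$. Finite sums preserve inequalities between nonnegative terms. The constant $\latsl$ does not depend on $i$ or on the trajectory, since it is a uniform Lipschitz constant on all of $\uset$. It can therefore be pulled out of the right-hand sum, which yields \eqref{eqn:propattnselection1} directly.

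There is no substantive obstacle in this argument. The only point needing care is that Lemma~\ref{lemma:attentionalselectionsinglestep} supplies a \emph{uniform} constant on all of $\uset$, rather than a local one depending on the pair of inputs. That uniformity is guaranteed by Assumption~\ref{ass:attentionalselection} together with compactness of $\uset$, and it is exactly what allows the same factor $\latsl$ to be used at every step of the sum.
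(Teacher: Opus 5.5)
Your proposal is correct and follows the paper's proof: apply Lemma~\ref{lemma:attentionalselectionsinglestep} to each consecutive pair $\bu(i),\bu(i+1)\in\uset$, then sum over $i=k_1,\dots,k_2-1$. This works because $\latsl$ is a single constant valid on all of $\uset$, so the same factor can be pulled out of every term.
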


Lemma~\ref{lem:npe_forward_invariance} guarantees forward invariance of the \gls{npe} admissible set, so that \glspl{npe} remain within a cognitively meaningful bounded range under all admissible inputs.

\begin{lemma}\textbf{Forward invariance of the \gls{npe} state set $\xnpelset$:} 
\label{lem:npe_forward_invariance}
    Let $\xnpelset \coloneqq \left[0,\zaslmax \right]$, with $\zaslmax$ a uniform upper bound for $\zasl(\cdot)$.  
    If $\xnpel(0) \in \xnpelset$, then $\xnpel(k)$
    remains in $\xnpelset$ for all $k>0$.
\end{lemma}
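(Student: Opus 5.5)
The argument is an induction on $k$ built on the convex-combination form \eqref{eqn:nperecursionconvex} of the recursion \eqref{eq:NPE_recursion}, stated in Remark~\ref{remark:admissiblesetchoicexnpel}. I will first confirm that form explicitly. Substituting the prediction error \eqref{eqn:predictiveinferenceg} into \eqref{eq:NPE_recursion} gives
\[
\xnpel(k) = \bigl(1-\alphapil(k)\bigr)\xnpel(k-1) + \alphapil(k)\,\zasl(\bu(k)).
\]
By Definition~\ref{def:predictiveinference_alpha}, $\alphapil(k)=\fpil(\cdot)\in[0,1)$, so both weights $1-\alphapil(k)\in(0,1]$ and $\alphapil(k)\in[0,1)$ are non-negative and sum to one.

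The base case $k=0$ is the hypothesis $\xnpel(0)\in\xnpelset$. For the inductive step, I will assume $\xnpel(k-1)\in[0,\zaslmax]$. Definition~\ref{def:perceptualaccess} gives $\zasl(\bu(k))\in[0,\zaslmax]$ for every admissible $\bu(k)\in\uset$. This holds because $\fasl\ge 0$, $\fasl\le\faslmax$, and $\gasl\ge\gaslmin>0$; the last bound is guaranteed by $\betaasl>0$, $\gammaaslm\ge0$ and $\gasmtilde\ge0$, so the ratio is well-defined and bounded by $\faslmax/\gaslmin=\zaslmax$. Hence $\xnpel(k)$ is a convex combination of two points of the interval $[0,\zaslmax]$. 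Since an interval is convex, $\xnpel(k)\in[0,\zaslmax]$.

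Explicitly, the lower bound follows because both terms are non-negative. The upper bound follows from
\[
\xnpel(k)\le(1-\alphapil(k))\zaslmax+\alphapil(k)\zaslmax=\zaslmax.
\]
Induction then yields the claim for all $k>0$.

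There is no real obstacle here. The only point needing care is that the bound on $\zasl$ must hold uniformly in $k$ and for every admissible input. It must also hold whatever the time-varying precisions and gains $\phipil(k),\thetapil(k),\chipil(k)$ are. Both follow directly from the codomain restrictions in Definitions~\ref{def:perceptualaccess} and~\ref{def:predictiveinference_alpha}, so no assumption on the input trajectory is needed.
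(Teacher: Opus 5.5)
Your proof is correct and matches the paper's argument. The paper also writes the update in the convex-combination form \eqref{eqn:nperecursionconvex}, notes that both $\xnpel(k-1)$ and $\zasl(\bu(k))$ lie in $[0,\zaslmax]$, and concludes by induction.
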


Lemma~\ref{lem:npe_contraction_constant} characterizes exponential convergence of the \gls{npe} towards the steady-state perceptual estimate under a constant attentional selection weight. This shows that predictive inference removes dependence on the initial perceptual estimate under stationary selected evidence rather than accumulating error.

\begin{lemma} \textbf{Contraction of the \gls{npe} under constant input:}
\label{lem:npe_contraction_constant} 
    Let $\bar{\bu}$ be a constant input with corresponding attentional selection weight $\zasl(\bar{\bu})$.  
    If the effective update weights in \eqref{eq:NPE_recursion} are uniformly bounded away from zero, i.e., 
    $\alphapil(k) \geq \underline{\alphapil}>0$, then the \gls{npe} tracking error 
    \begin{align}
    \label{eqn:npetrackingerror}
        \enpe_\ell(k)\coloneqq \xnpel(k)-\zasl(\bar{\bu})
    \end{align}
    decays geometrically, i.e., converges exponentially to zero:  
    \begin{align*}
        \left|\enpe_\ell(k)\right| \le \left(1 - \underline{\alphapil}\right)^k \left|\enpe_\ell(0)\right|.
    \end{align*}
\end{lemma}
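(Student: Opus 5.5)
The plan is to rewrite the recursion \eqref{eq:NPE_recursion} directly in terms of the tracking error \eqref{eqn:npetrackingerror} and show that it is a scalar linear time-varying recursion with contraction factor $1-\alphapil(k)$. Under the constant input $\bu(k)=\bar{\bu}$ for all $k$, the attentional selection weight $\zasl(\bu(k))=\zasl(\bar{\bu})$ is constant. Using the convex-combination form \eqref{eqn:nperecursionconvex} from Remark~\ref{remark:admissiblesetchoicexnpel}, we subtract $\zasl(\bar{\bu})$ from both sides and write $\zasl(\bar{\bu}) = (1-\alphapil(k))\zasl(\bar{\bu}) + \alphapil(k)\zasl(\bar{\bu})$. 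This gives
\begin{align*}
\enpe_\ell(k) = \left(1-\alphapil(k)\right)\enpe_\ell(k-1).
\end{align*}
The evidence term cancels exactly precisely because the selected evidence is stationary.

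Next we take absolute values. Since $\fpil$ maps into $[0,1)$ (Definition~\ref{def:predictiveinference_alpha}) and, by hypothesis, $\alphapil(k)\ge\underline{\alphapil}>0$, we have $0<1-\alphapil(k)\le 1-\underline{\alphapil}<1$. Hence $\abs{\enpe_\ell(k)}\le(1-\underline{\alphapil})\abs{\enpe_\ell(k-1)}$ for every $k\ge1$. A straightforward induction on $k$, or equivalently unrolling the product $\prod_{j=1}^{k}(1-\alphapil(j))$ and bounding each factor, then yields $\abs{\enpe_\ell(k)}\le(1-\underline{\alphapil})^k\abs{\enpe_\ell(0)}$. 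Because $1-\underline{\alphapil}\in(0,1)$, this bound converges geometrically, i.e., exponentially, to zero.

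There is no real obstacle in this argument. The only points needing care are bookkeeping ones. First, the indexing convention must be kept consistent, with $\enpe_\ell(0)$ defined from $\xnpel(0)$ and the first update using $\alphapil(1)$. Second, the factor $1-\alphapil(k)$ must be nonnegative, so that its absolute value equals itself; this is guaranteed by the codomain $[0,1)$ of $\fpil$. Forward invariance from Lemma~\ref{lem:npe_forward_invariance} is not needed, but it confirms that the iterates stay well-defined in $\xnpelset$. We also remark that the lower bound $\underline{\alphapil}$ is essential: by the vanishing conditions \eqref{eq:perception_vanishing}, the weight $\alphapil(k)$ can approach zero, and then the contraction would be lost.
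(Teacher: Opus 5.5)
Your proposal is correct and follows the same route as the paper's proof. Both rewrite the convex-combination update \eqref{eqn:nperecursionconvex} under constant input as $\enpe_\ell(k)=(1-\alphapil(k))\enpe_\ell(k-1)$, then iterate and bound each factor by $1-\underline{\alphapil}$. Your product indexing $\prod_{j=1}^{k}(1-\alphapil(j))$ matches this recursion; the paper writes $\prod_{i=0}^{k-1}$, which does not change the bound.
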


The condition $\alphapil(k)\geq\underline{\alphapil}>0$ is a persistence-of-updating condition. If $\alphapil(k)$ can remain zero for arbitrarily long intervals, the recursion remains bounded, but exponential convergence to the constant attentional-selection target is not guaranteed.

Finally, Lemma~\ref{lem:npe_update_lipschitz} establishes that the one-step \gls{npe} update is Lipschitz in both the sensory input and the current \gls{npe} state, which ensures that the perceptual recursion responds smoothly to admissible perturbations. 

\begin{lemma}\textbf{Lipschitz continuity of the \gls{npe} update mapping:} 
\label{lem:npe_update_lipschitz}
    For channel $\ell\in\mathcal{L}$ and a fixed time step $k$, define the one-step \gls{npe} update induced by \eqref{eqn:predictiveinferenceg} and \eqref{eq:NPE_recursion} via the mapping  
    $T_{\ell} : \uset \times \xnpelset \to \xnpelset$ given by:
    \begin{align}
    \label{eq:def_Tlk_npe}
        T_{\ell}(\bu, \xi) \coloneqq \xi + \alphapil(k) \left(\zasl(\bu) - \xi \right),
    \end{align}
    where $\xi \in \xnpelset$. If $\zasl(\cdot)$ is Lipschitz on $\uset$ with constant $\latsl$ as per Lemma~\ref{lemma:attentionalselectionsinglestep}, then $T_{\ell}(\cdot,\cdot)$ is Lipschitz on $\uset \times \xnpelset$ with respect to the product norm
    $\norm{\left(\bu, \xi \right)} \coloneqq \norm{\bu} + \abs{\xi}$. 
\end{lemma}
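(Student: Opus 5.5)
The argument is a direct estimate using the convex-combination form \eqref{eqn:nperecursionconvex} of the update. For fixed $k$, the weight $\alphapil(k)$ is a constant in $[0,1)$ that does not depend on $(\bu,\xi)$. So we may write
\begin{align*}
T_\ell(\bu,\xi) = (1-\alphapil(k))\,\xi + \alphapil(k)\,\zasl(\bu).
\end{align*}
Take two points $(\bu_1,\xi_1),(\bu_2,\xi_2)\in\uset\times\xnpelset$ and subtract. The difference splits into a state part and an input part:
\begin{align*}
T_\ell(\bu_1,\xi_1)-T_\ell(\bu_2,\xi_2) = (1-\alphapil(k))(\xi_1-\xi_2) + \alphapil(k)\bigl(\zasl(\bu_1)-\zasl(\bu_2)\bigr).
\end{align*}

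Next, apply the triangle inequality, use $0\le 1-\alphapil(k)\le 1$ and $0\le\alphapil(k)<1$, and invoke Lemma~\ref{lemma:attentionalselectionsinglestep} on the input part. This gives
\begin{align*}
\abs{T_\ell(\bu_1,\xi_1)-T_\ell(\bu_2,\xi_2)} \le (1-\alphapil(k))\abs{\xi_1-\xi_2} + \alphapil(k)\latsl\norm{\bu_1-\bu_2}.
\end{align*}
Bounding both coefficients by $L_T\coloneqq\max\{1,\latsl\}$ yields
\begin{align*}
\abs{T_\ell(\bu_1,\xi_1)-T_\ell(\bu_2,\xi_2)} \le L_T\bigl(\norm{\bu_1-\bu_2}+\abs{\xi_1-\xi_2}\bigr),
\end{align*}
which is Lipschitz continuity with respect to the stated product norm. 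The sharper, $k$-dependent constant $\max\{1-\alphapil(k),\,\alphapil(k)\latsl\}$ also works. The constant $L_T$ is uniform in $k$, since it does not involve $\alphapil(k)$ at all.

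For completeness, we also confirm that $T_\ell$ maps into $\xnpelset$, as the statement asserts. This follows because $T_\ell(\bu,\xi)$ is a convex combination of $\xi\in[0,\zaslmax]$ and $\zasl(\bu)\in[0,\zaslmax]$; equivalently, it is the one-step content of Lemma~\ref{lem:npe_forward_invariance}.

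There is no real obstacle here. The only point needing care is that $\alphapil(k)$ must be treated as independent of the arguments $(\bu,\xi)$. If it were allowed to depend on the state or the input, an additional term would appear, controlled by the Lipschitz constant of $\fpil$ times the bounded prediction error $\abs{\zasl(\bu)-\xi}\le\zaslmax$. Under Definition~\ref{def:predictiveinference_alpha}, however, $\alphapil(k)$ depends only on the exogenous precisions and the attention gain, so that term does not arise.
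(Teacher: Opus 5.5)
Your proof is correct and follows the same route as the paper. You split $T_\ell(\bu_1,\xi_1)-T_\ell(\bu_2,\xi_2)$ into $(1-\alphapil(k))(\xi_1-\xi_2)+\alphapil(k)\bigl(\zasl(\bu_1)-\zasl(\bu_2)\bigr)$, then apply the triangle inequality and Lemma~\ref{lemma:attentionalselectionsinglestep}; your explicit $k$-uniform constant $\max\{1,\latsl\}$ and the check that $T_\ell$ maps into $\xnpelset$ are small additions that the paper leaves implicit.
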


\subsection{Cognition module}
\noindent
The cognition module yields three complementary properties that collectively underpin the stability of the overall coupled model. 

Lemma~\ref{lem:uniformselfinhibition} provides uniform upper and lower bounds on the self-inhibition coefficients for the cognition module, 
i.e., the input-modulated terms that govern the intrinsic damping of each state 
over the admissible domain. Therefore, it quantifies the weakest and strongest intrinsic damping that the cognition dynamics can exhibit under admissible perceptual conditions. 

\begin{lemma}\textbf{Uniform self-inhibition bounds for cognition}:
\label{lem:uniformselfinhibition}
    Let $\Lambda_i \coloneqq \left[ \Lambda_{i\ell} \right]_{\ell \in \mathcal{L}}$ denote the vector of perceptual coupling coefficients affecting 
    cognitive state $i$, with $\Lambda_{i\ell}$ as in \eqref{eqn:cognitionmodulefiic}, and let
    $\xnpebar \coloneqq \max_{\boldsymbol{\xi} \in \xnpeset} \norm{\boldsymbol{\xi}}$.
    Define the constants
    $\underline{\Lambda} \coloneqq \kappa \exp \left( \min_{i \in \mathcal{I}} \left( \gamma_i - \norm{\Lambda_i}
    \xnpebar \right) \right)$ and $\overline{\Lambda} \coloneqq \kappa \exp \left( \max_{i \in \mathcal{I}} \left( \gamma_i + \norm{\Lambda_i}
    \xnpebar \right) \right)$, where $\kappa > 0$ and $\gamma_i$ are as in \eqref{eqn:cognitionmodulefiic}.
    Then, for all $k$ and $i \in \mathcal{I}$,
    \begin{align*}
        -\overline{\Lambda} \leq \fiic \left(\bxnpe(k)\right) \leq -\underline{\Lambda}.
    \end{align*}
\end{lemma}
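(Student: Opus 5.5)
The plan is to bound the exponent in \eqref{eqn:cognitionmodulefiic} uniformly over the admissible perceptual set, and then push these bounds through the monotone map $s\mapsto -\kappa\exp(s)$.

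First, Lemma~\ref{lem:npe_forward_invariance} guarantees that $\xnpel(k)\in\xnpelset=[0,\zaslmax]$ for every $\ell$ and $k$, provided the initial condition is admissible. Hence $\bxnpe(k)\in\xnpeset=\prod_{\ell}\xnpelset$ for all $k$. This set is a compact box, so the maximum defining $\xnpebar$ is attained and finite.

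Next, fix $i\in\mathcal{I}$ and write the exponent as $\gamma_i+\Lambda_i^\top\bxnpe(k)$. By Cauchy--Schwarz,
\[
\abs{\Lambda_i^\top\bxnpe(k)}\le\norm{\Lambda_i}\,\norm{\bxnpe(k)}\le\norm{\Lambda_i}\,\xnpebar ,
\]
which gives
\[
\gamma_i-\norm{\Lambda_i}\xnpebar\;\le\;\gamma_i+\textstyle\sum_{\ell}\xnpel(k)\Lambda_{i\ell}\;\le\;\gamma_i+\norm{\Lambda_i}\xnpebar .
\]
Taking the minimum over $i$ on the left and the maximum over $i$ on the right makes both bounds independent of $i$ and of $k$.

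Finally, $\exp$ is strictly increasing and $\kappa>0$, so
\[
\underline{\Lambda}\;\le\;\kappa\exp\!\Big(\gamma_i+\textstyle\sum_\ell\xnpel(k)\Lambda_{i\ell}\Big)\;\le\;\overline{\Lambda}.
\]
Multiplying by $-1$ reverses the inequalities and yields $-\overline{\Lambda}\le\fiic(\bxnpe(k))\le-\underline{\Lambda}$. Note also that $\underline{\Lambda}>0$, since the exponential is positive, so the self-term is a strict leak.

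No step is a real obstacle. The only point needing care is the first one: the uniform bound holds only because the perceptual state stays in the bounded set $\xnpeset$, which requires the admissibility assumption $\bxnpe(0)\in\xnpeset$ used in Lemma~\ref{lem:npe_forward_invariance}. I would state this assumption explicitly. The finiteness of $\norm{\Lambda_i}$ follows directly from the assumed bounds $\abs{\Lambda_{i\ell}}\le\overline{\Lambda_\ell}$.
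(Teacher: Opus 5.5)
Your proposal is correct and matches the paper's proof essentially step for step: you bound the exponent $\gamma_i+\langle\Lambda_i,\bxnpe(k)\rangle$ between $\gamma_i-\norm{\Lambda_i}\xnpebar$ and $\gamma_i+\norm{\Lambda_i}\xnpebar$ via Cauchy--Schwarz (the paper calls it H\"older), then use the monotonicity of $\exp$ and multiplication by $-\kappa<0$. Your explicit appeal to Lemma~\ref{lem:npe_forward_invariance}, with $\bxnpe(0)\in\xnpeset$, to justify $\norm{\bxnpe(k)}\le\xnpebar$ makes precise a step the paper takes for granted.
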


Proposition~\ref{prop:cognition_forward_invariance} combines these uniform bounds with induced norm bounds on the coupling matrices to establish forward invariance of a bounded cognition set. In particular, if a cognition state is initialized 
within a cognitively meaningful range, the internal couplings cannot drive it outside that range under admissible \gls{npe} trajectories. 

\begin{proposition}\textbf{Forward invariance of the cognition state set:}
\label{prop:cognition_forward_invariance}
    Assume that
    \begin{itemize}[leftmargin=*]
        \item the \gls{npe} state set $\xnpeset$ is bounded, so that $\norm{\bxnpe(k)} \leq \xnpebar$ for all $k$;
        \item the cognition self-inhibition satisfies uniform bounds of Lemma~\ref{lem:uniformselfinhibition};
        \item induced matrix norms in \eqref{eq:coupling_matrices_cognition} are bounded, i.e., $\norm{\Phi} \leq \overline{\Phi}$, $\norm{\Psi_{\ell}} \leq \overline{\Psi_{\ell}}$, $\norm{\Xi_q} \leq \overline{\Xi_q}$, and $\norm{\Theta} \leq \overline{\Theta}$; also define 
        $\overline{\Xi} \coloneqq \sum_{q \in \mathcal{I}} \overline{\Xi_{q}}$ and $
        \overline{\Psi} \coloneqq \sup_{\boldsymbol{\xi} \in \xnpeset} \sum_{\ell \in \mathcal{L}} \abs{\xi_\ell} \norm{\Psi_{\ell}}$;
        \item the discrete-time step size $\Delta t$ used in the forward-Euler approximation satisfies $\Delta t \in \left(0, \tfrac{1}{\overline{\Lambda}} \right]$;
        \item there exists $R\in[0,\infty)$ such that
        \begin{align}
        \label{eqn:cognitionmodulelemmaquadraticinequality}
        \max_{0\le \norm{\bx(k)}\le R}
        \Bigg[&
        \norm{\bx(k)} +\Delta t
        \bigg(
        \overline{\Xi} \norm{\bx(k)}^2
        + \nonumber \\
        &
        \left(
        \overline{\Phi}+\overline{\Psi}-\underline{\Lambda}
        \right) \norm{\bx(k)}
        + \overline{\Theta}\;\xnpebar
        \bigg)
        \Bigg]
        \le R.
        \end{align}
    \end{itemize}
    Then the following set of cognitive states is forward invariant:
    \begin{align}
    \label{eq:forward_invariant_XR}
        \mathcal{X}_{R} = \Big\{ \bx : \norm{\bx} \leq R \Big\}.
    \end{align}
\end{proposition}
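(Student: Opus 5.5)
The plan is an induction on $k$. Assume $\norm{\bx(k)}\le R$; we show $\norm{\bx(k+1)}\le R$. The base case is the hypothesis $\bx(0)\in\mathcal{X}_R$. Lemma~\ref{lem:npe_forward_invariance} guarantees $\bxnpe(k)\in\xnpeset$ for all $k$, so the bounds $\norm{\bxnpe(k)}\le\xnpebar$ and the self-inhibition bounds of Lemma~\ref{lem:uniformselfinhibition} hold along the trajectory. First rewrite \eqref{eqn:dcmnonlinear} in vector form:
\begin{align*}
\bx(k+1) = \big(I+\Delta t\,\fc(\bxnpe(k))\big)\bx(k) + \Delta t\, C(k)\,\bx(k) + \Delta t\,\Theta\,\bxnpe(k),
\end{align*}
where
\begin{align*}
C(k) \coloneqq \Phi + \sum_{\ell\in\mathcal{L}}\xnpel(k)\Psi_\ell + \sum_{q\in\mathcal{I}}x_q(k)\Xi_q .
\end{align*}
Hollowness of $\Phi$, $\Psi_\ell$ and $\Xi_q$ ensures that the sum over $m\neq i$ in \eqref{eqn:dcmnonlinear} equals the full matrix-vector product $C(k)\bx(k)$, because the diagonal contributions vanish.

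Next, bound each term with the triangle inequality. For the self term, $\fc$ is diagonal with entries in $[-\overline{\Lambda},-\underline{\Lambda}]$ by Lemma~\ref{lem:uniformselfinhibition}. Since $\Delta t\le 1/\overline{\Lambda}$, every diagonal entry of $I+\Delta t\,\fc$ lies in $[0,1-\Delta t\,\underline{\Lambda}]$. The spectral norm of a diagonal matrix is its largest absolute entry, so
\begin{align*}
\norm{I+\Delta t\,\fc}\le 1-\Delta t\,\underline{\Lambda}.
\end{align*}
This is exactly where the step-size condition is needed: it prevents overshoot past zero, which would make the carry-over factor exceed one in absolute value.

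For the coupling term, $\norm{C(k)}\le \overline{\Phi}+\sum_\ell\abs{\xnpel(k)}\norm{\Psi_\ell}+\sum_q\abs{x_q(k)}\norm{\Xi_q}$. The middle sum is at most $\overline{\Psi}$ by its definition as a supremum over $\xnpeset$. In the last sum, $\abs{x_q(k)}\le\norm{\bx(k)}$, so it is at most $\overline{\Xi}\,\norm{\bx(k)}$. The drive term satisfies $\norm{\Theta\bxnpe(k)}\le\overline{\Theta}\,\xnpebar$.

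Collecting these bounds gives
\begin{align*}
\norm{\bx(k+1)}\le \norm{\bx(k)}+\Delta t\Big(\overline{\Xi}\norm{\bx(k)}^2+(\overline{\Phi}+\overline{\Psi}-\underline{\Lambda})\norm{\bx(k)}+\overline{\Theta}\,\xnpebar\Big).
\end{align*}
The right-hand side is a function of $r=\norm{\bx(k)}\in[0,R]$. By \eqref{eqn:cognitionmodulelemmaquadraticinequality}, its maximum over $[0,R]$ is at most $R$, so $\norm{\bx(k+1)}\le R$ and the induction closes.

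I expect the main obstacle to be the self-term norm bound, which requires the sign and step-size argument above. A secondary point is being careful that $\overline{\Psi}$ and $\overline{\Xi}$ are used consistently with the norms in the statement. Using $\abs{x_q}\le\norm{\bx}$ for each $q$ is crude but sufficient, because $\overline{\Xi}$ is defined as the sum of the individual norm bounds $\overline{\Xi_q}$.
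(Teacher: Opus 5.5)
Your proposal is correct and follows essentially the same argument as the paper: the vector form of \eqref{eqn:dcmnonlinear}, the step-size bound $\norm{I+\Delta t\,\fc(\bxnpe(k))}\le 1-\Delta t\,\underline{\Lambda}$, the sub-additive bound $\overline{\Phi}+\overline{\Psi}+\overline{\Xi}\norm{\bx(k)}$ on the coupling matrix, and the dissipativity inequality \eqref{eqn:cognitionmodulelemmaquadraticinequality} to conclude $\norm{\bx(k+1)}\le R$. The paper leaves two details implicit that you state explicitly: hollowness, which lets you replace the sum over $m\neq i$ by a full matrix--vector product, and the induction on $k$; neither changes the route.
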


The assumptions in Proposition~\ref{prop:cognition_forward_invariance} are sufficient for forward invariance. The key requirement is the dissipativity condition \eqref{eqn:cognitionmodulelemmaquadraticinequality}, which ensures that self-inhibition dominates the combined effects of baseline coupling, input-gated coupling, endogenous state-gated amplification, and bounded perceptual drive on $\mathcal{X}_R$. The following corollary provides explicit conditions under which this norm-ball certificate is feasible.

\begin{corollary}
\textbf{Explicit feasibility conditions for cognition-state invariance.}
\label{cor:cognition_forward_invariance_explicit}
Under the assumptions of Proposition~\ref{prop:cognition_forward_invariance},
define
\begin{align*}
    \sigma \coloneqq \overline{\Phi}+\overline{\Psi}-\underline{\Lambda}, \quad 
    \tau \coloneqq \overline{\Theta}\xnpebar.
\end{align*}
If $\overline{\Xi}=0$, a feasible invariant radius exists whenever $\sigma<0$, and any radius $R$ satisfying
\begin{align*}
R
\geq
\max\left\{
\Delta t\tau,
\frac{\tau}{-\sigma}
\right\}
\end{align*}
is admissible. 
If $\overline{\Xi}>0$, define
\begin{align*}
R_{\pm} \coloneqq
\frac{-\sigma\pm\sqrt{\sigma^2-4\overline{\Xi} \tau}}{2\overline{\Xi}}.
\end{align*}
A feasible invariant radius exists if
\begin{align*}
\sigma<0, \quad
\sigma^2-4\overline{\Xi} \tau \geq0,\quad
\Delta t\tau\leq R_{+}.
\end{align*}
In that case, every radius $R$ satisfying
\begin{align*}
\max\left\{
\Delta t \tau,
R_{-}
\right\}
\leq R\leq R_{+}
\end{align*}
satisfies \eqref{eqn:cognitionmodulelemmaquadraticinequality}.
\end{corollary}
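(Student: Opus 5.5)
The approach is to reduce \eqref{eqn:cognitionmodulelemmaquadraticinequality} to a scalar condition on $r\coloneqq\norm{\bx(k)}\in[0,R]$. Define
\begin{align*}
h(r)\coloneqq r+\Delta t\left(\overline{\Xi}r^2+\sigma r+\tau\right)
=\Delta t\,\overline{\Xi}\,r^2+(1+\Delta t\sigma)\,r+\Delta t\tau .
\end{align*}
The condition then reads $\max_{r\in[0,R]}h(r)\le R$.

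When $\overline{\Xi}>0$, $h$ is a convex quadratic. Its maximum over the interval $[0,R]$ is therefore attained at an endpoint: $\max_{[0,R]}h=\max\{h(0),h(R)\}$. When $\overline{\Xi}=0$, $h$ is affine, so the same endpoint reduction holds. Hence the condition is equivalent to the pair
\begin{align*}
h(0)=\Delta t\tau\le R \quad\text{and}\quad h(R)\le R .
\end{align*}
This step is the whole trick. It removes the maximization and turns the certificate into two elementary inequalities.

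The two cases then follow directly. The inequality $h(R)\le R$ simplifies, after cancelling $R$ and dividing by $\Delta t>0$, to
\begin{align*}
q(R)\coloneqq\overline{\Xi}R^2+\sigma R+\tau\le 0 .
\end{align*}
\begin{itemize}[leftmargin=*]
\item \textbf{Case $\overline{\Xi}=0$.} Here $q(R)=\sigma R+\tau\le0$. Since $\tau\ge0$, this requires $\sigma<0$ (or $\tau=0$), and it is equivalent to $R\ge\tau/(-\sigma)$. Combined with $R\ge\Delta t\tau$, this gives the stated bound $R\ge\max\{\Delta t\tau,\tau/(-\sigma)\}$.
\item \textbf{Case $\overline{\Xi}>0$.} The upward parabola $q$ is nonpositive exactly on $[R_-,R_+]$ when the discriminant $\sigma^2-4\overline{\Xi}\tau$ is nonnegative. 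Since $\tau\ge0$, the product of the roots is $\tau/\overline{\Xi}\ge0$. The sum of the roots is $-\sigma/\overline{\Xi}>0$ when $\sigma<0$. Hence $0\le R_-\le R_+$, so the interval lies in $[0,\infty)$, which is needed because $R\ge0$. Intersecting $[R_-,R_+]$ with $R\ge\Delta t\tau$ gives the interval $[\max\{\Delta t\tau,R_-\},R_+]$. This interval is nonempty precisely when $\Delta t\tau\le R_+$, since $R_-\le R_+$ already holds. This yields both the feasibility conditions and the admissible range of $R$.
\end{itemize}

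The only delicate point is justifying the endpoint reduction. It relies on convexity, which needs $\Delta t\,\overline{\Xi}\ge0$, and in the affine case on linearity. It does not need the monotonicity of $h$, which could fail if $1+\Delta t\sigma<0$; this is why I avoid arguing that the maximum sits at $r=R$.

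Finally, I would note that every admissible $R$ satisfies \eqref{eqn:cognitionmodulelemmaquadraticinequality}. Proposition~\ref{prop:cognition_forward_invariance} then gives forward invariance of $\mathcal{X}_R$.
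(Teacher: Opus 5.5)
Your proposal is correct and follows the paper's proof essentially step for step: you use convexity (or affineness) of $r+\Delta t\,q(r)$ on $[0,R]$ to reduce the certificate to $\Delta t\tau\le R$ and $q(R)\le 0$, and then split into the cases $\overline{\Xi}=0$ and $\overline{\Xi}>0$. Your Vieta argument that $0\le R_-\le R_+$ (from $\tau\ge0$ and $\sigma<0$) is a small detail the paper leaves implicit.
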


This certificate is sufficient and may be conservative because it is obtained from uniform induced-norm bounds and the triangle inequality. Failure of the certificate does not imply that trajectories are unbounded.

Proposition~\ref{prop:cognition_iss_bound} strengthens this forward-invariance result by establishing an explicit input-to-state bound on $\mathcal{X}_R$. It characterizes how bounded perceptual inputs propagate through the cognition dynamics and identifies parameter regimes in which self-inhibition dominates recurrent and state-dependent amplification.

\begin{proposition}\textbf{Input-state bound on the forward invariant set $\mathcal{X}_{R}$:}
    \label{prop:cognition_iss_bound}
    Suppose that $\mathcal{X}_R$ is defined via \eqref{eq:forward_invariant_XR} under the conditions of Proposition~\ref{prop:cognition_forward_invariance} and is thus forward invariant. Define
    \begin{align*}
        \beta \coloneqq \Delta t \; \overline{\Theta}, \qquad \alpha_R \coloneqq 
        1 + \Delta t \left( \overline{\Xi} R + \overline{\Phi} + \overline{\Psi} - \underline{\Lambda} \right).
    \end{align*}
    If $0\leq \alpha_R < 1$, then for all $\bx(0) \in \mathcal{X}_R$ and all \gls{npe} sequences $\left\{\bxnpe(i) \right\}_{i\ge 0}$ (sequence of inputs to the cognition module)
satisfying $\xnpebartraj < \infty$, the cognition state satisfies 
    \begin{align}
        \label{eqn:iss_bound_optionA}
        \norm{\bx(k)} \leq \alpha_R^k \norm{\bx(0)} + \frac{\beta}{1 - \alpha_R} \xnpebartraj.
    \end{align}
\end{proposition}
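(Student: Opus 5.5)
The plan is to derive a one-step linear recursion for $\norm{\bx(k)}$ and then unroll it as a geometric series. First, write the update \eqref{eqn:dcmnonlinear} in vector form:
\begin{align*}
\bx(k+1) = \big(I + \Delta t\, \fc(\bxnpe(k))\big)\bx(k) + \Delta t\, \Big(\Phi + \sum_{\ell\in\mathcal L}\xnpel(k)\Psi_\ell + \sum_{q\in\mathcal I} x_q(k)\Xi_q\Big)\bx(k) + \Delta t\,\Theta\,\bxnpe(k).
\end{align*}
The hollowness of $\Phi$, $\Psi_\ell$, and $\Xi_q$ lets the sum over $m\neq i$ be written as the full matrix product. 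Then bound each term separately.

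The first term is the diagonal matrix $I+\Delta t\fc$. By Lemma~\ref{lem:uniformselfinhibition} its diagonal entries lie in $[1-\Delta t\overline{\Lambda},\,1-\Delta t\underline{\Lambda}]$. Since $\Delta t\le 1/\overline{\Lambda}$, these entries are nonnegative, so the spectral norm is at most $1-\Delta t\underline{\Lambda}$.

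For the coupling term, use the triangle inequality and the induced-norm bounds. This gives $\norm{\Phi}\le\overline{\Phi}$ and $\norm{\sum_\ell \xnpel(k)\Psi_\ell}\le\overline{\Psi}$ by the definition of $\overline{\Psi}$, because $\bxnpe(k)\in\xnpeset$ by Lemma~\ref{lem:npe_forward_invariance}. For the gated term, $\abs{x_q(k)}\le\norm{\bx(k)}\le R$ because $\mathcal X_R$ is forward invariant (Proposition~\ref{prop:cognition_forward_invariance}) and $\bx(0)\in\mathcal X_R$. Hence $\norm{\sum_q x_q(k)\Xi_q}\le R\,\overline{\Xi}$. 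The drive term is bounded by $\Delta t\,\overline{\Theta}\norm{\bxnpe(k)}$.

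Combining these bounds yields
\begin{align*}
\norm{\bx(k+1)}\le \alpha_R\norm{\bx(k)} + \beta\norm{\bxnpe(k)} \le \alpha_R\norm{\bx(k)} + \beta \xnpebartraj .
\end{align*}
This recursion holds for every $k$ because the trajectory never leaves $\mathcal X_R$.

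Finally, argue by induction, using $\alpha_R\ge0$ so that multiplying the inequality by $\alpha_R$ preserves its direction. This gives $\norm{\bx(k)}\le\alpha_R^k\norm{\bx(0)}+\beta\sum_{j=0}^{k-1}\alpha_R^j\,\xnpebartraj$. Bounding the finite geometric sum by $1/(1-\alpha_R)$, which uses $\alpha_R<1$, yields \eqref{eqn:iss_bound_optionA}.

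The main obstacle is the quadratic state-gated term. Its linearization to $R\,\overline{\Xi}\norm{\bx(k)}$ is legitimate only along trajectories confined to $\mathcal X_R$, so the proof must invoke forward invariance at every step before using the recursion. A secondary subtlety is the norm bound on the diagonal self-term: this is the point where $\Delta t\le 1/\overline{\Lambda}$ is needed, since it prevents negative diagonal entries whose magnitude could exceed $1-\Delta t\underline{\Lambda}$.
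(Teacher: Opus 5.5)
Your proposal is correct and follows essentially the same route as the paper, which takes the one-step norm estimate from the proof of Proposition~\ref{prop:cognition_forward_invariance}, replaces $\overline{\Xi}\norm{\bx(k)}^2$ by $\overline{\Xi}R\norm{\bx(k)}$ using forward invariance of $\mathcal{X}_R$, and unrolls the resulting linear recursion with the geometric-sum bound. Your inline re-derivation of that one-step bound keeps $\norm{\bxnpe(k)}$ rather than $\xnpebar$ in the drive term, which is exactly what the recursion $\norm{\bx(k+1)}\le\alpha_R\norm{\bx(k)}+\beta\norm{\bxnpe(k)}$ needs, so it is, if anything, slightly cleaner than the paper's citation of the earlier estimate.
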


The contraction condition $\alpha_R<1$ is equivalent to
\begin{align*}
    \underline{\Lambda}
    >
    \overline{\Phi}
    +
    \overline{\Psi}
    +
    \overline{\Xi}R.
\end{align*}
Thus, on $\mathcal{X}_R$, the weakest admissible self-inhibition must dominate the baseline, input-gated, and state-gated coupling. Proposition~\ref{prop:cognition_iss_bound} directly leads to the following \gls{iss} result:

\glsreset{iss}
\begin{corollary}\textbf{Input-to-state stability on $\mathcal{X}_R$:}
    \label{cor:cognition_iss}
    Under the conditions of Proposition~\ref{prop:cognition_iss_bound}, the cognition dynamics are \gls{iss} on $\mathcal{X}_R$. 
    Specifically, the estimate in \eqref{eqn:iss_bound_optionA} implies an input-to-state bound of the form 
    \begin{align*}
        \norm{\bx(k)} \le \betaiss \left(\norm{\bx(0)},k \right) + \gammaiss\left(\sup_{i\ge 0}\norm{\bxnpe(i)}\right),
    \end{align*}
with $\betaiss(s,k)=\alpha_R^k s$ and $\gammaiss(s)=\dfrac{\beta}{1-\alpha_R}s$.%
\end{corollary}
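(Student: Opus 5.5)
The corollary follows almost directly from Proposition~\ref{prop:cognition_iss_bound}. The plan is to check that the two functions in the stated bound belong to the comparison classes of the standard discrete-time \gls{iss} definition, restricted to initial states in $\mathcal{X}_R$ and \gls{npe} sequences with $\sup_{i\ge 0}\norm{\bxnpe(i)}<\infty$. Concretely, I will show that $\betaiss$ is of class $\mathcal{KL}$ and $\gammaiss$ is of class $\mathcal{K}$. Substituting them into \eqref{eqn:iss_bound_optionA} then reproduces the estimate verbatim.

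First I take the hypotheses $0\le\alpha_R<1$ and forward invariance of $\mathcal{X}_R$ (Proposition~\ref{prop:cognition_forward_invariance}). These guarantee that the whole trajectory stays in $\mathcal{X}_R$, so \eqref{eqn:iss_bound_optionA} holds for every $k\ge 0$.

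For $\betaiss(s,k)=\alpha_R^k s$ I check the two required properties.
\begin{itemize}[leftmargin=*]
\item For fixed $k$, the map $s\mapsto\alpha_R^k s$ is continuous, vanishes at $s=0$, and is strictly increasing whenever $\alpha_R>0$, so it is class $\mathcal{K}$.
\item For fixed $s$, the map $k\mapsto\alpha_R^k s$ is non-increasing and tends to $0$ as $k\to\infty$, because $\alpha_R<1$.
\end{itemize}
The degenerate case $\alpha_R=0$ needs a small separate argument. There $\betaiss(s,k)=0$ for $k\ge1$ and $\betaiss(s,0)=s$, which is not strictly increasing in $s$ for $k\ge1$. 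I handle it by replacing $\betaiss$ with the upper bound $\tilde\beta(s,k)=\tilde\alpha^k s$ for any $\tilde\alpha\in(0,1)$. This bound dominates $\alpha_R^k s$ and is genuinely $\mathcal{KL}$.

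For $\gammaiss(s)=\frac{\beta}{1-\alpha_R}s$, the coefficient is finite and nonnegative since $1-\alpha_R>0$ and $\beta=\Delta t\,\overline{\Theta}\ge0$.
\begin{itemize}[leftmargin=*]
\item If $\overline{\Theta}>0$, $\gammaiss$ is linear with positive slope and hence class $\mathcal{K}$.
\item If $\overline{\Theta}=0$, I bound it from above by $\epsilon s$ for any $\epsilon>0$.
\end{itemize}

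The only real obstacle is this bookkeeping of degenerate cases ($\alpha_R=0$ or $\overline{\Theta}=0$) against the strict-monotonicity requirements of the $\mathcal{K}$ and $\mathcal{KL}$ classes. Otherwise the corollary is a relabeling of Proposition~\ref{prop:cognition_iss_bound}. I will also remark that the result is local \gls{iss}: it holds only on the invariant set $\mathcal{X}_R$ and for inputs in the bounded set $\xnpeset$.
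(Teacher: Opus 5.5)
Your proposal is correct and takes the same route as the paper, whose proof simply reads off \eqref{eqn:iss_bound_optionA} and notes that, since $0\le\alpha_R<1$, $\alpha_R^k s$ is of class $\mathcal{KL}$ and $\frac{\beta}{1-\alpha_R}s$ is of class $\mathcal{K}$. Your separate treatment of the degenerate cases $\alpha_R=0$ and $\overline{\Theta}=0$ goes beyond the paper. In those cases the stated functions are not strictly increasing and must be dominated by genuine comparison functions, which the paper's proof does not address.
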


\subsection{Decision-making module}
\noindent
Lemmas~\ref{lem:intention_bibo_lip} and~\ref{lem:action_bibo_lip} establish the key regularity properties of the decision-making mappings. 
Lemma~\ref{lem:intention_bibo_lip} shows that each intention intensity is Lipschitz continuous in the goal and belief inputs, so small changes in these cognition variables cannot produce abrupt changes in intention formation. 

\begin{lemma}\textbf{Single-step Lipschitz continuity of the intention formation mapping:}
    \label{lem:intention_bibo_lip}
    Fix $i\in\mathcal G$ and suppose that $\mathcal{X}_{R}$ is defined via \eqref{eq:forward_invariant_XR} under the conditions of Proposition~\ref{prop:cognition_forward_invariance} and is thus forward invariant. 
    Let $\xgiset$ and $\xbset$ denote the projected compact subsets of $\mathcal{X}_R$ corresponding to the 
    goal state and the belief state vector, respectively. 
    Under Assumption~\ref{ass:intentionformation}, $\zii: \xgiset \times \xbset \to \ziiset$ defined by \eqref{eqn:intentionselection} is Lipschitz continuous on the admissible compact goal-belief domain inherited from $\mathcal{X}_R$.
\end{lemma}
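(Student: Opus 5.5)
The plan is to combine two facts: a locally Lipschitz function is Lipschitz on any compact subset of its domain's open neighborhood, and products and sums of bounded Lipschitz functions are Lipschitz. Since $\mathcal{X}_R$ is a closed Euclidean ball, the coordinate projections $\Gamma(\{i\})$ and $\Gamma(\mathcal{B})$ are continuous and linear. Hence $\xgiset=\Gamma(\{i\})(\mathcal{X}_R)$ and $\xbset=\Gamma(\mathcal{B})(\mathcal{X}_R)$ are compact; indeed, they are an interval and a ball. Assumption~\ref{ass:intentionformation} gives local Lipschitz continuity of $\fisi(\cdot)$ and $\gisi(\cdot)$ on open neighborhoods of these sets.

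\textbf{Step 1.} Promote local to global Lipschitz continuity on the compact sets. The route I would take is a Lebesgue-number argument. Cover the compact set $K$ by finitely many balls on which the function is Lipschitz, and let $\delta>0$ be a Lebesgue number of this cover.
\begin{itemize}[leftmargin=*]
\item For pairs with distance below $\delta$, both points lie in a single ball, so the largest of the finitely many local constants applies.
\item For pairs with distance at least $\delta$, use $\abs{f(a)-f(b)}\le 2\sup_K\abs{f}\le (2\sup_K\abs{f}/\delta)\norm{a-b}$. Here $\sup_K\abs{f}$ is finite by continuity on a compact set; for our functions it is already bounded by $\fisimax$ or $\gisimax$.
\end{itemize}
Taking the maximum of the two constants yields constants $L_F$ for $\fisi$ on $\xgiset$ and $L_G$ for $\gisi$ on $\xbset$. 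The convexity of both sets also permits a simpler chaining argument along segments, but the Lebesgue argument avoids needing it.

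\textbf{Step 2.} Handle the product structure in \eqref{eqn:intentionselection}. The constant $\phiisi$ cancels in differences. For $(\xg_1,\bxb_1)$ and $(\xg_2,\bxb_2)$ in the domain, add and subtract $\fisi(\xg_2)\gisi(\bxb_1)$ to get
\begin{align*}
\abs{\zii(\xg_1,\bxb_1)-\zii(\xg_2,\bxb_2)}\le \gisimax L_F\abs{\xg_1-\xg_2}+\fisimax L_G\norm{\bxb_1-\bxb_2}.
\end{align*}
This uses the uniform bounds $0\le\fisi\le\fisimax$ and $0\le\gisi\le\gisimax$. It gives Lipschitz continuity with respect to the product norm $\abs{\cdot}+\norm{\cdot}$, with constant $\max\{\gisimax L_F,\fisimax L_G\}$. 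Equivalence of norms on finite-dimensional spaces then transfers this to any other norm on $\mathbb{R}\times\mathbb{R}^{\card{\mathcal{B}}}$, including the Euclidean norm of the stacked vector or of $\bx$ itself. For the latter, note $\abs{\xg_1-\xg_2}+\norm{\bxb_1-\bxb_2}\le 2\norm{\bx_1-\bx_2}$ when goal $i$ and the belief indices are coordinates of $\bx$.

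\textbf{Main obstacle.} The main subtlety is Step 1: passing from local to uniform Lipschitz continuity, and making sure the domain is genuinely the compact projection of $\mathcal{X}_R$. Everything else is the standard product estimate.
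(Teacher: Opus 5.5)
Your proposal is correct and follows the same route as the paper's proof. Both arguments use compactness of the projected goal and belief domains, upgrade local to uniform Lipschitz continuity on these compact sets, and then exploit the bounded sum-and-product structure of \eqref{eqn:intentionselection}; you additionally supply what the paper leaves implicit, namely the Lebesgue-number argument and the explicit constant $\max\{\gisimax L_F, \fisimax L_G\}$ with respect to the product norm.
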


Lemma~\ref{lem:action_bibo_lip} provides an analogous single-step Lipschitz bound for the action selection stage, showing that each action activation varies smoothly with the intention vector. Together, these bounds imply that the decision-making module behaves as a regular bounded-gain mapping.

\begin{lemma}\textbf{Single-step Lipschitz continuity of the action selection mapping:}
    \label{lem:action_bibo_lip}
    Fix $i\in\mathcal G$. Under Assumption~\ref{ass:actionselection}, the action selection mapping $\bzi \mapsto \zai \left(\bzi \right)$ 
    defined via \eqref{eqn:actionselection} is Lipschitz continuous on $\ziset$, as introduced in Definition~\ref{def:actionselection}. 
\end{lemma}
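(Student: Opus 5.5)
The plan is to write $\zai = \sasi \circ \fasi$ and show each factor is Lipschitz on the relevant compact set, so that the composition is Lipschitz.

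First, the drive $\fasi(\bzi) = \gasi(\zii) - \phiasi(k) - \hasi(\bzimini)$. At a fixed time step the threshold $\phiasi(k)$ is a constant, so it cancels in differences. By Assumption~\ref{ass:actionselection}, $\gasi(\cdot)$ is locally Lipschitz on an open neighborhood of the compact interval $\ziiset = [0,\ziimax]$. A locally Lipschitz function on a neighborhood of a compact convex set is Lipschitz on that set: cover it by finitely many balls with local constants and use convexity to chain along the segment. Hence there is $L_G$ with $\abs{\gasi(a)-\gasi(b)} \le L_G \abs{a-b}$. The set $\ziminiset$ is a finite product of compact intervals, so it is also compact and convex, and the same argument gives $L_H$ for $\hasi(\cdot)$. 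Using the triangle inequality and $\abs{z_i^{\textup{\textsc{i}}}-z_i'^{\textup{\textsc{i}}}} \le \norm{\bzi-\bzi{}'}$, and likewise for the subvector $\bzimini$, we obtain
\begin{align*}
\abs{\fasi(\bzi)-\fasi(\bzi{}')} \le (L_G+L_H)\norm{\bzi-\bzi{}'}.
\end{align*}

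Second, we need a range for the argument of the gate. By the boundedness assumptions, $\fasi$ takes values in the compact interval $J \coloneqq [-\phiasi(k)-\hasimax,\ \gasimax-\phiasi(k)]$. By Assumption~\ref{ass:actionselection}, $\sasi$ is $C^1$, so $L_S \coloneqq \max_{\zeta\in J}\abs{(\sasi)'(\zeta)}$ is finite. The mean value theorem on the interval $J$ then gives $\abs{\sasi(a)-\sasi(b)}\le L_S\abs{a-b}$ for all $a,b\in J$. Combining the two steps yields the Lipschitz constant $L_S(L_G+L_H)$ for $\zai$ on $\ziset$.

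The main subtlety is the passage from local to global Lipschitz continuity, which must be done carefully on the compact convex domains; convexity of $\ziiset$ and $\ziminiset$ is what makes the chaining argument valid. A second point is that $J$ depends on $\phiasi(k)$. If a constant uniform in $k$ is wanted, one should either assume $\phiasi(k)$ is bounded or note that $\sasi$ is globally Lipschitz, as holds for the logistic gate. Otherwise, the statement should be read as holding for each fixed $k$, consistent with the single-step formulation.
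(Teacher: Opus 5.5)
Your proposal is correct and follows essentially the same route as the paper: upgrade the local Lipschitz continuity of $\gasi$ and $\hasi$ to Lipschitz continuity on the compact intention domain, then compose with the $C^1$ gate $\sasi$. The difference is that you make explicit what the paper leaves implicit, namely the constant $L_S(L_G+L_H)$ and the compact range of the drive $\fasi$ on which $\sasi$ is Lipschitz, and your observation about the $k$-dependence of that range through $\phiasi(k)$ is a valid refinement that the paper does not address.
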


\section{Numerical analysis and closed-loop implementation}
\label{sec:simandanalysis}
\noindent
This section has two aims. First, we test whether the instantiated model exhibits the qualitative dependencies predicted by the neuro-cognitive mechanisms in Section~\ref{sec:modulararchitecture}. Second, we illustrate how the model can be embedded in a closed-loop controller that adapts interaction dynamics from partial feedback.

\subsection{Working hypotheses for model behavior}
\label{sec:numericalanalysistestablepredictions}
\noindent
We formulate six working hypotheses to assess whether the numerical instantiation behaves consistently with the mechanisms introduced in Section~\ref{sec:modulararchitecture}; these hypotheses do not constitute independent empirical validation. Hypotheses~\ref{hypothesis1}–\ref{hypothesis2}, \ref{hypothesis3}–\ref{hypothesis4}, and \ref{hypothesis5}–\ref{hypothesis6} concern perception, cognition, and decision-making, respectively.

\begin{enumerate}[leftmargin=*,  label=\textbf{H\arabic*}]
    \item \label{hypothesis1} \textbf{Contextual normalization attenuates attentional gain:}
    Normalization-based accounts of attention and sensory competition predict that contextual suppression reduces the influence of perturbations in individual sensory channels \cite{b61,b62,b63}. Increasing the normalization offset $\betaasl$ or pooling weights $\gammaaslm$ in \eqref{eqn:attentionalselectionG} is therefore expected to reduce sensory sensitivity to perturbations and produce smoother attentional selection trajectories $\zasl(\cdot)$.
    \item \label{hypothesis2} \textbf{Relative precision modulates perceptual update gain:} Predictive-coding accounts propose that perceptual estimates are updated by prediction errors weighted by sensory precision, prior precision, and attention-related gain \cite{b45,b46,b47,b48,b49}. This is captured by the convex-combination update in \eqref{eqn:nperecursionconvex} and the effective update weight in \eqref{eqn:predictiveinferencealpha}. Since $\alphapil(k)$ increases with sensory precision $\phipil$ and attention gain $\chipil$, but decreases with prior precision $\thetapil$, stronger sensory precision or attention should improve tracking while increasing reactivity, whereas stronger prior precision should yield smoother but slower perceptual estimates.
    \item \label{hypothesis3} \textbf{Recurrent coupling and self-inhibition oppositely modulate cognitive stability:} Network-level accounts of cognition, including \gls{dcm}, emphasize that directed couplings can amplify distributed state activity, whereas intrinsic self-dynamics regulate stability \cite{b75,b76,b77,b78,b79,b80}. Recurrent and input-gated couplings are represented by $\Phi$ and $\Psi_\ell$ in \eqref{eqn:gimc}, while self-inhibition is represented by \eqref{eqn:cognitionmodulefiic}. Stronger recurrent coupling is expected to amplify cognitive state excursions and reduce local stability margins, whereas stronger self-inhibition should damp trajectories and improve stability. Input-gated coupling is expected to modulate these effects based on the current perceptual context.
    \item \label{hypothesis4} \textbf{Additive perceptual drive modulates cognitive response magnitude:} In \gls{dcm}, exogenous inputs can enter additively and shift the forced response of the state dynamics \cite{b75,b76}. Increasing the scale of $\Theta\bxnpe(k)$ in \eqref{eqn:dcmnonlinear} is expected to increase the forced cognitive response under fixed perceptual input, with weaker effects on local stability than recurrent coupling $\Phi$ or self-inhibition $\kappa$.
    \item \label{hypothesis5} \textbf{Belief gating modulates goal-to-intention conversion:} Accounts of intention formation emphasize that goals are expressed as intentions depending on beliefs about feasibility, desirability, and control \cite{b16,b29,b32,b54}. Shifting the belief gate in $\gisi(\cdot)$ in \eqref{eqn:intentionselection} toward activation should increase the expression of goal salience as intention, whereas shifting it toward suppression should reduce this expression. Scaling supportive and suppressive belief weights is expected to modulate this conversion in an operating-regime-dependent manner.
    \item \label{hypothesis6} \textbf{Competition and thresholding modulate action commitment:} Basal-ganglia-inspired accounts of action selection propose that selected actions are facilitated, competing alternatives are suppressed, and conflict can increase the threshold for commitment \cite{b57,b58,b59,b60,b98,b99}. These effects are represented by the self-facilitation, competition, and threshold terms in \eqref{eqn:decisionmakingfasi}. Stronger competition should suppress co-activation of competing actions, while a higher threshold $\phiasi$ should reduce action activation and delay or weaken commitment, as reflected by lower winner margins and smoother action trajectories.
\end{enumerate}
Section~\ref{sec:parametersensitivityanalyses} evaluates these hypotheses using synthetic inputs, candidate functions, baseline parameters, and trajectory-level metrics.

\subsection{Synthetic input signals}
\glsreset{sa}
\glsreset{da}
\glsreset{sv}
\glsreset{dv}
\noindent
We use four synthetic input families to evaluate the model under controlled sensory regimes, specified in Table~\ref{tab:environment_signal_families}. They vary by modality, auditory versus visual, and temporal structure, stationary versus dynamic, yielding \gls{sa}, \gls{da}, \gls{sv}, and \gls{dv} inputs. These inputs are not intended as full generative models of natural scenes, but as controlled excitations for parameter sensitivity analysis. The \gls{sa} and \gls{sv} families model stationary mean-reverting inputs, the \gls{da} family combines channel-specific and shared sinusoidal modulation, and the \gls{dv} family represents time-varying visual input driven by motion, occlusion, or scene changes \cite{b86,b87,b88,b89,b90}.

\begin{table}[t]
\centering
\caption{Synthetic input families used in the sensitivity analyses. Each family generates a raw channel-wise trajectory $\uraw_\ell(k)$, which is mapped into the bounded admissible range $[0,\overline{u_\ell}]$ before attentional selection via
$u_{\ell}(k)=\overline{u_{\ell}}\left(1-\exp[-a\max\{\uraw_{\ell}(k),0\}]\right)$.
The \gls{sa} and \gls{sv} families are mean-reverting stochastic processes, the \gls{da} family combines channel-specific and shared sinusoidal modulation, and the \gls{dv} family tracks a sinusoidally varying target with persistent perturbations.}
\label{tab:environment_signal_families}
\small
\setlength{\tabcolsep}{6pt}
\renewcommand{\arraystretch}{1.25}
\begin{tabularx}{\linewidth}{@{}p{0.34\linewidth} X@{}}
\toprule
\textbf{Input family} & \textbf{Representative synthetic form} \\
\midrule

Stationary Audio (SA)
&
\(\displaystyle
\begin{aligned}
\uraw_{\ell}(k+1)
&=
\mu_{\ell}^{\mathrm{SA}}
+
\rho^{\mathrm{SA}}
\left(
\uraw_{\ell}(k)-\mu_{\ell}^{\mathrm{SA}}
\right) \\
&+ 
\epsilon_{\ell}^{\mathrm{SA}}(k)
\end{aligned}
\)
\\

\midrule

Dynamic Audio (DA)
&
\(\displaystyle
\begin{aligned}
\uraw_{\ell}(k)
&=
\mu_{1,\ell}^{\mathrm{DA}}
+
A_{\ell}^{\mathrm{DA}}
\sin \left(2\pi f_{1,\ell}^{\mathrm{DA}}k+\varphi_{1,\ell}^{\mathrm{DA}}\right)
\\
&+ \mu_{2}^{\mathrm{DA}} A_{\ell}^{\mathrm{DA}} \sin \left(2\pi f_{2}^{\mathrm{DA}}k +\varphi_{2}^{\mathrm{DA}}\right) \\
&+
\epsilon_{\ell}^{\mathrm{DA}}(k)
\end{aligned}
\)
\\

\midrule

Stationary Visual (SV)
&
\(\displaystyle
\begin{aligned}
\uraw_{\ell}(k+1)
&=
\mu_{\ell}^{\mathrm{SV}}
+
\rho^{\mathrm{SV}}
\left(
\uraw_{\ell}(k)-\mu_{\ell}^{\mathrm{SV}}
\right) \\
&+
\epsilon_{\ell}^{\mathrm{SV}}(k)
\end{aligned}
\)
\\

\midrule

Dynamic Visual (DV)
&
\(\displaystyle
\begin{aligned}
\uraw_{\ell}(k+1)
&=
\left(1-\rho^{\mathrm{DV}}\right)\left[
\mu_{\ell}^{\mathrm{DV}} \right.
\\ &+ \left.
A_{\ell}^{\mathrm{DV}}
\sin \left(2\pi f_{\ell}^{\mathrm{DV}}k+\varphi_{\ell}^{\mathrm{DV}}\right)
\right]
\\
&+
\rho^{\mathrm{DV}} \uraw_{\ell}(k)
+
\epsilon_{\ell}^{\mathrm{DV}}(k)
\end{aligned}
\)
\\

\bottomrule
\end{tabularx}
\end{table}

\subsection{Model instantiation}
\label{sec:modelinstantiation}
\noindent
Table~\ref{tab:candidate_functions} summarizes the candidate functions used in the numerical instantiation. These choices satisfy the analytical assumptions in Section~\ref{sec:mathreprhumandecisionmaking}, preserve bounded nonlinear responses, and remain consistent with the mechanisms introduced in Section~\ref{sec:modulararchitecture}. All variables are interpreted in normalized model coordinates. Raw synthetic inputs are mapped component-wise into $[0,\overline{u_\ell}]$ before attentional selection. The resulting \gls{npe} trajectories are bounded by the convex-combination structure \eqref{eqn:nperecursionconvex} of predictive inference, whereas cognition states are not explicitly saturated during simulation. The vector $\boldsymbol{\eta}$ collects the swept numerical parameters used in the sensitivity analyses. Depending on the module, these parameters control response shape, operating range, effective precision, offsets, or coupling scales.

\begin{table}[!t]
\centering
\caption{Candidate functions used in the numerical model instantiations and the corresponding one-at-a-time swept parameters.}
\label{tab:candidate_functions}
\begin{minipage}{\columnwidth}

\hrule
\vspace{0.5em}

\noindent
\makebox[0.67\columnwidth][l]{\textbf{Candidate functions}}%
\makebox[0.04\columnwidth][l]{\textbf{Eqs.}}%
\makebox[0.29\columnwidth][r]{\textbf{Sweep}}

\vspace{0.5em}
\hrule
\vspace{0.8em}

\noindent
\makebox[0.67\columnwidth][l]{\textbf{Attentional selection}}%
\makebox[0.04\columnwidth][l]{\eqref{eqn:perceptualaccessgeneral}--\eqref{eqn:attentionalselectionG}}%
\makebox[0.29\columnwidth][r]{%
  \vtop{%
    \hbox{\(\eta_1,\eta_2,\)}%
    \kern 0.15em
    \hbox{\(\eta_3,\betaasl\)}%
  }%
}

\vspace{-1.0em}
\noindent\(\displaystyle
\begin{aligned}
\fasl(\ul)
&\coloneqq
\frac{
\faslmax \ul^{\eta_1}
}{\ul^{\eta_1}
+ \eta_2^{\eta_1}
}, \\
\gasmtilde(u_m)
&\coloneqq
u_m^{p_m},
\quad p_m\geq 1,\\
\betaasl &\coloneqq \betaasl, \\
\gammaaslm(\eta_3)
&\coloneqq
\eta_3 \gamma_{\ell m}^{\mathrm{ATS,base}} .
\end{aligned}
\)

\vspace{0.8em}
\hrule
\vspace{0.8em}

\noindent
\makebox[0.67\columnwidth][l]{\textbf{Predictive inference}}%
\makebox[0.04\columnwidth][l]{\eqref{eqn:predictiveinferencealpha}}%
\makebox[0.29\columnwidth][r]{\(\eta_4,\thetapil\)}

\vspace{0.3em}
\noindent\(\displaystyle
\begin{aligned}
\fpil(\eta_4,\thetapil)
&\coloneqq
\frac{\eta_4}
{\eta_4+\thetapil},\\
\thetapil &\coloneqq \thetapil, \\
\eta_4
&\coloneqq
\chipil\phipil .
\end{aligned}
\)

\vspace{0.8em}
\hrule
\vspace{0.8em}

\noindent
\makebox[0.67\columnwidth][l]{\textbf{Cognition module}}%
\makebox[0.04\columnwidth][l]{\eqref{eqn:dcmnonlinear}--\eqref{eqn:gimc}}%
\makebox[0.29\columnwidth][r]{%
  \vtop{%
    \hbox{\(\eta_5,\eta_6,\)}%
    \kern 0.15em
    \hbox{\(\eta_7,\kappa\)}%
  }%
}

\vspace{-1.0em}
\noindent\(\displaystyle
\begin{aligned}
\Phi(\eta_5)
&\coloneqq
\eta_5\Phi^{\mathrm{base}},\\
\Psi_\ell(\eta_6)
&\coloneqq
\eta_6\Psi_\ell^{\mathrm{base}},\\
\Theta(\eta_7)
&\coloneqq
\eta_7\Theta^{\mathrm{base}}, \\
\kappa &\coloneqq \kappa.
\end{aligned}
\)

\vspace{0.8em}
\hrule
\vspace{0.8em}

\noindent
\makebox[0.67\columnwidth][l]{\textbf{Intention formation}}%
\makebox[0.04\columnwidth][l]{\eqref{eqn:intentionselection}}%
\makebox[0.29\columnwidth][r]{%
  \vtop{%
    \hbox{\(\eta_8,\eta_9,\)}%
    \kern 0.15em
    \hbox{\(\eta_{10},\eta_{11}\)}%
  }%
}

\vspace{-1.0em}
\noindent\(\displaystyle
\begin{aligned}
\fisi(\xigoals)
&\coloneqq
\frac{
\fisimax \bigl(\max\{0,\xigoals\}\bigr)^{\eta_8}
}{
\bigl(\max\{0,\xigoals\}\bigr)^{\eta_8}
+ \eta_9^{\eta_8}
},\\
\gisi(\xbeliefs)
&\coloneqq
\gisimax S_i^{\mathrm{IF}}
\bigl(H_i^{\mathrm{IF}}(\xbeliefs)\bigr),\\
H_i^{\mathrm{IF}}(\xbeliefs)
&\coloneqq
\eta_{10}
+
\sum_{j \in \mathcal{B}_i^+} w_{ij}^+ x_j^{\mathrm{B}}
-
\sum_{j \in \mathcal{B}_i^-} w_{ij}^- x_j^{\mathrm{B}},\\
w_{ij}^{\pm}(\eta_{11})
&\coloneqq
\eta_{11} w_{ij}^{\pm,\mathrm{base}},
\quad
w_{ij}^{\pm}\geq 0 .
\end{aligned}
\)

\vspace{0.8em}
\hrule
\vspace{0.8em}

\noindent
\makebox[0.67\columnwidth][l]{\textbf{Action selection}}%
\makebox[0.04\columnwidth][l]{\eqref{eqn:decisionmakingfasi}}%
\makebox[0.29\columnwidth][r]{\(\eta_{12}, \phiasi\)}

\vspace{0.3em}
\noindent\(\displaystyle
\begin{aligned}
\gasi(\zii)
&\coloneqq
\gasimax S_i^{\mathrm G}(\zii),\\
\hasi(\bzimini)
&\coloneqq
\hasimax S_i^{\mathrm H} \left(
\sum_{\substack{m\in\mathcal G\\m\neq i}}
\gamma_{im}^{\mathrm{AS}}z_m^{\mathrm I} \right),\\
\gamma_{im}^{\mathrm{AS}}(\eta_{12})
&\coloneqq
\eta_{12}\gamma_{im}^{\mathrm{AS,base}},
\qquad
\gamma_{im}^{\mathrm{AS}}\geq 0, \\
\phiasi &\coloneqq \phiasi.
\end{aligned}
\)

\vspace{0.8em}
\hrule

\end{minipage}
\end{table}

The attentional drive $\fasl(\cdot)$ is modeled as a Naka--Rushton function, and contextual normalization uses rectified power-law pooling with $p_m=2$, consistent with sensory contrast-response and normalization models \cite{b20,b61,b62,b93,b94}. The predictive-inference weight $\fpil(\cdot)$ is instantiated as a relative precision weight, where attended sensory precision $\eta_4=\chipil\phipil$ is normalized against prior precision, following reliability-weighted cue integration \cite{b48,b95,b96}.

The cognition sweeps vary global scales for recurrent coupling, input-gated coupling, additive perceptual drive, and self-inhibition. To isolate these effects, state-gated endogenous couplings $\Xi_q$ are set to zero in the sensitivity analyses and used only in the closed-loop rehabilitation showcase. Intention formation uses a saturating goal-salience function and a bounded sigmoid belief gate with supportive beliefs weighted by $w_{ij}^{+}$ and suppressive beliefs weighted by $w_{ij}^{-}$, while action selection uses bounded monotone self-facilitation and competition terms. Together, these candidate functions instantiate the abstract mappings introduced in Section~\ref{sec:mathreprhumandecisionmaking} and satisfy the assumptions used in Section~\ref{sec:stabilityanalysis}.

\subsection{Parameter sensitivity analyses}
\label{sec:parametersensitivityanalyses}
\noindent
We next examine how the swept parameters affect sensitivity, variability, tracking, and decision consistency. Parameters are held fixed within each simulation run, so the analyses isolate static parameter effects rather than time-dependent adaptation. One-at-a-time sweeps vary only the parameter under study, with all remaining parameters fixed at the baseline values reported in Table~\ref{tab:baseline_simulation_parameters} in \ref{appendix:baselinesensitivityparams}.

\subsubsection{Evaluation metrics}
\glsreset{io}
\noindent
Table~\ref{tab:sensitivity_metrics} in \ref{appendix:sensitivitymetrics} defines the metrics used to evaluate Hypotheses~\ref{hypothesis1}–\ref{hypothesis6}. For static mappings, we report a sampled \gls{io} gain and a mean step increment, which measure perturbation transmission and temporal variability, respectively. For predictive inference, we additionally report mean and terminal tracking errors between the \gls{npe} trajectory and the attentional selection target, capturing the trade-off between tracking accuracy and smoothing. For cognition, we report sampled finite-horizon \gls{is} gain, maximum state norm, and mean step increment, which summarize input-to-state propagation, state excitation, and temporal variability. For action selection, winner margin and switch count quantify decisiveness and temporal consistency.

Parameter-sweep results are summarized using the endpoint-change score
\begin{align}
\label{eqn:srel}
    \srel = 100 \left( \frac{m^{\mathrm{f}}-m^{\mathrm{s}}}{\max_q \abs{m_q}}\right)
\end{align}
where $m^{\mathrm{s}}$ and $m^{\mathrm{f}}$ are the metric values at the first and last sampled parameter values, and $m_q$ is the value at the $q$-th sample. Positive and negative scores indicate endpoint increases and decreases, respectively, while magnitudes near $100$ indicate that the endpoint change is comparable to the largest metric magnitude observed within the sweep. The score thus provides a directional effect-size measure without becoming artificially large when $m^{\mathrm{s}}$ is near zero. Because it is most informative for approximately monotone sweeps, curves containing both increasing and decreasing segments are marked as non-monotone (NM) after changes below a numerical tolerance are ignored by the trend classifier.

\subsubsection{Attentional selection}
\label{sec:attentionalselectionnumericalanalysis}
\noindent
We vary the Naka--Rushton parameters $\eta_1$ and $\eta_2$, the contextual-normalization scale $\eta_3$, and the normalization offset $\betaasl$ (Table~\ref{tab:candidate_functions}) to evaluate Hypothesis~\ref{hypothesis1}. Figure~\ref{fig:montecarloattentionalselection_heatmap} shows that increasing either $\betaasl$ or $\eta_3$ consistently reduces both sampled \gls{io} gain and mean step increment across all input families. Thus, the divisive contextual-normalization term in \eqref{eqn:attentionalselectionG} robustly attenuates sensitivity and temporal reactivity in the attentional selection map.

The Naka–Rushton parameters $\eta_1$ and $\eta_2$ tune the steepness and operating range of the sensory-drive response. Increasing $\eta_2$ generally reduces sampled \gls{io} gain by shifting the response curve toward higher inputs. By contrast, the sampled \gls{io} gain is non-monotone in $\eta_1$ (Figure~\ref{fig:montecarloattentionalselection_ngain}). Low-to-intermediate values of $\eta_1$ yield the lowest sampled amplification, whereas larger values make the response more switch-like. Inputs outside the half-saturation region around $\eta_2$ produce little change, while inputs near this transition region can be amplified. Thus, increasing $\eta_1$ can either suppress small fluctuations or amplify perturbations near the steep part of the response curve. Most sampled gains remain below unity, indicating overall perturbation attenuation.

Overall, these results support Hypothesis~\ref{hypothesis1}. Contextual normalization makes attentional selection more conservative, whereas $\eta_1$ and $\eta_2$ regulate the sharpness and operating range of sensory-drive sensitivity.

\begin{figure*}[!t]
    \centering
    \begin{subfigure}[t]{0.45\linewidth}
        \centering
        \includegraphics[width=.9\linewidth]{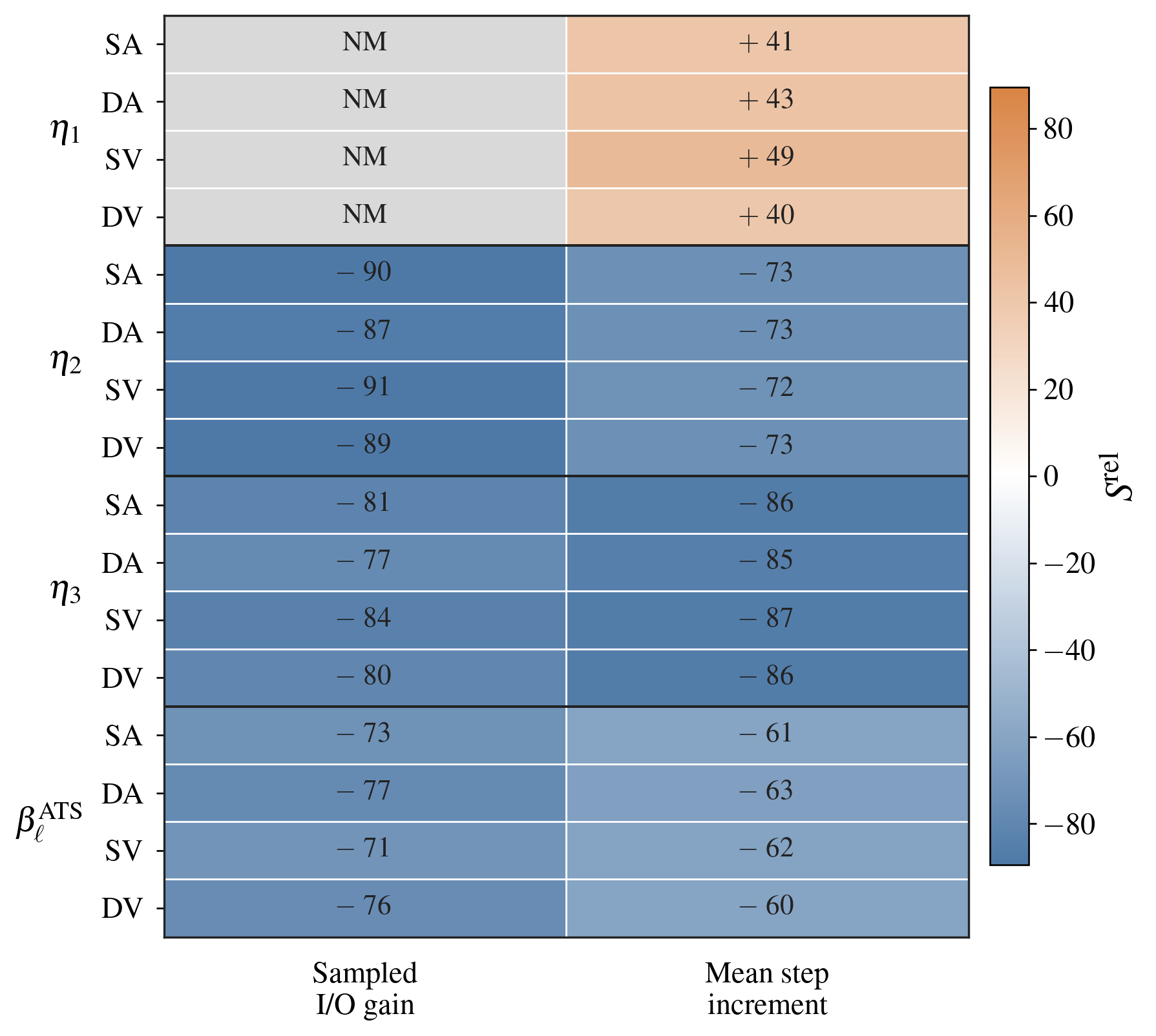}
        \caption{Heatmap of attentional selection parameter effects across input families.}
        \label{fig:montecarloattentionalselection_heatmap}
    \end{subfigure}\hfill
    \begin{subfigure}[t]{0.46\linewidth}
        \centering
        \raisebox{-0.08cm}{%
        \includegraphics[width=.8\linewidth]{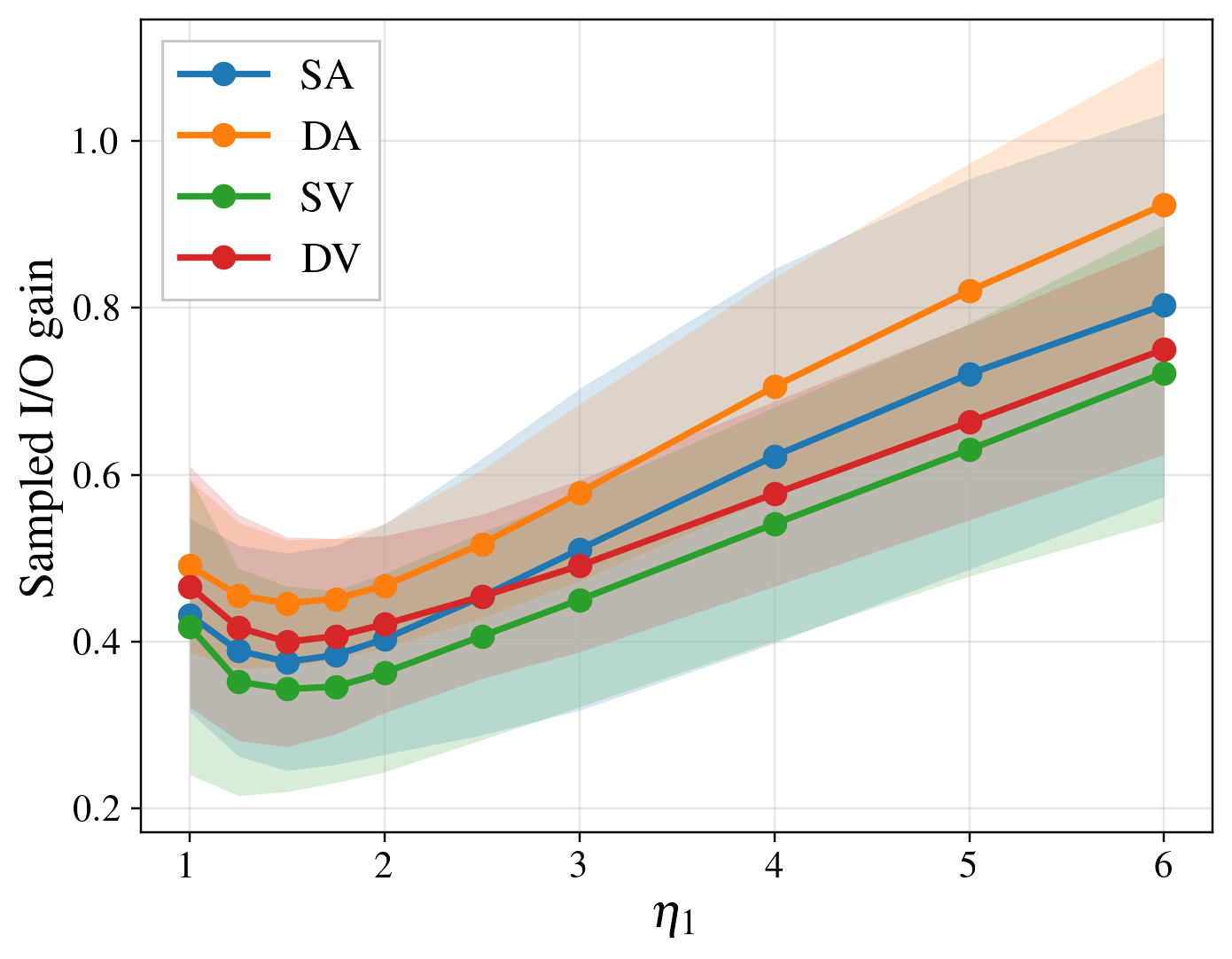}
        }
        \captionsetup{skip=0.4em}
        \caption{Sampled \gls{io} gain as a function of $\eta_1$.}
        \label{fig:montecarloattentionalselection_ngain}
    \end{subfigure}
    \caption{
    Sensitivity analysis of attentional selection.
    (\subref{fig:montecarloattentionalselection_heatmap}) Endpoint-change heatmap for one-at-a-time parameter sweeps across input families. Colors encode $\srel$ as defined in \eqref{eqn:srel}.
    (\subref{fig:montecarloattentionalselection_ngain}) Sampled \gls{io} gain as a function of $\eta_1$, showing a non-monotone response with a local minimum at low-to-intermediate response steepness.
    }
    \label{fig:montecarloattentionalselection}
\end{figure*}

\subsubsection{Predictive inference}
\noindent
We vary attended sensory precision $\eta_4$ and prior precision $\thetapil$ to evaluate Hypothesis~\ref{hypothesis2}. As shown in Figure~\ref{fig:montecarlopredictiveinference_heatmap}, increasing $\eta_4$ reduces mean and terminal tracking error, but increases mean step increment and sampled finite-horizon \gls{is} gain. Thus, stronger attended sensory precision improves tracking of the attentional selection target while transmitting more temporal fluctuation into the \gls{npe} state.

Increasing $\thetapil$ has the opposite effect: the \gls{npe} trajectory becomes smoother and less reactive, but deviates more from the instantaneous attentional selection target. Figure~\ref{fig:montecarlopredictiveinference_tradeoff} visualizes this tracking--reactivity trade-off. These results support Hypothesis~\ref{hypothesis2}: relative precision tunes the balance between perceptual adaptivity and smoothing.

\begin{figure*}[!t]
    \centering
    \begin{subfigure}[t]{0.50\linewidth}
        \centering
        \hspace*{-0.72cm}%
        \includegraphics[width=.78\linewidth]{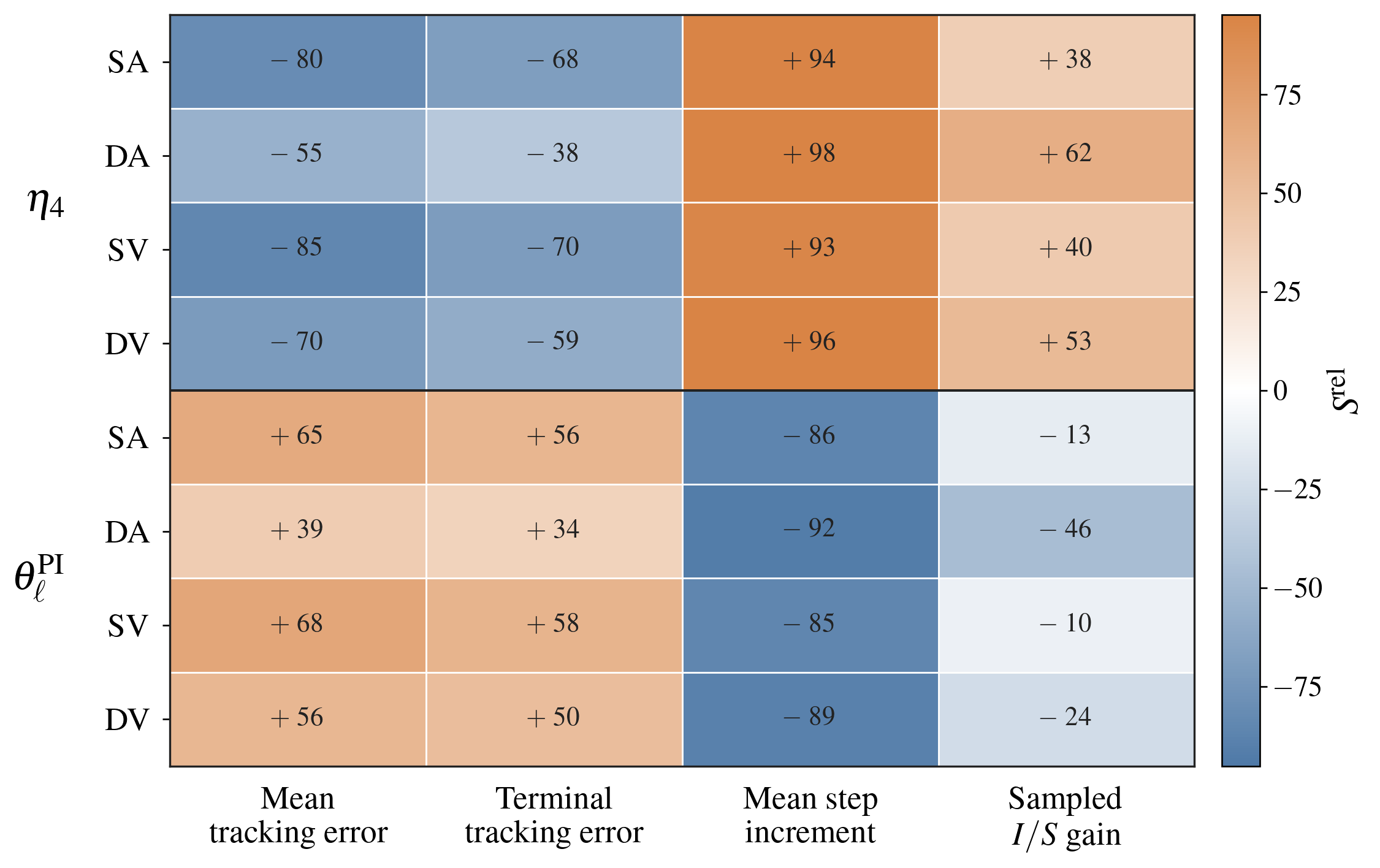}
        \captionsetup{skip=1.2em}
        \caption{Heatmap of predictive inference parameter effects across input families.}
        \label{fig:montecarlopredictiveinference_heatmap}
    \end{subfigure}\hfill
    \begin{subfigure}[t]{0.48\linewidth}
        \centering
        \raisebox{-0.22cm}{%
        \includegraphics[width=.8\linewidth]{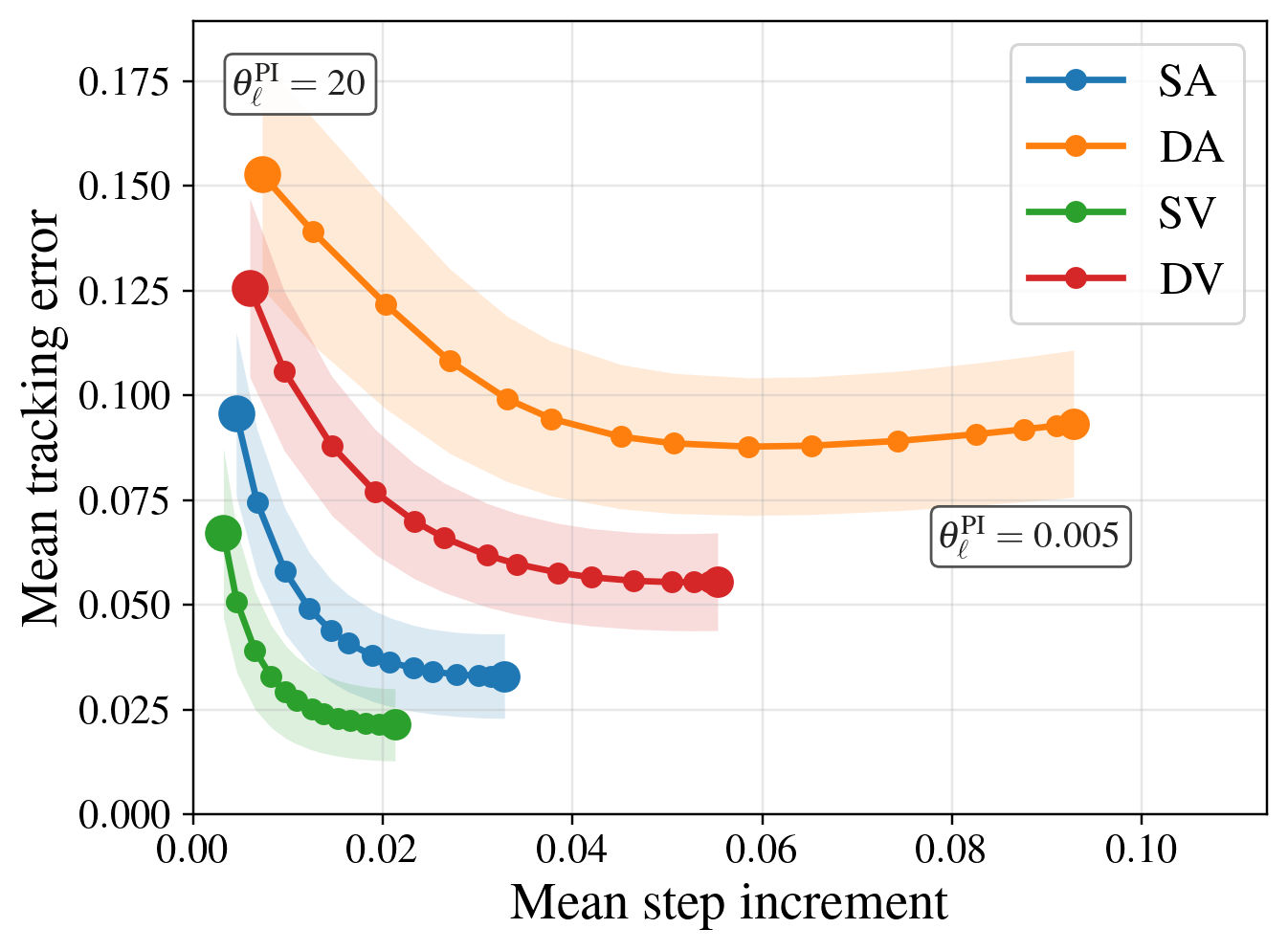}
        }
        \caption{Tracking-reactivity trade-off under a sweep of prior precision $\thetapil$.}
        \label{fig:montecarlopredictiveinference_tradeoff}
    \end{subfigure}
    \caption{
    Sensitivity analysis of predictive inference.
    (\subref{fig:montecarlopredictiveinference_heatmap}) Endpoint-change heatmap for one-at-a-time parameter sweeps across input families. Colors encode $\srel$ as defined in \eqref{eqn:srel}.
    (\subref{fig:montecarlopredictiveinference_tradeoff}) Representative $\thetapil$-sweep showing the trade-off between mean tracking error and mean step increment of the \gls{npe} trajectory.
    }
    \label{fig:montecarlopredictiveinference}
\end{figure*}

\subsubsection{Cognition module}
\noindent
We vary recurrent coupling $\eta_5$, input-gated coupling $\eta_6$, additive perceptual drive $\eta_7$, and self-inhibition $\kappa$ to evaluate Hypotheses~\ref{hypothesis3}--\ref{hypothesis4}. We use both trajectory sweeps under time-varying inputs and constant-input scans. In the constant-input scans, the terminal trajectory average, defined in Table~\ref{tab:sensitivity_metrics}, is used as an estimated steady response $\hat{\bx}^{\star}$. We then evaluate the spectral radius $\rho(J^\star)$ of the discrete-time Jacobian at this estimate as a local stability indicator. When $\hat{\bx}^{\star}$ approximates a fixed point, $\rho(J^\star)<1$ indicates local asymptotic stability.

Figure~\ref{fig:montecarlocognition_heatmapmain} shows that increasing $\eta_5$ or $\eta_7$ increases state excursions, temporal variation, and sampled \gls{is} gain, whereas increasing $\kappa$ damps these effects. The input-gated scale $\eta_6$ has weaker and more family-dependent effects in the baseline operating regime.

The constant-input scans in Figure~\ref{fig:montecarlocognition_eqscanmain} separate amplification from forcing. Increasing $\eta_5$ raises both steady-state magnitude and $\rho(J^\star)$, indicating stronger sustained activation and reduced local stability margin. Increasing $\eta_7$ mainly raises steady-state magnitude while leaving $\rho(J^\star)$ nearly unchanged, consistent with additive forcing. Increasing $\kappa$ reduces both steady-state magnitude and $\rho(J^\star)$, confirming the stabilizing role of self-inhibition.

The two-dimensional sweep in Figure~\ref{fig:montecarlocognition_stabilitymaps} further shows that stronger recurrent amplification remains locally stable only when accompanied by sufficient self-inhibition. Thus, Hypothesis~\ref{hypothesis3} is supported for recurrent coupling and self-inhibition, and Hypothesis~\ref{hypothesis4} is supported for additive drive. The weaker effect of $\eta_6$ suggests that input-gated coupling becomes distinguishable only in operating regimes that sufficiently activate the corresponding bilinear term.

\begin{figure*}
    \centering
    \begin{subfigure}[t]{0.48\linewidth}
        \centering
        \includegraphics[width=.85\linewidth]{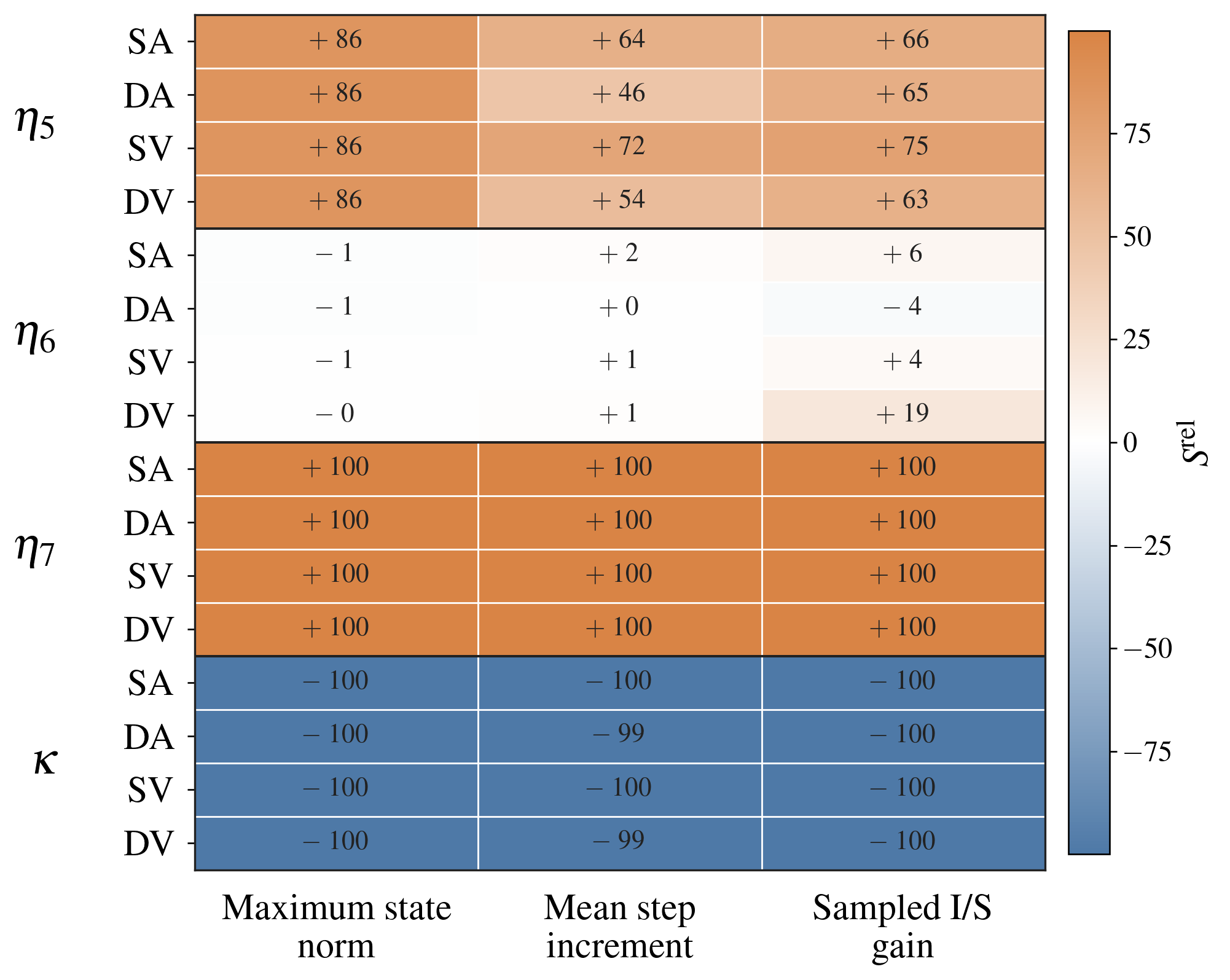}
        \caption{Heatmap of cognition parameter effects across input families.} \label{fig:montecarlocognition_heatmapmain}
    \end{subfigure}\hfill
    \begin{subfigure}[t]{0.50\linewidth}
        \centering
        \raisebox{0.04cm}{%
        \includegraphics[width=.95\linewidth]{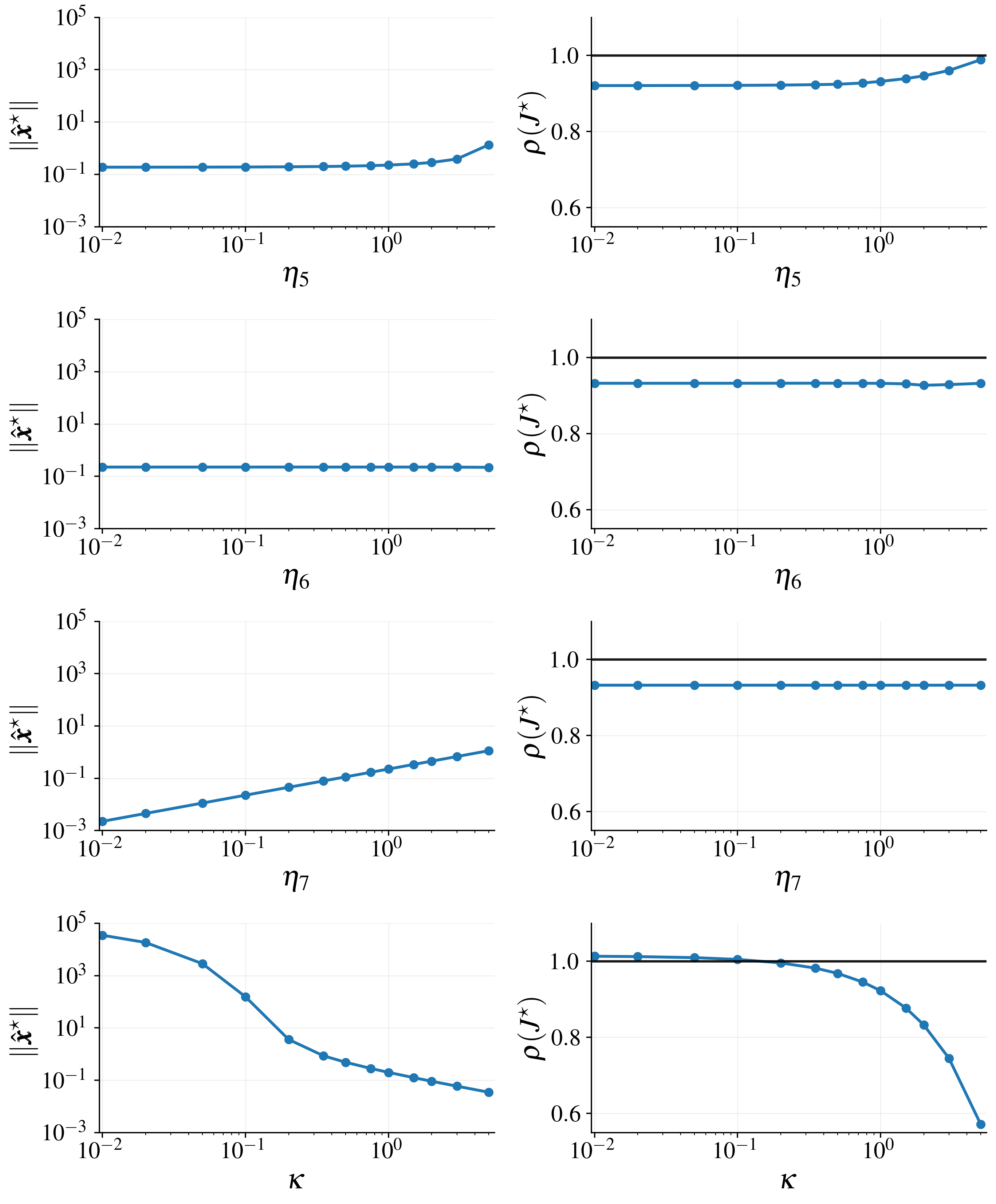}
        }
        \captionsetup{skip=0.6em}
        \caption{Constant-input steady-state and local stability scans.} \label{fig:montecarlocognition_eqscanmain}
    \end{subfigure}
    \caption{
    Sensitivity and constant-input local stability analysis of the cognition module.
    (\subref{fig:montecarlocognition_heatmapmain}) Endpoint-change heatmap for maximum state norm, mean step increment, and sampled finite-horizon \gls{is} gain across input families. Colors encode $\srel$ as defined in \eqref{eqn:srel}.
    (\subref{fig:montecarlocognition_eqscanmain}) One-at-a-time constant-input sweeps of $\eta_5$, $\eta_6$, $\eta_7$, and $\kappa$. Left panels show the terminal steady-state magnitude; right panels show the spectral radius $\rho(J^\star)$ of the constant-input Jacobian. The black line marks $\rho(J^\star)=1$.
    }
    \label{fig:montecarlocognition_main}
\end{figure*}

\begin{figure*}[!htbp]
    \centering
    \begin{subfigure}[t]{0.49\linewidth}
        \centering
        \includegraphics[width=.85\linewidth]{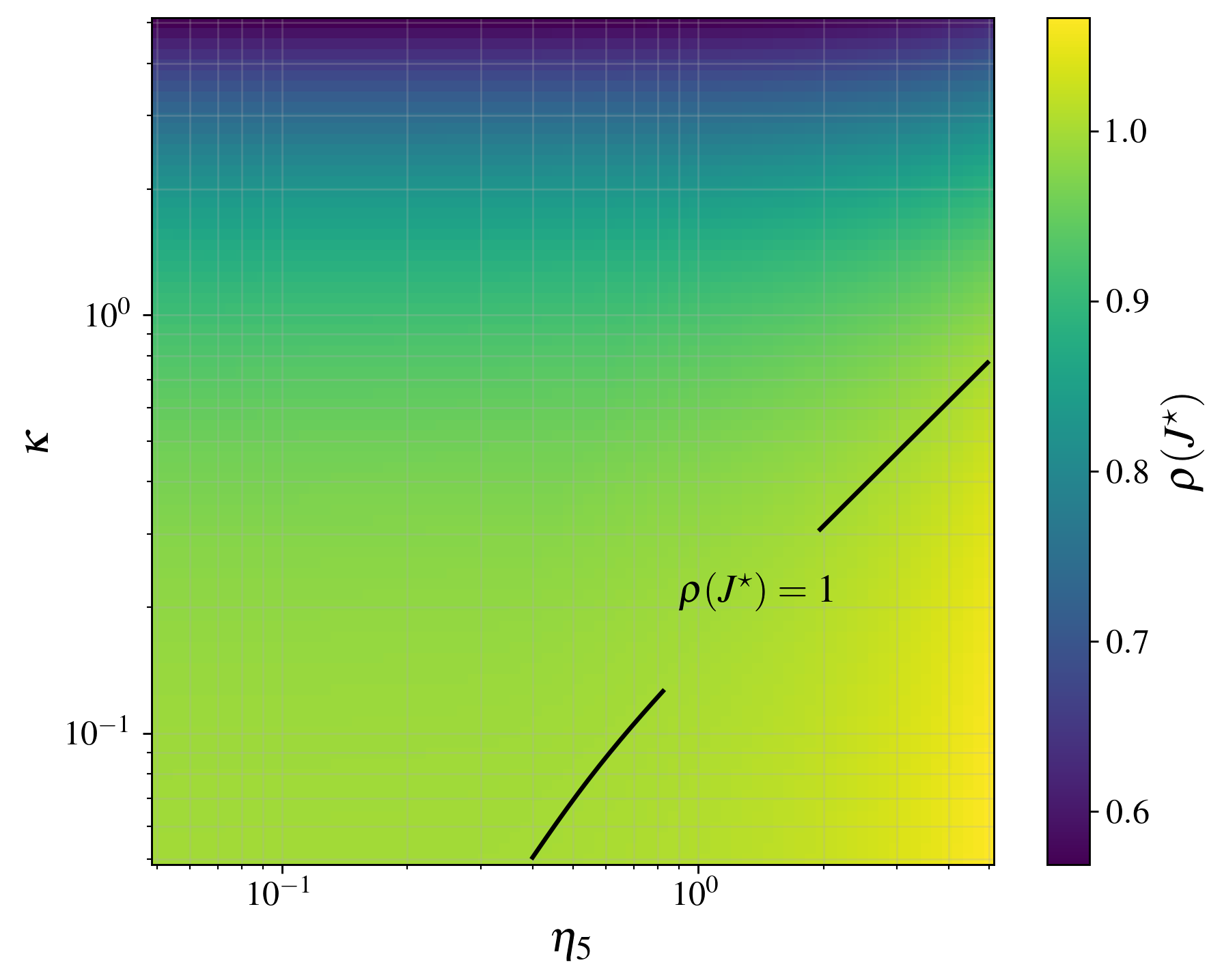}
        \caption{Spectral radius over $(\eta_5,\kappa)$.}
        \label{fig:montecarlocognition_spectralradiusmap}
    \end{subfigure}\hfill
    \begin{subfigure}[t]{0.49\linewidth}
        \centering
        \includegraphics[width=.85\linewidth]{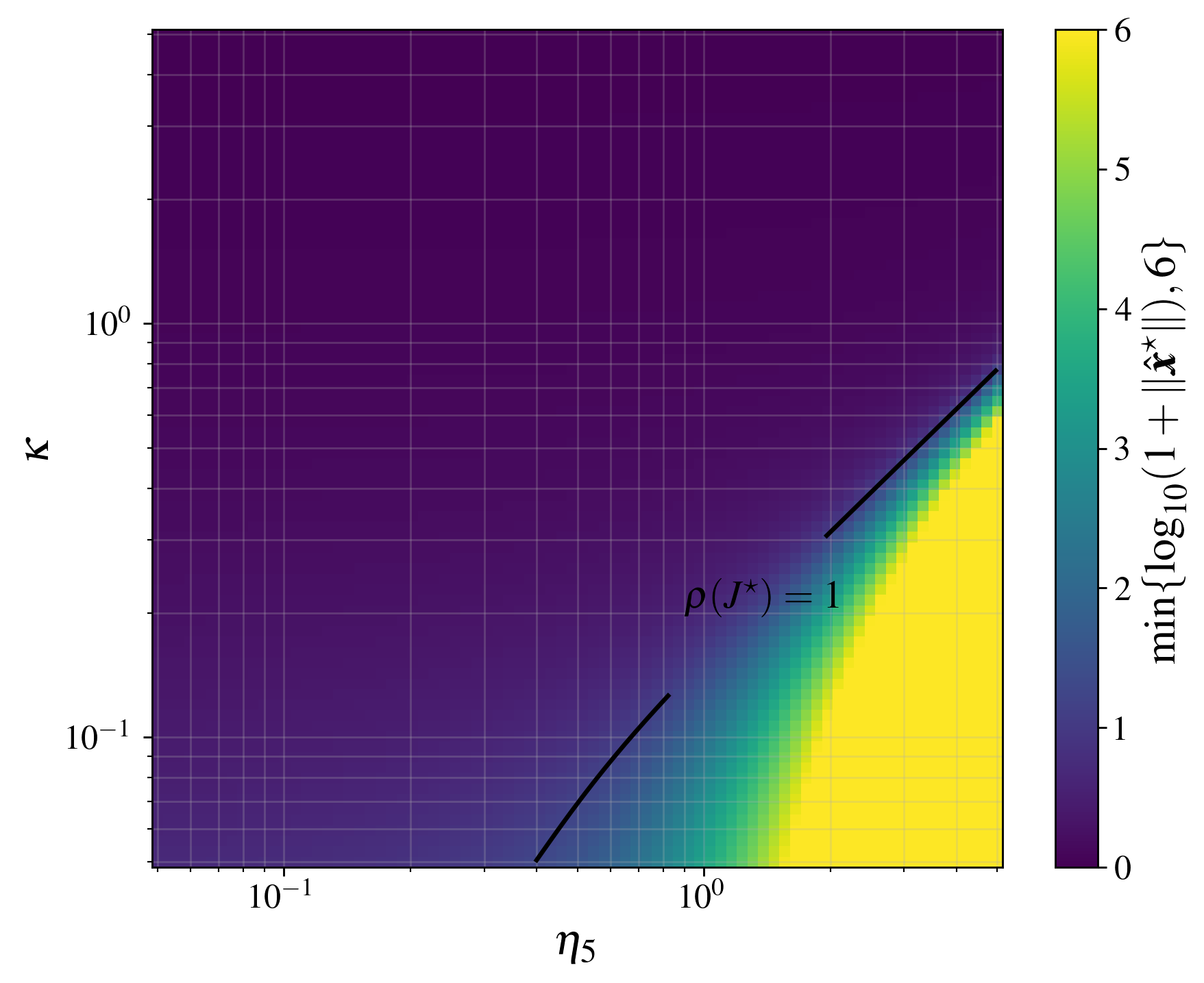}
        \caption{Clipped logarithmic terminal-response magnitude over $(\eta_5,\kappa)$.}
        \label{fig:montecarlocognition_eqnormmap}
    \end{subfigure}
    \caption{
    Two-dimensional constant-input sweep of the cognition module over recurrent coupling scale $\eta_5$ and self-inhibition gain $\kappa$. All other cognition parameters are held fixed at their baseline values in Table~\ref{tab:baseline_simulation_parameters} in \ref{appendix:baselinesensitivityparams}; the black contour denotes $\rho(J^\star)=1$.
    (\subref{fig:montecarlocognition_spectralradiusmap}) Increasing $\eta_5$ reduces the local stability margin, whereas increasing $\kappa$ restores damping.
    (\subref{fig:montecarlocognition_eqnormmap}) The color scale reports $\min \{ \log_{10}(1+\norm{\hat{\bx}^{\star}}),6 \}$. Saturated yellow regions indicate terminal response magnitudes at or above the plotting cutoff.
    }
\label{fig:montecarlocognition_stabilitymaps}
\end{figure*}

\subsubsection{Intention formation}
\noindent
We vary $\eta_8$, $\eta_9$, $\eta_{10}$, and $\eta_{11}$ to evaluate Hypothesis~\ref{hypothesis5}. Figure~\ref{fig:montecarlointention_heatmap} shows that belief-gate parameters and goal-salience parameters have distinct effects. Increasing the belief offset $\eta_{10}$ produces large positive endpoint changes in sampled \gls{io} gain and mean step increment, and increasing the belief-weight scale $\eta_{11}$ produces similar but weaker effects. Thus, shifting the belief gate toward activation increases the sensitivity and variability of intention formation.

In contrast, increasing the goal-salience parameters $\eta_8$ and $\eta_9$ reduces both metrics in the sampled operating regime, reflecting the operating range of the saturating goal-salience nonlinearity: larger $\eta_9$ shifts the response toward higher goal intensities, while larger $\eta_8$ sharpens the transition and reduces perturbation transmission. Figure~\ref{fig:intention_belief_surface} confirms the qualitative gating mechanism: supportive beliefs increase intention intensity, whereas suppressive beliefs reduce it. Together, these results support Hypothesis~\ref{hypothesis5}.

\begin{figure*}[!t]
    \centering
    \begin{subfigure}[t]{0.50\linewidth}
        \centering
        \includegraphics[width=.8\linewidth]
        {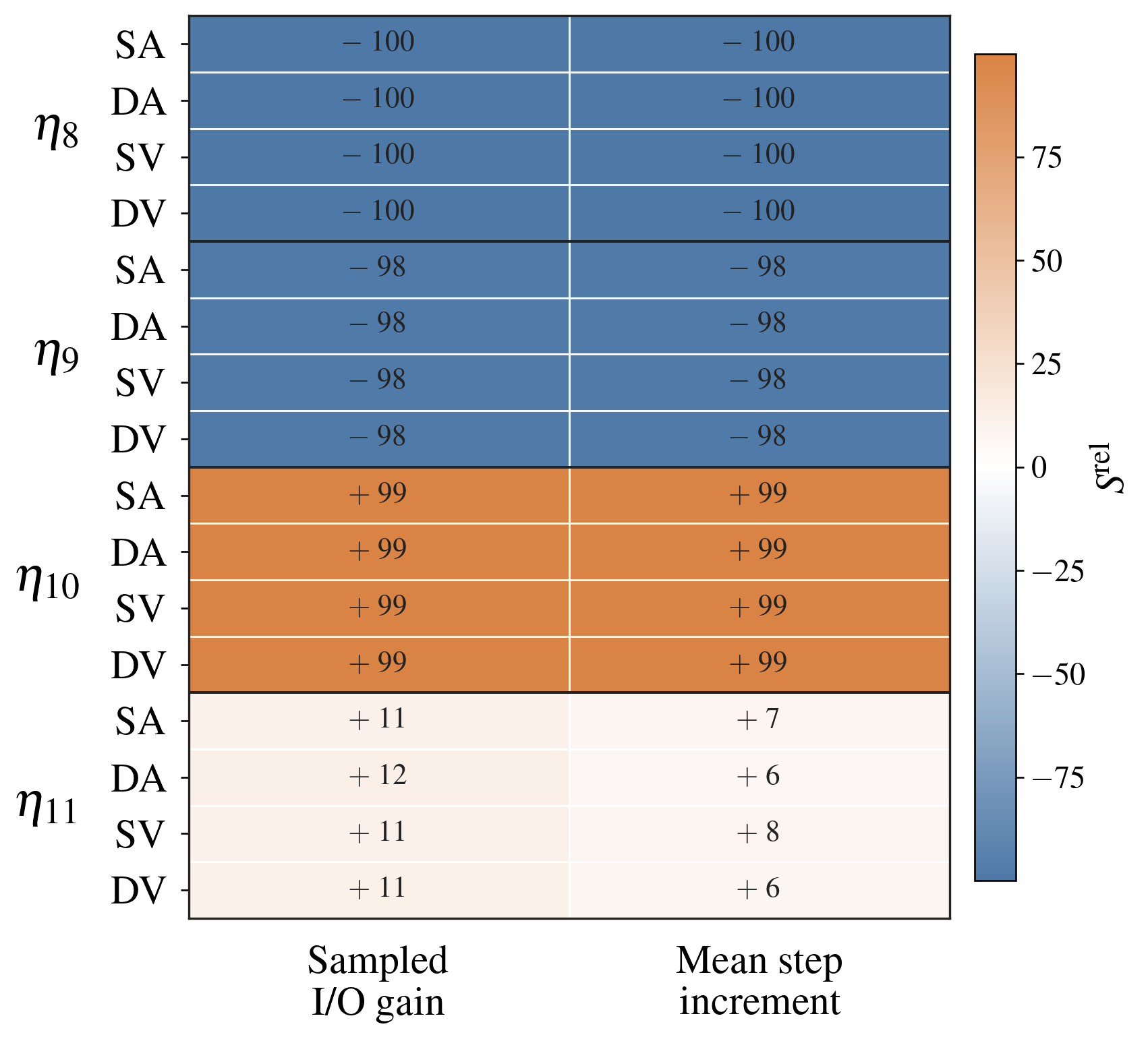}
        \captionsetup{skip=0.7em}
        \caption{Heatmap of intention formation parameter effects across input families.}
        \label{fig:montecarlointention_heatmap}
        \end{subfigure}\hfill
        \begin{subfigure}[t]{0.48\linewidth}
        \centering
        \raisebox{-0.06cm}{%
        \includegraphics[width=.8\linewidth]{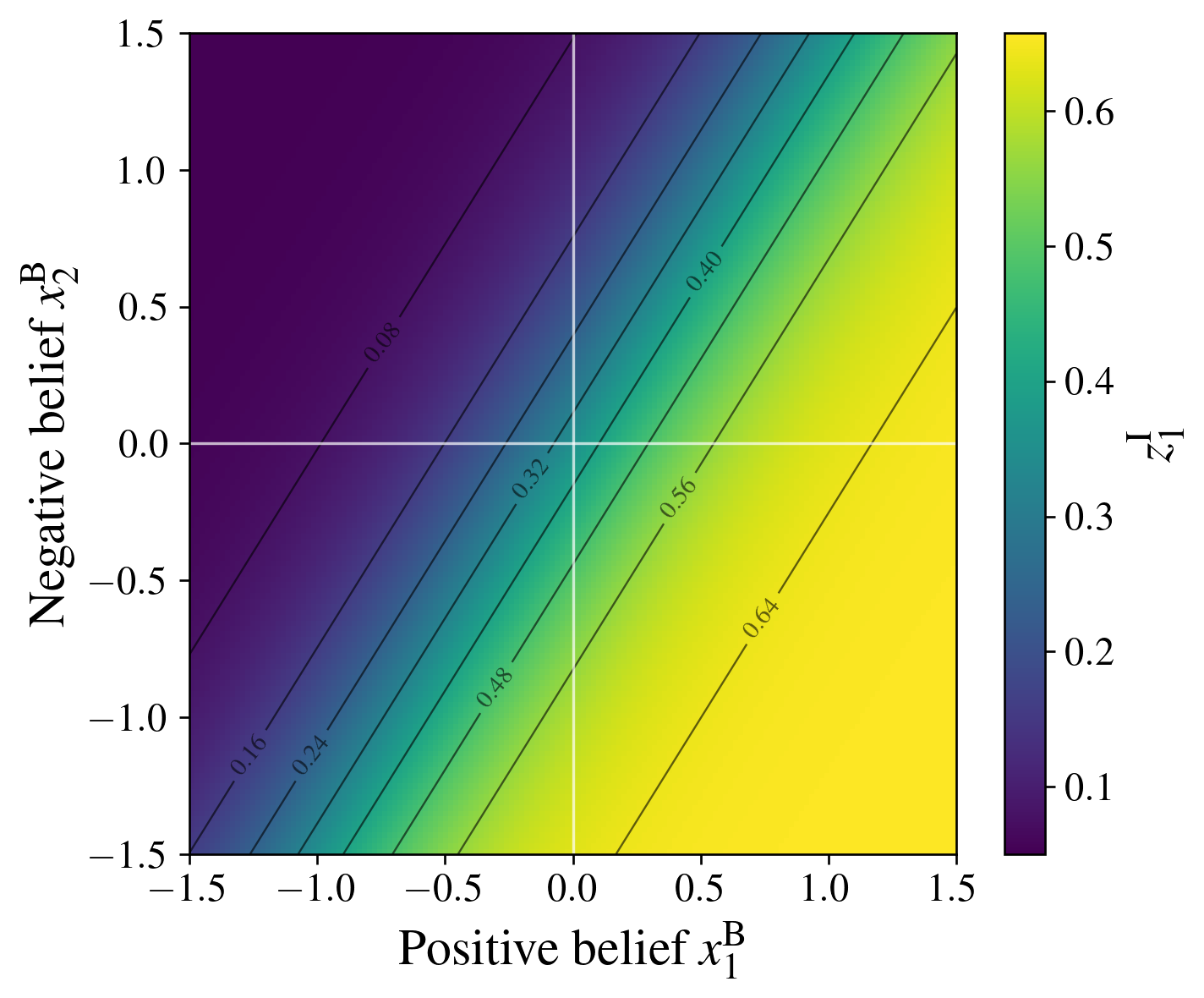}
        }
        \caption{Belief-gate visualization for intention formation.}
        \label{fig:intention_belief_surface}
    \end{subfigure}
        \caption{
        Sensitivity analysis and belief-gate visualization of intention formation.
        (\subref{fig:montecarlointention_heatmap}) Endpoint-change heatmap for sampled \gls{io} gain and mean step increment across input families. Colors encode $\srel$ as defined in \eqref{eqn:srel}.
        (\subref{fig:intention_belief_surface}) Belief-gate visualization with fixed goal input. A positively weighted belief $x_1^{\mathrm{B}} \in \mathcal{B}_1^{+}$ and a negatively weighted belief $x_2^{\mathrm{B}} \in \mathcal{B}_1^{-}$ are varied; color indicates $z_1^{\mathrm{I}}$, and black contour lines indicate level sets.
        }
    \label{fig:montecarlointention_main}
\end{figure*}

\subsubsection{Action selection}
\noindent
We vary the action threshold $\phiasi$ and competition-weight scale $\eta_{12}$ to evaluate Hypothesis~\ref{hypothesis6}. Figure~\ref{fig:montecarloaction_heatmap} shows that switch count remains approximately unchanged, indicating that these parameters mainly regulate activation strength and decisiveness rather than repeatedly changing the winning action.

Increasing $\phiasi$ reduces sampled \gls{io} gain, mean step increment, and winner margin across all input families. This is consistent with thresholding that makes action activation harder to elicit. Increasing $\eta_{12}$ also reduces sampled \gls{io} gain and mean step increment, indicating that stronger competition suppresses action-amplitude sensitivity and temporal reactivity. Its effect on winner margin is less uniform and therefore operating-regime dependent. Overall, these trends support Hypothesis~\ref{hypothesis6}: in the present operating regime, action selection parameters mainly regulate action strength and robustness, while upstream intention trajectories largely determine which action is selected.

\begin{figure*}[!t]
    \centering
    \begin{subfigure}[t]{0.50\linewidth}
        \centering
        \includegraphics[width=.8\linewidth]{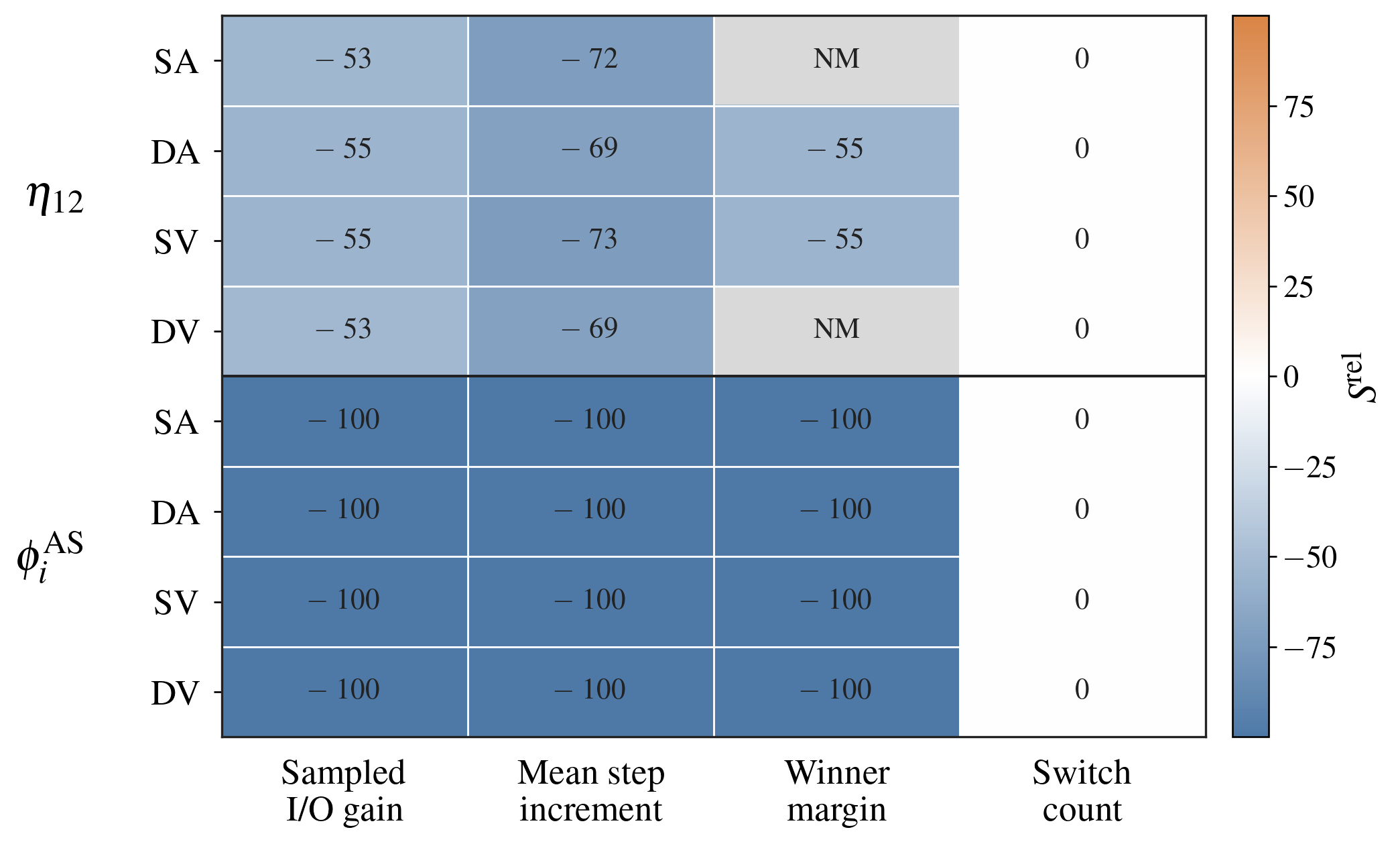}
        \captionsetup{skip=1.3em}
        \caption{Heatmap of action selection parameter effects across input families.}
        \label{fig:montecarloaction_heatmap}
        \end{subfigure}\hfill
        \begin{subfigure}[t]{0.48\linewidth}
        \centering
        \raisebox{-0.26cm}{%
        \includegraphics[width=.8\linewidth]{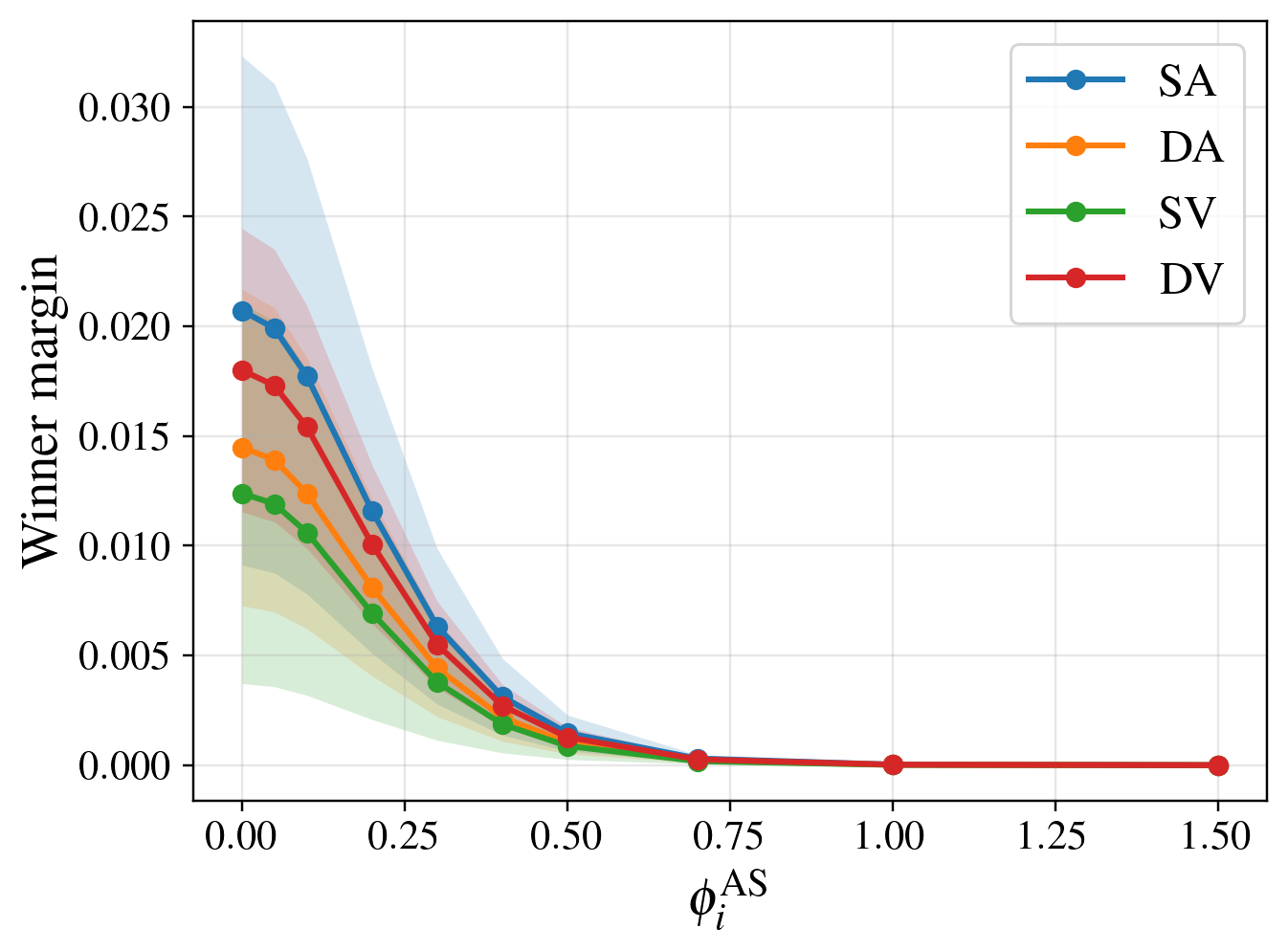}
        }
        \caption{Winner margin under a sweep of $\phi_i^{\mathrm{AS}}$.}
        \label{fig:montecarloaction_phi_margin}
    \end{subfigure}
    \caption{
    Sensitivity analysis of the action selection stage.
    (\subref{fig:montecarloaction_heatmap}) Endpoint-change heatmap for sampled \gls{io} gain, mean step increment, winner margin, and switch count across input families. Colors encode $\srel$ as defined in \eqref{eqn:srel}.
    (\subref{fig:montecarloaction_phi_margin}) Sweep of $\phi_i^{\mathrm{AS}}$, showing that stronger thresholding reduces the winner margin across all input families.
    }
    \label{fig:montecarloaction_main}
\end{figure*}

\subsection{Implications for practical parameter distinguishability}
\label{sec:parameter_identifiability}
\noindent
The sensitivity analyses provide a preliminary indication of practical parameter distinguishability within the chosen synthetic input regimes, but they are not a formal identifiability result. Parameters are easier to distinguish when they produce different signatures across metrics. For example, in predictive inference, $\eta_4$ and $\thetapil$ have opposite effects on tracking error and sampled \gls{is} gain. In cognition, combining trajectory metrics with constant-input scans helps separate recurrent amplification from additive drive: $\eta_5$ changes both state magnitude and local stability, whereas $\eta_7$ mainly changes steady-state magnitude in the baseline regime.

Conversely, parameters that affect the same metrics in the same direction may require richer input excitation or additional observations. Future empirical estimation should therefore combine behavioral and physiological features. A natural extension is to embed the model in an observer or state-estimation framework, such as extended Kalman filtering for unobserved mental states from intermittent measurements \cite{b102}.

\subsection{Closed-loop implementation: rehabilitation showcase}
\label{sec:rehabilitation_case_study}
\noindent
This section illustrates how the perception--cognition--decision model can be embedded in a receding-horizon controller. The case study is a model-class-matched proof of concept, not empirical validation or evidence of clinical effectiveness.

We consider a simulated rehabilitation scenario in which a robotic coach suggests upper-limb movements to a patient. At interaction step $k$, meaning one discrete robot--patient exchange in which the robot proposes a movement and observes the simulated patient response, the robot suggests a movement difficulty $\dsuggest(k) \in \{0,1, \ldots,\overline{d}\}$, where $\dsuggest(k)=0$ represents rest and larger values increasingly demanding rehabilitation movements. The patient may either perform or avoid the suggestions, thus the action $\bza(k)$ has two elements $\zai(k)$ (cf. \eqref{eqn:actionselection}), with $i = \left\{\text{perform, avoid}\right\}$. The binary response is obtained from the action-selection outputs as

\begin{align}
\label{eqn:ypatient}
    \ypatient(k)
    =
    \mathbb{I}
    \left[
    z^{\mathrm{A}}_{\mathrm{perform}}(k)
    \geq
    z^{\mathrm{A}}_{\mathrm{avoid}}(k)
    \right],
\end{align}
where $\ypatient(k)=1$ indicates performance, $\ypatient(k)=0$ indicates rejection, and $\mathbb{I}[\cdot]$ denotes the indicator function, equal to $1$ if its argument is true and $0$ otherwise. The realized performed difficulty is $d^{\mathrm{perform}}(k)=\ypatient(k)\cdot \dsuggest(k)$, with $\dsuggest(k)$ the current suggested difficulty.

The patient receives two normalized inputs: $\dsuggest(k)$ and the previously performed difficulty $d^{\mathrm{perform}}(k-1)$, representing recently realized effort. This separation distinguishes anticipatory threat due to the proposed movement from fatigue caused by previously performed movement. Thus,

\begin{align*}
    \boldsymbol{u}^{\mathrm{patient}}(k)
    = \frac{1}{\overline{d}}
    \begin{bmatrix}
        \dsuggest(k) &
        d^{\mathrm{perform}}(k - 1)
    \end{bmatrix}^\top.
\end{align*}

The simulated patient is generated by propagating one instance of the perception--cognition--decision model introduced in Section~\ref{sec:mathreprhumandecisionmaking}, with the rehabilitation-specific parameterization listed in Table~\ref{tab:rehabilitation_case_study_parameters} in \ref{appendix:rehabilitationparams}. We define the state vector $\bx$ including six elements, i.e., $i \in \left\{\text{perform}, \text{avoid}, \text{comfort}, \text{capability}, \text{threat}, \text{fatigue} \right\}$. Here, perform and avoid are goal states, whereas belief states include perceived comfort, perceived capability to perform the movement, perceived threat of the movement (e.g. pain, risk, difficulty), and perceived physical or mental fatigue. Comfort and capability support the perform intention and suppress the avoid intention, whereas threat and fatigue support the avoid intention and suppress the perform intention.

The robot propagates a perturbed copy of the patient model, represented by perturbed module mappings $\zhatperception(\cdot)$, $\zhatcognition(\cdot)$, and $\zhatdecision(\cdot)$ for perception, cognition, and decision-making, respectively. The perturbation and process-noise settings are reported in Table~\ref{tab:rehabilitation_noise_perturbation_settings} in \ref{appendix:rehabilitationparams}. Throughout the receding-horizon rollout, a quantity written as $q(k^{\prime}\mid k)$ denotes the value of $q$ at prediction step $k^{\prime}\geq k$, evaluated using information available at interaction step $k$. Hatted variables denote robot-predicted internal quantities.

During prediction, the robot uses the same normalized input structure as the simulated patient:
\begin{align}
\label{eqn:urobot}
    \urobot(k' \mid k)
    =
    \frac{1}{\overline d}
    \begin{bmatrix}
        \dsuggest(k' \mid k) &
        \dhatperform(k'-1 \mid k)
    \end{bmatrix}^{\top}.
\end{align}
The rollout is initialized using the realized previous values, $\dhatperform(k-1\mid k)=\dperform(k-1)$ and $\dsuggest(k-1\mid k)=\dsuggest(k-1)$.

In addition to task-completion feedback, the robot receives a scalar perceived comfort feedback signal, interpreted as a noisy observation of the patient comfort state:
\begin{align*}
    \xi^{\mathrm B}_{\mathrm{comfort}}(k)
    \sim
    \mathcal N
    \left(
    x^{\mathrm B}_{\mathrm{comfort}}(k),
    \sigma_{\mathrm{comfort}}^2
    \right).
\end{align*}
Before selecting the next movement, the robot applies a heuristic correction to the comfort component of its internal prediction:
\begin{align*}
\begin{split}
    \xhatb{comfort}(k \mid k)
    &\leftarrow
    (1-c)
    \xhatb{comfort}(k \mid k-1)
    \\
    &+
    c
    \operatorname{clip}
    \big(
    \xi^{\mathrm{B}}_{\mathrm{comfort}}(k),-1,1
    \big),
\end{split}
\end{align*}
where $c\in[0,1]$ is a correction gain that blends the model-predicted comfort state with the noisy comfort feedback. The remaining belief and goal states are not directly observed and are propagated using the internal model of the robot. Figure~\ref{fig:rehabilitation_block_diagram} summarizes the resulting closed-loop information flow.

\begin{figure}[!t]
    \centering
    \resizebox{\columnwidth}{!}{%
    \begin{tikzpicture}[
    font=\small,
    node distance=1.10cm and 1.35cm,
    block/.style={
    draw,
    rounded corners,
    align=center,
    minimum width=3.25cm,
    minimum height=0.85cm
    },
    arrow/.style={->, thick}
    ]
    
    \node[] (target) {$\dtarget(k)$};
    
    \node[block, below=of target] (mpc)
    {Receding-horizon\\MPC controller};
    
    \node[block, right=2.2cm of mpc] (patient)
    {Patient model};
    
    \node[
    block,
    below=2.5cm of patient,
    xshift=-5.50cm,
    minimum width=4.40cm,
    minimum height=2.20cm,
    label={[yshift=-0.5cm]above:Robot prediction model}
    ] (internal) {};
    
    \node[
    block,
    minimum width=3.25cm,
    minimum height=0.85cm
    ] (comfort) at ([xshift=0cm,yshift=-0.25cm]internal.center)
    {$\hat{x}^{\textup{\textsc{b}}}_{\mathrm{comfort}}(k\mid k)$\\correction};
    
    \draw[arrow] (target) -- (mpc);
    
    \draw[arrow] (mpc) -- node[above, align=center]
    {$\dsuggest(k)$}
    (patient);
    
    \draw[arrow]
    ([xshift=-0.20cm]patient.south)
    -- ++(0,-0.75)
    coordinate (dperformbottom)
    -- ++(-2.5,0)
    coordinate (dperformleft)
    -- ++(0,0.88)
    -- ([yshift=-0.30cm]patient.west);
    
    \node[below=0.08cm, inner sep=1pt] at
    ($(dperformbottom)!0.50!(dperformleft)$)
    {$d^{\mathrm{perform}}(k-1)$};
    
    \draw[arrow]
    ([xshift=1.2cm]patient.south)
    -- ++(0,-3.80)
    -- node[above, fill=white, inner sep=1pt]
    {$\xi^{\mathrm B}_{\mathrm{comfort}}(k)$}
    (comfort.east);
    
    \draw[arrow]
    ([xshift=0.5cm]patient.south)
    -- ++(0,-3.1555)
    -- ++(-3.25,0)
    node[midway, above, fill=white, inner sep=1pt, align=center]
    {$\ypatient(k)$\\$d^{\mathrm{perform}}(k)$}
    |- ([yshift=0.45cm]internal.east);
    
    \draw[arrow] (mpc.south) -- 
    node[midway, right, fill=white, inner sep=1pt, align=center]
    {$\urobot(k'\mid k)$}
    (internal.north);
    
    \draw[arrow] (internal.west) -- ++(-1.0,0) |- 
    node[pos=0.26, right, fill=white, inner sep=1pt, align=center]
    {$\hat{\bx}(k'\mid k)$\\
    $\hat{y}^{\mathrm{patient}}(k'\mid k)$\\
    $\hat{d}^{\mathrm{perform}}(k'\mid k)$}
    (mpc.west);
    
    \end{tikzpicture}%
    }
    \caption{
    Block diagram of the closed-loop rehabilitation showcase. 
    }
    \label{fig:rehabilitation_block_diagram}
\end{figure}

The robot uses receding-horizon \gls{mpc} with horizon $n^{\mathrm P} \in \mathbb{N}$. For each candidate vector 
of suggested difficulty levels 

\begin{align*}
\boldsymbol{d}^{\textup{suggest}}_{k:k+n^{\mathrm{P}}-1\mid k}
    =
    \left[
    \dsuggest(k\mid k),
    \cdots,
    \dsuggest(k+n^{\mathrm P}-1\mid k)
    \right]^\top,
\end{align*}
the robot propagates its internal model and computes a predicted cumulative cost, defined as 

\begin{align*}
    \widehat{\mathcal{J}}_k
    =
    \sum_{j=0}^{n^{\mathrm{P}}-1}
    J
    \Bigg(
    &\hat{\bx}(k+j+1 \mid k),
    \yhatpatient(k+j \mid k), \nonumber\\
    &\dsuggest(k+j\mid k),
    \dsuggest(k+j-1\mid k)
    \Bigg).
\end{align*}
The stage cost is evaluated as

\begin{align}
\label{eqn:costfunction}
    \begin{split}
    J&
    \Biggl(
    \hat{\bx}(k^{\prime}+1 \mid k),
    \yhatpatient(k^{\prime} \mid k), \\
    &\dsuggest(k^{\prime}\mid k),
    \dsuggest(k^{\prime}-1\mid k)
    \Biggr) \\
    &\coloneqq
    \lambda_1
    \left[
    \xhatb{threat}(k^{\prime}+1 \mid k)
    \right]_+
    +
    \lambda_2
    \left[
    \xhatb{fatigue}(k^{\prime}+1 \mid k)
    \right]_+ \\
    &+
    \lambda_3
    \left[
    \xbarbcomfort
    -
    \xhatb{comfort}(k^{\prime}+1 \mid k)
    \right]_+^2 \\
    &+
    \lambda_4
    \left(
    1-\yhatpatient(k^{\prime} \mid k)
    \right) \\
    &+
    \frac{\lambda_5}{\overline{d}^2}
    \left(
    \dhatperform(k^{\prime} \mid k)
    -
    \dtarget(k^{\prime})
    \right)^2 \\
    &+
    \frac{\lambda_6}{\overline{d}^2}
    \left(
    \dsuggest(k^{\prime}\mid k)
    -
    \dtarget(k^{\prime})
    \right)^2 \\
    &+
    \frac{\lambda_7}{\overline{d}^2}
    \left(
    \dsuggest(k^{\prime} \mid k)
    -
    \dsuggest(k^{\prime}-1 \mid k)
    \right)^2,
    \end{split}
\end{align}
where 
\begin{align}
\label{eqn:dhatperform}
    \dhatperform(k^{\prime} \mid k)
    \coloneqq
    \yhatpatient(k^{\prime} \mid k)
    \dsuggest(k^{\prime} \mid k),
\end{align}
and $\yhatpatient(k' \mid k)$ is the predicted binary performance decision obtained by applying \eqref{eqn:ypatient} to the predicted action selection outputs generated during the MPC rollout. The stage cost \eqref{eqn:costfunction} penalizes predicted threat, fatigue, comfort below an acceptable level $\xbarbcomfort$, rejection, deviation from a rehabilitation target $\dtarget(k')$, and abrupt changes in suggested difficulty. The weights satisfy $\lambda_1,\ldots,\lambda_7 \geq 0$. 

The finite-horizon objective is not assumed to be differentiable or convex. In this discrete implementation, candidate difficulty sequences are evaluated by exhaustive enumeration using the rollout equations of the robot prediction model; the full constrained rollout formulation is given in \ref{appendix:closedlooprollout}. Only the first element of the minimizing sequence is applied, $\dsuggest(k)={\dsuggest}^{\star}(k\mid k)$, and the procedure is repeated after new feedback is received. The realized cost is an internal simulation metric, not an externally validated clinical outcome.

As baselines, we use a target-following controller and a random controller. Neither baseline uses feedback or propagates the patient model. The target-following controller applies the prescribed rehabilitation target directly, $\dsuggest(k)=\dtarget(k)$, whereas the random controller samples uniformly from the admissible difficulty set. Thus, the target-following controller provides a structured open-loop baseline, while the random controller provides an unstructured open-loop reference. 

Figure~\ref{fig:rehabilitation_case_study_results} summarizes the closed-loop behavior. In the representative run, the model-based controller keeps the action selection margin $z^{\mathrm{A}}_{\mathrm{perform}}-z^{\mathrm{A}}_{\mathrm{avoid}}$ positive throughout the simulation, indicating that the patient remains in the perform regime. The target-following baseline initially remains in the perform regime, but its margin decreases over time and becomes negative after approximately the middle of the horizon, leading to repeated refusals. The random baseline frequently proposes high-intensity movements without accounting for patient state or rehabilitation target, and quickly drives the patient into the avoid regime. Across repeated simulations, the model-based controller obtains the lowest realized cumulative cost within this simulated model-class-matched proof-of-concept setting.

Figure~\ref{fig:rehabilitation_cognitive_state_dynamics} shows that these behavioral differences are reflected in the latent patient states. Under model-based control, the perform goal remains higher, the avoid goal remains lower, threat is reduced, and capability is ultimately higher than under target following. Comfort decreases gradually because movements continue to be performed, but the controller avoids an abrupt collapse. Fatigue increases under model-based control because the patient continues performing movements. This should not be interpreted as fatigue being desirable. Rather, under the chosen cost weights and target schedule, moderate fatigue is tolerated when it accompanies continued task performance, low threat, sustained comfort, and target tracking. Different clinical priorities could be represented by increasing $\lambda_2$ or modifying the target schedule. In contrast, the later reduction in fatigue under target following reflects non-performance after repeated refusal rather than improved rehabilitation behavior.

Because both comparison controllers are open loop, the results demonstrate an advantage over these particular baselines but do not isolate the contribution of latent-state prediction from the broader benefit of feedback-based adaptation.

\begin{figure*}[!t]
    \centering
    \begin{subfigure}[t]{0.49\linewidth}
    \centering
    \includegraphics[width=\linewidth]{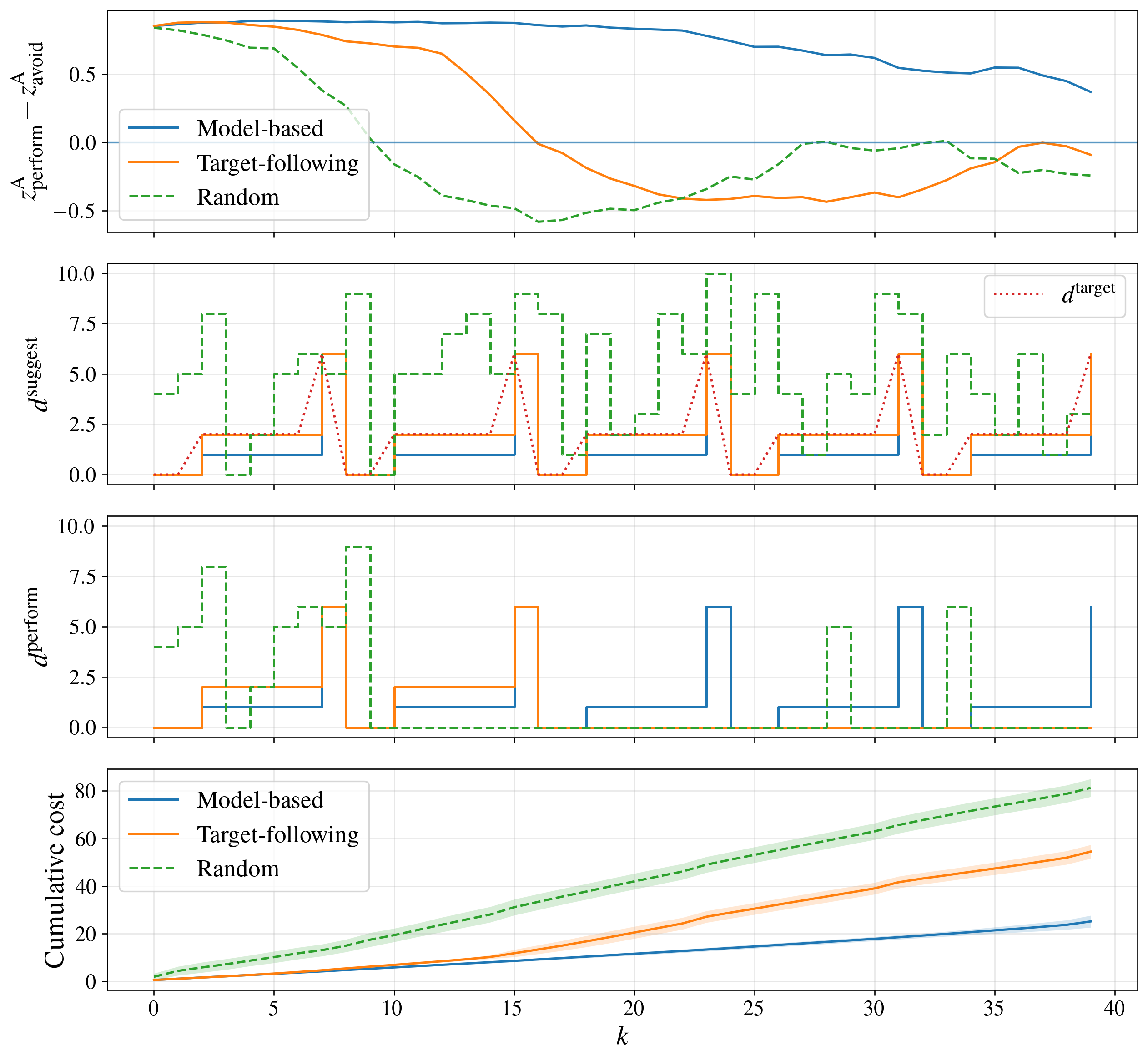}
    \caption{
    Representative closed-loop behavior and realized cumulative cost across repeated simulations.
    }
    \label{fig:rehabilitation_case_study_results}
    \end{subfigure}\hfill
    \begin{subfigure}[t]{0.49\linewidth}
    \centering
    \raisebox{0.05cm}{
    \includegraphics[width=\linewidth]{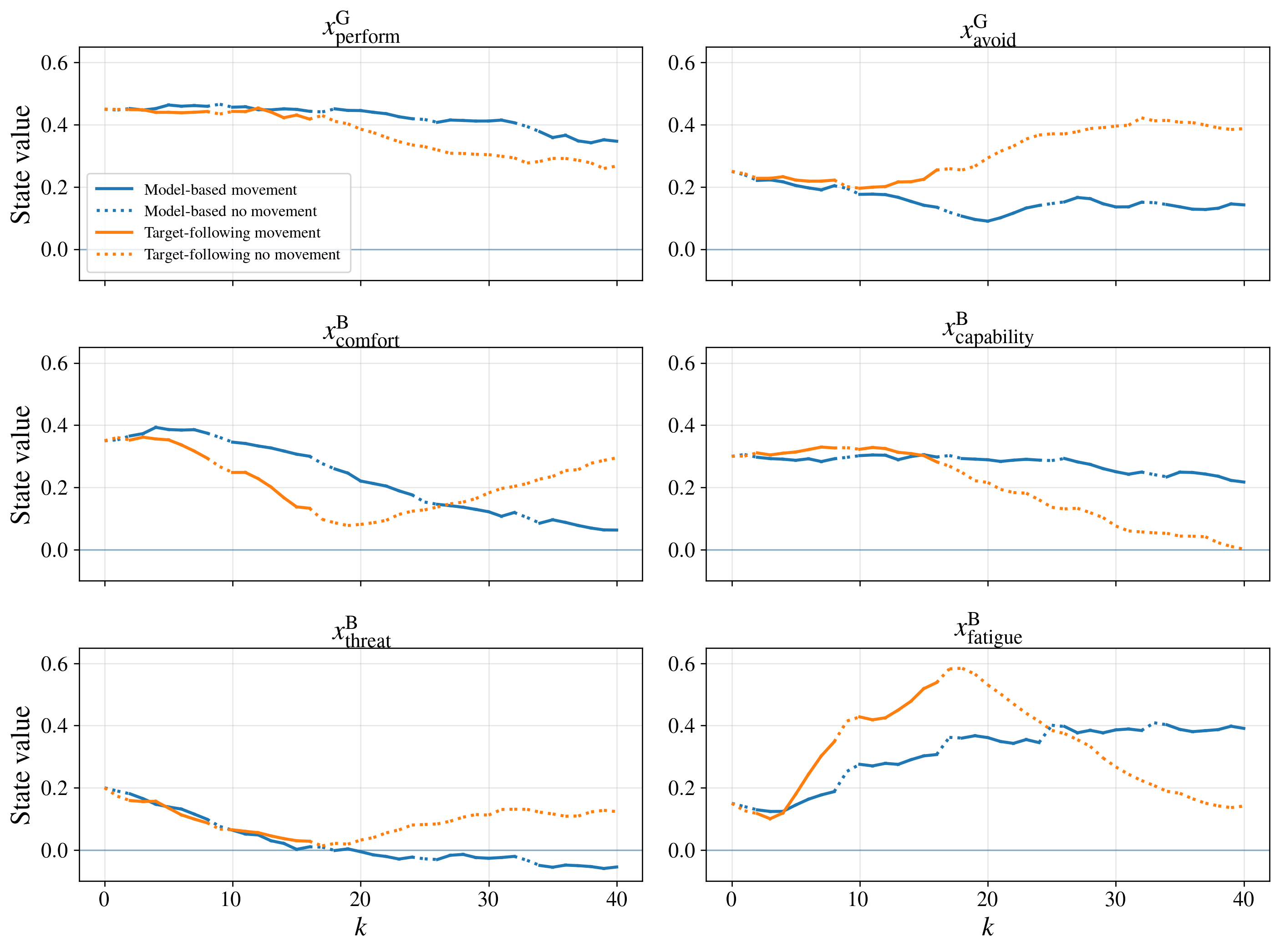}
    }
    \caption{
    Latent cognitive state trajectories under model-based control and target following.
    }
    \label{fig:rehabilitation_cognitive_state_dynamics}
    \end{subfigure}
    \caption{
    Closed-loop rehabilitation showcase.
    (\subref{fig:rehabilitation_case_study_results}) Closed-loop rehabilitation case-study results. The top three panels show one representative closed-loop run: the action selection margin $z^{\mathrm{A}}_{\mathrm{perform}}-z^{\mathrm{A}}_{\mathrm{avoid}}$, the suggested difficulty, and the performed difficulty. Positive margin values indicate that the patient performs the suggested movement. The final panel reports the mean realized cumulative cost over 50 repeated simulations, with shaded bands indicating one standard deviation.
    (\subref{fig:rehabilitation_cognitive_state_dynamics}) Latent cognitive state trajectories under the model-based controller and the target-following baseline. The target-following baseline follows the prescribed rehabilitation progression without patient-state adaptation. Dotted trajectory segments indicate time intervals in which no movement is performed, including both suggested rest and refused nonzero suggestions.
    }
    \label{fig:rehabilitation_showcase}
\end{figure*}

\section{Conclusion}
\label{sec:conclusion}
\noindent
This paper introduced a modular state-space framework linking sensory input to behavioral output through explicit perception, cognition, and decision-making dynamics. The framework combines interpretable latent-state structure with sufficient conditions for boundedness, regularity, forward invariance, perceptual contraction under constant input, and regional \gls{iss} of the cognition dynamics.

The numerical analyses illustrated how model parameters regulate perceptual tracking, cognitive amplification and damping, intention expression, and action decisiveness. A model-class-matched rehabilitation showcase further demonstrated how the framework can be embedded in receding-horizon control using partial feedback. Together, these results illustrate how structured cognitive models can support both dynamical analysis and model-based feedback control.

Several limitations remain. First, the results are simulation-based and do not establish empirical validity with human participants. In the rehabilitation showcase, the simulated patient is generated by the same model class used by the controller, so the experiment does not test robustness to structural mismatch with real human behavior. Second, the model parameters, rehabilitation target, and objective weights are manually specified rather than estimated from behavioral or clinical data. Third, the rehabilitation example uses a simplified binary response and scalar comfort-feedback signal, whereas real rehabilitation involves graded performance, movement quality, pain, recovery, and multimodal feedback. Fourth, exhaustive enumeration is transparent for the small discrete action space considered here, but does not scale to larger horizons, richer action spaces, or continuous robot actions. Finally, the stability analysis does not yet provide recursive feasibility, robust constraint satisfaction, or closed-loop stability guarantees for the full control loop.

These limitations suggest several directions for future work. A first priority is empirical validation using open-source behavioral or \gls{hri} datasets, followed by targeted human-participant studies. A second direction is observer-based estimation, building on the practical distinguishability analysis in Section~\ref{sec:parameter_identifiability}, to infer latent states and constrain uncertain parameters from partial observations. A third direction is robust or stochastic cognitive \gls{mpc}, with explicit treatment of model mismatch, measurement noise, and human response uncertainty. More scalable optimization methods, such as branch-and-bound, mixed-integer \gls{mpc}, or approximate \gls{mpc}, might also be needed for larger state–action spaces and longer horizons. Finally, integrated \gls{mpc}–learning architectures could use residual models or adaptive cost shaping to personalize control online while retaining the interpretability of the model-based framework.

Overall, the proposed framework provides a mathematically explicit foundation for modeling human decision dynamics in adaptive human-centered systems. Its main contribution is to offer a white-box dynamical structure through which latent cognitive states, stability properties, and predictive control objectives can be jointly analyzed. This makes the framework a step toward human–autonomy interaction that adapts to users while preserving interpretability, analyzability, and principled closed-loop behavior.

\section*{Data availability}
\noindent
The data used in this study are synthetically generated within the numerical simulations. The simulation code is publicly available in the GitHub repository \url{https://github.com/svenschoonebeek/State-space-model-of-HDM}.

\bibliographystyle{elsarticle-num-names} 
\bibliography{cas-refs}

\appendix
\section{Proof of analytical results}
\phantomsection
\addcontentsline{toc}{section}{Proof of analytical results}
\label{appendix}
\noindent
This appendix contains the proofs of the lemmas and propositions stated in Section~\ref{sec:stabilityanalysis}.
\medskip

\setcounter{equation}{0}
\renewcommand{\theequation}{A\arabic{equation}}

\setcounter{table}{0}
\renewcommand{\thetable}{A\arabic{table}}

\begin{proof}[Proof of Lemma~\ref{lemma:attentionalselectionsinglestep}]
Under Assumption~\ref{ass:attentionalselection}, $\fasl(\cdot)$ and $\gasl(\cdot)$ are continuously differentiable on open neighborhoods of their admissible domains, and $\gasl(\bu)\geq \gaslmin>0$ holds on $\uset$. Hence, by the quotient rule, the mapping $\bu \mapsto \zasl(\bu)$ defined by \eqref{eqn:perceptualaccessgeneral} is continuously differentiable on an open neighborhood of $\uset$. 
Accordingly, for any $\bu_1,\bu_2\in\uset$, the convexity of $\uset$ and the mean value theorem imply
\begin{align*}
\begin{split}
    \abs{\zasl(\bu_1)-\zasl(\bu_2)}
    &\leq
    \sup_{\xi\in[\bu_1,\bu_2]}
    \norm{\nabla_{\bu}\zasl(\xi)}
    \norm{\bu_1-\bu_2} 
\end{split}
\end{align*}
Since $\uset$ is compact and $\nabla_{\bu}\zasl(\bu)$ is continuous on $\uset$, $\nabla_{\bu}\zasl(\bu)$ attains a finite supremum on $\uset$. 
Therefore, Lemma~\ref{lemma:attentionalselectionsinglestep} holds for $\latsl \coloneqq \sup_{\bu\in\uset}\norm{\nabla_{\bu}\zasl(\bu)}$. 
\end{proof}
\medskip

\begin{proof}[Proof of Proposition~\ref{prop:attentionalselectiontrajectory}]
Since $\bu(i)\in\uset$ for all $i$ along any admissible input trajectory, 
Lemma~\ref{lemma:attentionalselectionsinglestep} applies to each consecutive pair $\bu(i)$ and $\bu(i + 1)$. 
Summing the resulting inequalities over $i = k_1, \ldots, k_2 - 1$ yields the claim. 
Since the sum is taken over finitely many admissible inputs, 
the right-hand side of \eqref{eqn:propattnselection1} is finite. 
\end{proof}
\medskip

\begin{proof}[Proof of Lemma~\ref{lem:npe_forward_invariance}]
Both $\xnpel(k-1)$ and $\zasl(\bu(k))$ belong to $\left[0,\zaslmax\right]$. Since $\xnpel(k)$ is a convex combination of these two quantities (cf.~\eqref{eqn:nperecursionconvex}), it also belongs to $\left[0,\zaslmax\right]$. Hence,
$\xnpel(k)\in\xnpelset$.
By induction, if $\xnpel(0)\in\xnpelset$, then $\xnpel(k)\in\xnpelset$ for all $k\geq0$.
\end{proof}
\medskip

\begin{proof}[Proof of Lemma~\ref{lem:npe_contraction_constant}]
For the constant input $\bar{\bu}$, the attentional selection weight $\zasl(\bar{\bu})$ is constant for all $k$. Hence, using \eqref{eqn:nperecursionconvex} and \eqref{eqn:npetrackingerror} we have
\begin{align*}
    \enpe_\ell(k)
    =
    \left(1-\alphapil(k)\right)\enpe_\ell(k-1).
\end{align*}
Iterating this recursion and using $\alphapil(i)\geq\underline{\alphapil}>0$ yield
\begin{align*}
    \left|\enpe_\ell(k)\right|
    \le
    \prod_{i=0}^{k-1}
    \left(1-\alphapil(i)\right)
    \left|\enpe_\ell(0)\right| \le 
    \left(1-\underline{\alphapil}\right)^k
    \left|\enpe_\ell(0)\right|.
\end{align*}
Because $0<\underline{\alphapil} < 1$, it follows that
$\left(1-\underline{\alphapil}\right)^k\to0$ as $k\to\infty$. Hence $\enpe_\ell(k)$ converges exponentially to zero.
\end{proof}

\begin{proof}[Proof of Lemma~\ref{lem:npe_update_lipschitz}]
By definition of $T_\ell$ in \eqref{eq:def_Tlk_npe},
\begin{align*}
    T_{\ell}&(\bu_1,\xi_1)-T_{\ell}\left(\bu_2,\xi_2\right)
    = \\
    &\left(1-\alphapil(k)\right)\left(\xi_1-\xi_2\right) 
    +
    \alphapil(k)\bigl(\zasl \left(\bu_1\right)-\zasl\left(\bu_2\right)\bigr).
\end{align*}
Hence, by triangle inequality and Lemma~\ref{lemma:attentionalselectionsinglestep},

\begin{align*}
    &\abs{T_{\ell} \left(\bu_1,\xi_1 \right)-T_{\ell} \left(\bu_2,\xi_2\right)}
    \le
    \\
    &\alphapil(k)\latsl\norm{\bu_1-\bu_2} +
    \left(1-\alphapil(k) \right)\abs{\xi_1-\xi_2}.
\end{align*}
Therefore, $T_\ell(\cdot,\cdot)$ is Lipschitz on $\uset\times\xnpelset$ with respect to the product norm.
\end{proof}
\medskip

\begin{proof}[Proof of Lemma~\ref{lem:uniformselfinhibition}]
For each $i\in\mathcal I$, the self-dynamics are given by
$
    \fiic\left(\bxnpe(k)\right)
    =
    -\kappa
    \exp\left(\gamma_i+ \left\langle\Lambda_i,\bxnpe(k) \right\rangle\right).
$
Since 
$\xnpeset$ is compact, we have $\norm{\bxnpe(k)} \leq \xnpebar$. 
By H\"{o}lder's inequality,
\begin{align*}
    \abs{ \left \langle\Lambda_i,\bxnpe(k) \right \rangle}
    \leq
    \norm{\Lambda_i}\norm{\bxnpe(k)}
    \leq
    \norm{\Lambda_i}\xnpebar.
\end{align*}
Hence,
$
    \gamma_i-\norm{\Lambda_i}\xnpebar
    \leq
    \gamma_i+ \left\langle\Lambda_i,\bxnpe(k) \right\rangle
    \leq
    \gamma_i+\norm{\Lambda_i}\xnpebar.
$
Exponentiating and multiplying by $-\kappa<0$ gives
\begin{align*}
    -\kappa
    \exp\left(
        \gamma_i+\norm{\Lambda_i}\xnpebar
    \right)
    \leq
    \fiic \left(\bxnpe(k) \right)
    \leq
    -\kappa
    \exp\left(
        \gamma_i-\norm{\Lambda_i}\xnpebar
    \right).
\end{align*}
Using the definitions of $\underline{\Lambda}$ and $\overline{\Lambda}$, this implies
\begin{align*}
    -\overline{\Lambda}
    \leq
    \fiic \left(\bxnpe(k) \right)
    \leq
    -\underline{\Lambda},
\end{align*}
for all $k$ and all $i\in\mathcal I$.
\end{proof}
\medskip

\begin{proof}[Proof of Proposition~\ref{prop:cognition_forward_invariance}]
Write the cognition update \eqref{eqn:dcmnonlinear} in vector form as
\begin{align}
\begin{split}
\label{eqn:cognitionmoduleupdatevectorform}
    \bx(k+1)
    &=
    \left(
    I+\Delta t\,\fc \left(\bxnpe(k)\right)
    +\Delta t\,G^{\textup{\textsc{c}}} \left(\bx(k),\bxnpe(k) \right)
    \right)\bx(k) \\
    &+
    \Delta t\,\Theta \; \bxnpe(k),
\end{split}
\end{align}
where $\fc \left(\bxnpe(k) \right)\coloneqq\mathrm{diag} \left(\fiic \left(\bxnpe(k) \right) \right)$. 
By Lemma~\ref{lem:uniformselfinhibition}, for every $i\in\mathcal I$ we have
$-\overline{\Lambda}
\le
\fiic \left(\bxnpe(k) \right)
\le
-\underline{\Lambda}$. 

\noindent
Since $\Delta t\leq 1/\overline{\Lambda}$, it follows that each diagonal entry of $I+\Delta t\,\fc\left(\bxnpe(k)\right)$ lies in 
$\left[0,\;1-\Delta t\,\underline{\Lambda}\right]$. 
Hence, for the induced matrix norm used throughout we have 
\[
\norm{I+\Delta t\,\fc \left(\bxnpe(k) \right)}
\le
1-\Delta t\,\underline{\Lambda}.
\]

\noindent
From \eqref{eqn:gimc}, we have $
    G^{\textup{\textsc{c}}} \left(\bx(k),\bxnpe(k)\right)
    =
    \Phi
    +
    \sum_{\ell\in\mathcal L}\xnpel(k)\Psi_\ell
    +
    \sum_{q\in\mathcal I}x_q(k)\Xi_q$. 
Using sub-additivity of induced norms and the definitions of $\overline{\Phi}$, $\overline{\Psi}$, and $\overline{\Xi}$, we obtain 
\[
\norm{G^{\textup{\textsc{c}}}\left(\bx(k),\bxnpe(k)\right)}
\le
\overline{\Phi}
+
\overline{\Psi}
+
\sum_{q\in\mathcal{I}} \abs{x_q(k)}  \norm{\Xi_q}.
\]
Moreover, $\sum_{q\in\mathcal I} \abs{x_q(k)} \norm{\Xi_q} 
\leq \overline{\Xi} \norm{\bx(k)}$.

\noindent
Therefore, $
    \norm{G^{\textup{\textsc{c}}} \left(\bx(k),\bxnpe(k) \right)}
    \leq
    \overline{\Phi}
    +
    \overline{\Psi}
    +
    \overline{\Xi}
    \norm{\bx(k)} 
$.
\smallskip

\noindent
Taking norms in \eqref{eqn:cognitionmoduleupdatevectorform} and using triangle inequality and sub-multiplicativity, we obtain an upper bound on the next state norm in terms of the current state norm: 
\begin{align}
\begin{split}
\label{eqn:cognitionmoduleforwardinvarianceproof}
    &\norm{\bx(k+1)}
    \leq
    \\
    &\left(
    1-\Delta t\,\underline{\Lambda}
    +
    \Delta t
    \left[
    \overline{\Phi}
    +
    \overline{\Psi}
    +
    \overline{\Xi}
    \norm{\bx(k)}
    \right]
    \right)
    \norm{\bx(k)} +
    \Delta t\,\overline{\Theta} \xnpebar.
\end{split}
\end{align}
This boils down to 
\begin{align*}
    &\norm{\bx(k+1)}
    \leq 
    \\
    &
    \norm{\bx(k)}+\Delta t
    \left(
    \overline{\Xi} \norm{\bx(k)}^2
    +
    \left(
    \overline{\Phi}+\overline{\Psi}-\underline{\Lambda}
    \right) \norm{\bx(k)}
    +
    \overline{\Theta} \xnpebar
    \right).
\end{align*}
By assumption, \eqref{eqn:cognitionmodulelemmaquadraticinequality} holds, which implies that $R$ is an upper bound for $\norm{\bx(k+1)}$ based on the above inequality.  
Thus, whenever $ \norm{\bx(k)} \le R$, we also have $\norm{\bx(k+1)}\le R$. Therefore the set
$\mathcal X_R=\{\bx:\|\bx\|\le R\}$ is forward invariant.
\end{proof}
\medskip

\begin{proof}[Proof of Corollary~\ref{cor:cognition_forward_invariance_explicit}]
Let
\begin{align*}
    q(r)\coloneqq \overline{\Xi}r^2+\sigma r+\tau.
\end{align*}
Since $r+\Delta tq(r)$ is convex on $[0,R]$, its maximum is attained at an endpoint. Hence, \eqref{eqn:cognitionmodulelemmaquadraticinequality} is equivalent to
\[
\Delta t\tau\le R,
\qquad
q(R)\le0.
\]
If $\overline{\Xi}=0$, then $q(R)=\sigma R+\tau\le0$, which for $\sigma<0$ is equivalent to $R\ge\tau/-\sigma$. If $\overline{\Xi}>0$, then $q(R)\le0$ holds precisely for $R\in[R_-,R_+]$, provided $\sigma<0$ and $\sigma^2-4\overline{\Xi}\tau\ge0$. Combining this interval with $\Delta t\tau\le R$ gives the stated conditions.
\end{proof}

\begin{proof}[Proof of Proposition~\ref{prop:cognition_iss_bound}]
Since $\mathcal X_R$ is forward invariant and $\bx(0)\in\mathcal X_R$, we have
$\norm{\bx(k)}\le R$ for all $k\ge0$. Using the estimate derived in
\eqref{eqn:cognitionmoduleforwardinvarianceproof}, and bounding the quadratic state-dependent term by
$\norm{\bx(k)}^2\le R\norm{\bx(k)}$, gives
\begin{align}
\label{eqn:iss_recursion}
    \norm{\bx(k+1)}
    \leq
    \alpha_R \norm{\bx(k)}
    +
    \beta \norm{\bxnpe(k)},
\end{align}
where
\begin{align*}
    \alpha_R
    =
    1+\Delta t
    \left(
    \overline{\Xi} R
    +
    \overline{\Phi}
    +
    \overline{\Psi}
    -
    \underline{\Lambda}
    \right),
    \qquad
    \beta
    =
    \Delta t\,\overline{\Theta}.
\end{align*}
Iterating recursion \eqref{eqn:iss_recursion} yields
\begin{align*}
    \norm{\bx(k)}
    &\leq
    \alpha_R^k\norm{\bx(0)}
    +
    \sum_{i=0}^{k-1}
    \alpha_R^{k-1-i}
    \beta\norm{\bxnpe(i)} \notag 
\end{align*}
Since $0\leq\alpha_R<1$, the geometric sum is bounded by $(1-\alpha_R)^{-1}$, and therefore,
\begin{align*}
    \norm{\bx(k)}
    \leq
    \alpha_R^k\norm{\bx(0)}
    +
    \frac{\beta}{1-\alpha_R}
    \xnpebartraj,
\end{align*}
which proves \eqref{eqn:iss_bound_optionA}.
\end{proof}
\medskip

\begin{proof}[Proof of Corollary~\ref{cor:cognition_iss}]
The claim follows directly from \eqref{eqn:iss_bound_optionA}. Since $0\leq \alpha_R<1$ under Proposition~\ref{prop:cognition_iss_bound}, the function $\beta_{\mathrm{ISS}}(s,k)=\alpha_R ^k s$ is of class $\mathcal{KL}$ and the function $\gammaiss(s)=\dfrac{\beta}{1-\alpha_R}s$ is of class $\mathcal{K}$. 
Hence the cognition dynamics are \gls{iss} on $\mathcal{X}_R$. 
\end{proof}

\begin{proof}[Proof of Lemma~\ref{lem:intention_bibo_lip}]
Since $\mathcal{X}_R$ is compact, its projected goal and belief domains are compact as well. 
By Assumption~\ref{ass:intentionformation}, the functions $\fisi(\cdot)$ and $\gisi(\cdot)$ are locally Lipschitz, hence Lipschitz on these compact sets. 
Because the intention mapping $\zii(\cdot, \cdot)$ is formed by combining these bounded Lipschitz components 
through addition and multiplication (cf.~\eqref{eqn:intentionselection}), it is Lipschitz continuous. 
\end{proof}

\begin{proof}[Proof of Lemma~\ref{lem:action_bibo_lip}]
The admissible intention domain is compact, because it is inherited from the forward-invariant cognition set. 
By Assumption~\ref{ass:actionselection}, the component mappings entering the action selection rule are locally Lipschitz, 
hence Lipschitz on this compact domain. 
Since the action selection mapping is obtained by composing these bounded Lipschitz components with a smooth bounded gate, 
it is itself Lipschitz continuous on the admissible domain.
\end{proof}

\section{Simulation parameters}
\phantomsection
\addcontentsline{toc}{section}{Simulation parameters}

\setcounter{equation}{0}
\renewcommand{\theequation}{B\arabic{equation}}

\setcounter{table}{0}
\renewcommand{\thetable}{B\arabic{table}}

This appendix reports the diagnostic metrics, simulation parameters, and implementation details used in the numerical sensitivity analyses of Section~\ref{sec:parametersensitivityanalyses} and the closed-loop rehabilitation showcase of Section~\ref{sec:rehabilitation_case_study}.

\subsection{Closed-loop rollout formulation}
\label{appendix:closedlooprollout}

The optimal candidate sequence is obtained by exhaustive enumeration over the admissible difficulty sequences, 
subject to the model rollout constraints used to evaluate $\widehat{\mathcal J}_k$. The prediction constraints use the shifted form of the cognition update, where $\bxhat(k'\mid k)$ is obtained from $\bxhat(k'-1\mid k)$ and $\bxnpehat(k'\mid k)$:

\begin{align*}
    &{\boldsymbol{d}_{k:k+n^{\mathrm P}-1\mid k}^{\textup{suggest}}}^{\star}
    =
    \operatorname*{arg\,min}_{
    \boldsymbol{d}_{k:k+n^{\mathrm P}-1\mid k}^{\textup{suggest}}
    \in \{0,\ldots,\overline d\}^{n^{\mathrm P}}
    }
    \widehat{\mathcal J}_k, \nonumber\\
    & \text{subject to:} \nonumber \\
    & \bxnpehat(k^{\prime} \mid k) =
    \zhatperception\left(
    \urobot(k^{\prime} \mid k),
    \bxnpehat(k^{\prime}-1 \mid k)
    \right), \nonumber \\
    & \bxhat(k^{\prime} \mid k) =
    \zhatcognition \left(
    \bxnpehat(k^{\prime} \mid k),
    \bxhat(k^{\prime}-1 \mid k)
    \right), \nonumber \\
    & \bzahat(k^{\prime} \mid k) =
    \zhatdecision \left(
    \bxhat(k^{\prime} \mid k)
    \right), \nonumber \\
    & \eqref{eqn:urobot},\ \eqref{eqn:dhatperform}. \nonumber 
\end{align*}

The outcome over $N$ interactions is reported as the realized cumulative cost

\begin{align*}
    \mathcal J^{\mathrm{realized}}
    =
    \sum_{k=0}^{N-1}
    J^{\mathrm{realized}}(k),
\end{align*}
where $J^{\mathrm{realized}}(k)$ is obtained from the same stage cost in \eqref{eqn:costfunction} by replacing predicted quantities with realized quantities:

\begin{align*}
    J^{\mathrm{realized}}(k)
    \coloneqq
    J
    \left(
    \bx(k+1),
    \ypatient(k),
    \dsuggest(k),
    \dsuggest(k-1)
    \right).
\end{align*}

\subsection{Closed-loop rehabilitation parameters}
\label{appendix:rehabilitationparams}

\begin{table}[!htbp]
    \centering
    \caption{Noise and perturbation settings used to generate the simulated patient trajectory, partial comfort feedback, and robot-model mismatch in the closed-loop rehabilitation case study.}
    \label{tab:rehabilitation_noise_perturbation_settings}
    \resizebox{\linewidth}{!}{%
    \begin{tabular}{lll}
        \toprule
        Quantity & Equation & Value \\
        \midrule
        Robot-model parameter perturbation
        &
        $\displaystyle
        \delta^{\mathrm{robot}}
        \sim
        \mathcal N
        \left(
        \delta^{\mathrm{true}},
        \sigma_{\mathrm{par}}^2
        \left|\delta^{\mathrm{true}}\right|^2
        \right)$
        &
        $\sigma_{\mathrm{par}}=0.10$ \\
        \addlinespace[0.35em]

        True patient process noise
        &
        $\displaystyle
        \bx(k+1)
        \leftarrow
        \bx(k+1)+\boldsymbol{\omega}(k),
        \quad
        \boldsymbol{\omega}(k)
        \sim
        \mathcal N(0,\sigma_{\omega}^{2}I)$
        &
        $\sigma_{\omega}=0.008$ \\
        \addlinespace[0.35em]

        Comfort feedback
        &
        $\displaystyle
        \xi^{\mathrm B}_{\mathrm{comfort}}(k)
        \sim
        \mathcal N
        \left(
        x^{\mathrm B}_{\mathrm{comfort}}(k),
        \sigma_{\mathrm{comfort}}^2
        \right)$
        &
        $\sigma_{\mathrm{comfort}}=0.03$ \\
        \bottomrule
    \end{tabular}%
    }
\end{table}

\begin{table}[!htbp]
    \centering
    \caption{Parameter values used in the closed-loop rehabilitation case study. The state ordering used for all cognition matrices is $(\mathrm{perform},\mathrm{avoid},\mathrm{comfort},\mathrm{capability},\mathrm{threat},\mathrm{fatigue})$. For $\Psi_\ell$, an entry listed as $(i,j)$ denotes the coefficient multiplying state component $x_j$ in the update of state component $x_i$. For $\Xi$, an entry listed as $\Xi_q(i,j)$ denotes the coefficient in the slice associated with state component $x_q$; this coefficient multiplies $x_j$ in the contribution to the update of $x_i$ (cf. \eqref{eqn:gimc}). All unlisted entries of $\Psi_\ell$ and $\Xi$ are zero.}
    \label{tab:rehabilitation_case_study_parameters}
    \resizebox{\linewidth}{!}{%
    \begin{tabular}{lll}
        \toprule
        Module/layer & Parameter & Value \\
        \midrule
        \multirow{6}{*}{Setup}
        & $N$, $n^{\mathrm{P}}$ & $40,\ 3$ \\
        & seed & $42$ \\
        & $\overline{d}$ & $10.0$ \\
        & $\dtarget(k)$ & $(0,0,2,2,2,2,2,6)$, repeated every $8$ steps \\
        & MC runs & $50$ \\
        & $c$ & $0.40$ \\
        \midrule
        \multirow{6}{*}{Attentional selection}
        & $\card{\mathcal L}$ & $2$ \\
        & $\faslmax$ & $0.75$ for all $\ell\in\mathcal L$ \\
        & $\eta_1,\eta_2$ & $1.80,\ 0.25$ for all $\ell\in\mathcal L$ \\
        & $\betaasl$ & $0.65$ \\
        & $p_m$ & $2$ for all $m\in\mathcal L$ \\
        & $\gamma_{\ell m}^{\mathrm{ATS}}$ & $0.70$ for all $\ell,m\in\mathcal L$ \\
        \midrule
        \multirow{1}{*}{Predictive inference}
        & $\phipil,\thetapil,\chipil$ & $1.0,\ 1.0,\ 1.0$ \\
                \midrule
        \multirow{18}{*}{Cognition}
        & $\card{\mathcal I}$ & $6$ \\
        & state & $\bx = \left[x_{\mathrm{perform}}^{\mathrm G}, x_{\mathrm{avoid}}^{\mathrm G}, x_{\mathrm{comfort}}^{\mathrm B}, x_{\mathrm{capability}}^{\mathrm B}, x_{\mathrm{threat}}^{\mathrm B}, x_{\mathrm{fatigue}}^{\mathrm B}\right]^\top$ \\
        & $\Delta t,\kappa$ & $0.08,\ 0.55$ \\
        & $\boldsymbol{\gamma}$ &
        $\begin{bmatrix}-0.50 & 0.25 & -1.20 & -1.20 & 0.35 & 0.50\end{bmatrix}^{\top}$ \\
        & $\Lambda$ &
        $\begin{bmatrix}
        0.05 & 0.10\\
        0.16 & 0.12\\
        0.10 & 0.25\\
        0.00 & 0.15\\
        0.25 & 0.35\\
        0.00 & 0.35
        \end{bmatrix}$ \\
        & $\Phi$ &
        $\begin{bmatrix}
        0 & 0 & 0.30 & 0.30 & -0.30 & -0.25\\
        0 & 0 & -0.20 & -0.20 & 0.40 & 0.45\\
        0.10 & 0.45 & 0 & 0.25 & -0.08 & -0.25\\
        0.03 & -0.08 & 0.25 & 0 & -0.06 & -0.25\\
        0 & 0 & -0.15 & 0 & 0 & 0\\
        0 & 0 & 0 & 0 & 0 & 0
        \end{bmatrix}$ \\
        & $\Theta$ &
        $\begin{bmatrix}
        0.04 & 0.02\\
        0.28 & -0.30\\
        -0.03 & -0.70\\
        -0.06 & 0.35\\
        0.45 & -0.45\\
        0.00 & 2.20
        \end{bmatrix}$ \\
        & $\Psi_1(\mathrm{avoid},\mathrm{threat})$ & $0.60$ \\
        & $\Psi_1(\mathrm{avoid},\mathrm{fatigue})$ & $0.65$ \\
        & $\Psi_1(\mathrm{perform},\mathrm{comfort})$ & $-0.15$ \\
        & $\Psi_1(\mathrm{perform},\mathrm{capability})$ & $-0.15$ \\
        & $\Psi_2(\mathrm{comfort},\mathrm{fatigue})$ & $-0.45$ \\
        & $\Psi_2(\mathrm{capability},\mathrm{fatigue})$ & $-0.35$ \\
        & $\Psi_2(\mathrm{perform},\mathrm{capability})$ & $0.10$ \\
        & $\Xi_{\mathrm{fatigue}}(\mathrm{avoid},\mathrm{threat})$ & $0.45$ \\
        & $\Xi_{\mathrm{threat}}(\mathrm{avoid},\mathrm{fatigue})$ & $0.35$ \\
        & $\Xi_{\mathrm{comfort}}(\mathrm{perform},\mathrm{capability})$ & $0.25$ \\
        & $\Xi_{\mathrm{capability}}(\mathrm{perform},\mathrm{comfort})$ & $0.25$ \\
        \midrule
        \multirow{3}{*}{Initial conditions}
        & $\bx(0)$ &
        $\begin{bmatrix}0.45 & 0.25 & 0.35 & 0.30 & 0.20 & 0.15\end{bmatrix}^{\top}$ \\
        & $\hat{\bx}(0)$ & $\bx(0)+\mathcal N(0,0.08^2 I)$, $\xhatb{comfort}(0) = x_{\mathrm{comfort}}^{\mathrm B}(0)$ \\
        & $\bxnpe(0), \hat{\bx}^{\mathrm{LPE}}(0)$ & $\mathbf{0}, \mathbf{0}$ \\
        \midrule
        \multirow{9}{*}{Intention formation}
        & $\phi^{\textup{\textsc{if}}}_{\mathrm{perform}}, \phi^{\textup{\textsc{if}}}_{\mathrm{avoid}}$ & $0.045,\ 0.035$ \\
        & $\eta_8,\eta_9,\eta_{10}, \eta_{11}$ 
        & $1.8,\ 0.35,\ 0.0,\ 0.0$ \\ & $\fisimax,\gisimax,\lambda_i^{\mathrm{IF}}$ 
        & $1.0,\ 1.0,\ 5.0$ for all $i \in \mathcal{G}$\\
        & $\mathcal B_{\mathrm{perform}}^+,\mathcal B_{\mathrm{perform}}^-$
        & $\{\mathrm{comfort},\mathrm{capability}\},\ 
           \{\mathrm{threat},\mathrm{fatigue}\}$ \\
        & $\mathcal B_{\mathrm{avoid}}^+,\mathcal B_{\mathrm{avoid}}^-$
        & $\{\mathrm{threat},\mathrm{fatigue}\},\ 
           \{\mathrm{comfort},\mathrm{capability}\}$ \\
        & $w_{\mathrm{perform},j}^{+,\mathrm{base}}$
        & $w_{\mathrm{perform},\mathrm{comfort}}^{+}=1.2,\ 
           w_{\mathrm{perform},\mathrm{capability}}^{+}=1.2$ \\
        & $w_{\mathrm{perform},j}^{-,\mathrm{base}}$
        & $w_{\mathrm{perform},\mathrm{threat}}^{-}=1.1,\ 
           w_{\mathrm{perform},\mathrm{fatigue}}^{-}=0.8$ \\
        & $w_{\mathrm{avoid},j}^{+,\mathrm{base}}$
        & $w_{\mathrm{avoid},\mathrm{threat}}^{+}=1.3,\ 
           w_{\mathrm{avoid},\mathrm{fatigue}}^{+}=1.0$ \\
        & $w_{\mathrm{avoid},j}^{-,\mathrm{base}}$
        & $w_{\mathrm{avoid},\mathrm{comfort}}^{-}=0.9,\ 
           w_{\mathrm{avoid},\mathrm{capability}}^{-}=0.9$ \\
        \midrule
        \multirow{4}{*}{Action selection}
        & $\phi^{\textup{\textsc{as}}}_{\mathrm{perform}}, \phi^{\textup{\textsc{as}}}_{\mathrm{avoid}}$ & $0.25,\ 0.27$ \\
        & $\lambda_i^{\mathrm{AS}}$ & $7.0$ for all $i\in\mathcal{G}$ \\
        & $\gamma_{ij}^{\mathrm{AS}}$ & $0.65$ for $i\neq j$, $0$ for $i=j$ \\
        & $\eta_{12}$ & $1.0$ \\
        \midrule
        \multirow{8}{*}{Controller cost}
        & $\lambda_1$ & $2.0$ \\
        & $\lambda_2$ & $1.5$ \\
        & $\lambda_3$ & $20.0$ \\
        & $\lambda_4$ & $1.0$ \\
        & $\lambda_5$ & $3.0$ \\
        & $\lambda_6$ & $1.0$ \\
        & $\lambda_7$ & $0.10$ \\
        & $\xbarbcomfort$ & $0.0$ \\
        \bottomrule
    \end{tabular}%
    }
\end{table}

\newpage

\subsection{Baseline sensitivity-analysis parameters}
\label{appendix:baselinesensitivityparams}

\begin{table}[H]
    \centering
    \caption{Baseline parameter values used in the numerical simulations. One-at-a-time parameter sweeps vary only the parameter under study and hold all other parameters fixed at the values shown here. Quantities written as $\operatorname{Unif}(a,b)$ are sampled independently from a uniform distribution on $[a,b]$. The parameters $\sigma^{\mathrm{SA}}$, $\sigma^{\mathrm{DA}}$, and $\sigma^{\mathrm{SV}}$ denote the standard deviations of the Gaussian perturbations in the corresponding input families. For the \gls{sv} family, $\mathrm{corr}^{\mathrm{SV}}$ denotes the channel-correlation coefficient used to construct $\Sigma^{\mathrm{SV}}$, with covariance entries proportional to $\left(\mathrm{corr}^{\mathrm{SV}}\right)^{|\ell-m|}$. For the \gls{dv} family, $\sigma^{\mathrm{DV}}$ denotes the Gaussian component of the composite perturbation $\epsilon_{\ell}^{\mathrm{DV}}(k)$. The parameters $\lambda_i^{\mathrm{IF}}$, $\lambda_{i}^{\mathrm{AS,G}}$, $\lambda_{i}^{\mathrm{AS,H}}$, and $\lambda_{i}^{\mathrm{AS,S}}$ denote the steepness parameters of the corresponding sigmoid nonlinearities used in the intention formation and action selection modules.}
    \label{tab:baseline_simulation_parameters}
    \resizebox{\linewidth}{!}{%
    \begin{tabular}{lll}
        \toprule
        Module/layer & Parameter & Baseline value \\
        \midrule

        \multirow{17}{*}{Input generation}
        & $T,\card{\mathcal L}$ & $300,\ 5$ \\
        & input seed & $42$ \\
        & $\overline u_\ell, a$ & $1.0$, $4.0$ \\
        & $\mu_\ell^{\mathrm{SA}}$ & $\operatorname{Unif}(0.10,0.70)$ \\
        & $\mu_{1,\ell}^{\mathrm{DA}}$ & $\operatorname{Unif}(0.10,0.50)$ \\
        & $\mu_\ell^{\mathrm{SV}}$ & $\operatorname{Unif}(0.10,0.75)$ \\
        & $\mu_\ell^{\mathrm{DV}}$ & $\operatorname{Unif}(0.10,0.65)$ \\
        & $A_\ell^{\mathrm{DA}}, A_\ell^{\mathrm{DV}}$ 
        & $\operatorname{Unif}(0.05,0.40)$, $\operatorname{Unif}(0.05,0.35)$ \\
        & $f_{1,\ell}^{\mathrm{DA}}, f_2^{\mathrm{DA}}$ 
        & $\operatorname{Unif}(1/120,1/20)$ \\
        & $f_\ell^{\mathrm{DV}}$ 
        & $\operatorname{Unif}(1/180,1/45)$ \\
        & $\varphi_{1,\ell}^{\mathrm{DA}}, \varphi_2^{\mathrm{DA}}, \varphi_\ell^{\mathrm{DV}}$ 
        & $\operatorname{Unif}(0,2\pi)$ \\
        & $\rho^{\mathrm{SA}}, \rho^{\mathrm{SV}}, \rho^{\mathrm{DV}}$ 
        & $0.98,\ 0.97,\ 0.94$ \\
        & $\sigma^{\mathrm{SA}}, \sigma^{\mathrm{DA}}, \sigma^{\mathrm{SV}}, \sigma^{\mathrm{DV}}$ 
        & $0.06,\ 0.08,\ 0.04,\ 0.10$ \\
        & $\mu_2^{\mathrm{DA}}$ & $0.25$ \\
        & $\mathrm{corr}^{\mathrm{SV}}$ & $0.50$ \\
        & DV occlusion probability & $0.025$ \\
        & DV occlusion amplitude & $\operatorname{Unif}(-0.35,0.35)$ \\

        \midrule

        \multirow{5}{*}{Attentional selection}
        & $\faslmax$ & $0.75$ \\
        & $\eta_1, \eta_2, \eta_3$ & $1.80, 0.60, 1.0$ \\
        & $\betaasl$ & $0.65$ \\
        & $p_m$ & $2$ \\
        & $\gamma_{\ell m}^{\mathrm{ATS,base}}$ & $0.70$ \\

        \midrule

        \multirow{3}{*}{Predictive inference}
        & $\eta_4, \thetapil$ & $1.00$ \\
        & $T^{\mathrm f}$ & $25$ \\
        & $\bxnpe(0)$ & $\mathbf 0$ \\

        \midrule

        \multirow{10}{*}{Cognition}
        & $\card{\mathcal I}$ & $4$ \\
        & $\Delta t$ & $0.08$ \\
        & $\kappa$ & $0.90$ \\
        & $\eta_5, \eta_6, \eta_7$ & $1.0$ \\
        & $\boldsymbol{\gamma}$ &
        $\begin{bmatrix}0.15 & 0.05 & 0.10 & 0.00\end{bmatrix}^{\top}$ \\
        & $\Lambda$ &
        $\begin{bmatrix}
        0.20 & -0.10 & 0.00 & 0.05 & 0.10 \\
        0.00 &  0.15 & 0.10 & 0.00 & 0.05 \\
       -0.10 &  0.00 & 0.20 & 0.10 & 0.00 \\
        0.05 &  0.10 & 0.00 & 0.20 & -0.05
        \end{bmatrix}$ \\
        & $\Phi^{\mathrm{base}}$ &
        $\begin{bmatrix}
        0.00 &  0.18 & -0.08 &  0.05 \\
        0.10 &  0.00 &  0.12 & -0.04 \\
       -0.06 &  0.14 &  0.00 &  0.10 \\
        0.08 & -0.05 &  0.16 &  0.00
        \end{bmatrix}$ \\
        & $\Theta^{\mathrm{base}}$ &
        $\begin{bmatrix}
        0.25 & 0.05 & 0.00 & 0.10 & 0.00 \\
        0.00 & 0.20 & 0.10 & 0.00 & 0.05 \\
        0.05 & 0.00 & 0.25 & 0.10 & 0.00 \\
        0.10 & 0.05 & 0.00 & 0.20 & 0.05
        \end{bmatrix}$ \\
        & $\{\Psi_\ell^{\mathrm{base}}\}_{\ell=1}^{5}$ &
        zero diagonal, $\mathcal{N}(0,0.05^2)$, seed $789$ \\
        & $T^{\mathrm f}$ & $25$ \\

        \midrule

        \multirow{5}{*}{Intention formation}
        & $\card{\mathcal G}, \card{\mathcal{B}}$, goal indices, belief indices 
        & $2, 2$, $\{3,4\}$, $\{1,2\}$ in one-indexed notation \\
        & $\phi_i^{\mathrm{IF}}, \fisimax, \gisimax$ 
        & $0.05,\ 0.85,\ 1.0$ \\
        & $\eta_8,\eta_9,\eta_{10},\eta_{11}$ 
        & $1.80,\ 0.60,\ 0.0,\ 1.0$ \\
        & $\lambda_i^{\mathrm{IF}}$ 
        & $2.5$ \\
        & $w^{+,\mathrm{base}}, w^{-,\mathrm{base}}$ &
        $\begin{bmatrix}1.2 & 0.0 \\ 0.0 & 1.2\end{bmatrix}, \begin{bmatrix}0.0 & 0.8 \\ 0.8 & 0.0\end{bmatrix}$ \\

        \midrule

        \multirow{5}{*}{Action selection}
        & $\gasimax, \hasimax$ & $1.0$ \\
        & $\lambda_{i}^{\mathrm{AS,G}}, \lambda_{i}^{\mathrm{AS,H}}, \lambda_{i}^{\mathrm{AS,S}}$ 
        & $7.0,\ 7.0,\ 8.0$ \\
        & $\phi_i^{\mathrm{AS}}$ & $0.35$ \\
        & $\eta_{12}$ & $1.0$ \\
        & $\gamma_{ij}^{\mathrm{AS,base}}$ & $1$ for $i\neq j$, $0$ for $i=j$ \\

        \midrule

        \multirow{2}{*}{Monte Carlo}
        & MC runs & $20$ \\
        & seed & $1000$ \\

        \bottomrule
    \end{tabular}%
    }
\end{table}

\FloatBarrier
\clearpage
\onecolumn
 
\subsection{Sensitivity metrics}
\label{appendix:sensitivitymetrics}

\begin{table}[H]
    \centering
    \caption{Sensitivity and diagnostic metrics used in the numerical analysis. Here, $T$ denotes the simulation horizon, $N$ the number of sampled input or trajectory pairs, and $T^{\mathrm f}\leq T$ the length of the terminal averaging window. The mapping $\boldsymbol{\phi}(\cdot)$ denotes a generic static \gls{io} map, and $\boldsymbol{t}(k)$ denotes a generic trajectory.}
    \label{tab:sensitivity_metrics}
    \resizebox{\textwidth}{!}{%
    \begin{tabular}{llll}
    \toprule
    Metric & Definition & Used for & Interpretation \\
    \midrule
    
    Sampled \gls{io} gain
    &
    $\displaystyle
    \max_{j=1,\dots,N}
    \frac{
    \norm{\boldsymbol{\phi}(\bu_1^{(j)})-\boldsymbol{\phi}(\bu_2^{(j)})}
    }{
    \norm{\bu_1^{(j)}-\bu_2^{(j)}}
    }$
    &
    Static mappings
    &
    Sampled worst-case sensitivity; not an analytic Lipschitz constant.
    \\[2.0ex]
    
    Mean step increment
    &
    $\displaystyle
    \frac{1}{T-1}
    \sum_{k=0}^{T-2}
    \norm{\boldsymbol t(k+1)-\boldsymbol t(k)}
    $
    &
    All trajectories
    &
    Temporal responsiveness or smoothness.
    \\[2.0ex]
    
    Winner margin
    &
    $\displaystyle
    z^{\mathrm A}_{(1)}(k)-z^{\mathrm A}_{(2)}(k)
    $
    &
    Action selection
    &
    Decisiveness of the selected action.
    \\[2.0ex]
    
    Switch count
    &
    $\displaystyle
    \left|
    \left\{
    k:
    \arg\max_{i\in\mathcal G}\zai(k+1)
    \neq
    \arg\max_{i\in\mathcal G}\zai(k)
    \right\}
    \right|
    $
    &
    Action selection
    &
    Temporal instability of the selected action.
    \\[2.0ex]
    
    Mean \gls{npe} tracking error
    &
    $\displaystyle
    \frac{1}{T}
    \sum_{k=0}^{T-1}
    \abs{\enpe_{\ell}(k)}
    $
    &
    Predictive inference
    &
    Average mismatch between \gls{npe} state and attentional selection target.
    \\[2.0ex]
    
    Terminal \gls{npe} tracking error
    &
    $\displaystyle
    \frac{1}{T^{\mathrm{f}}}
    \sum_{k=T-T^{\mathrm{f}}}^{T-1}
    \abs{\enpe_{\ell}(k)}
    $
    &
    Predictive inference
    &
    Late-time residual tracking error.
    \\[2.0ex]
    
    Maximum state norm
    &
    $\displaystyle
    \max_{k=0,\dots,T-1}
    \norm{\bx(k)}
    $
    &
    Dynamic mappings
    &
    Practical excitation magnitude.
    \\[2.0ex]
    
    Sampled \gls{is} gain
    &
    $\displaystyle
    \max_{j=1,\dots,N}
    \frac{
    \sup_{0\leq k < T}\norm{\bx_1^{(j)}(k)-\bx_2^{(j)}(k)}
    }{
    \sup_{0 \leq k < T}\norm{\bu_1^{(j)}(k)-\bu_2^{(j)}(k)}
    }
    $
    &
    Dynamic mappings
    &
    Sampled propagation of input perturbations into state trajectories.
    \\
    Terminal steady-state estimate
    &
    $\displaystyle
    \hat{\bx}^{\star}
    =
    \frac{1}{T^{\mathrm f}}
    \sum_{k=T-T^{\mathrm f}}^{T-1}
    \bx(k)
    $
    &
    Constant-input cognition scans
    &
    Estimated steady state under fixed \gls{npe} state.
    \\[2.0ex]
    \bottomrule
    \end{tabular}%
    }
\end{table}

\end{document}